\newif\iffull\fullfalse
\newif\ifdraft\draftfalse
\theoremstyle{plain}
\newtheorem{theorem}[]{Theorem}
\newtheorem{lemma}[theorem]{Lemma}
\newtheorem{definition}[theorem]{Definition}
\newcommand{\code}[1]{\ensuremath{\kw{#1}}}
\newcommand{\sfont}[1]{\ensuremath{\mathsf{#1}}}
\newcommand{\tygt}{\geqslant}
\newcommand{\mbind}{\ensuremath{\mathsf{bind}}}
\newcommand{\munit}{\ensuremath{\mathsf{unit}}}
\newcommand{\rew}{\leadsto}
\newcommand{\slam}[2]{\lambda #1. #2}
\newcommand{\sif}[3]{\sfont{if}~#1~\sfont{then}~#2~\sfont{else}~#3}
\newcommand{\sapp}[2]{#1\;#2}
\newcommand{\slet}[3]{\sfont{let}\;#1\!=\!#2\;\sfont{in}\;{#3}}
\newcommand{\sletrec}[3]{\sfont{let rec}\;#1\!=\!#2\;\sfont{in}\;{#3}}
\newcommand{\tgtfont}[1]{\sfont{#1}}
\newcommand{\tgte}{\tgtfont{e}}
\newcommand{\eapp}[2]{#1\;#2}
\newcommand{\kstar}{\star}
\newcommand{\kmon}{\bullet}
\newcommand{\morph}[2]{#1 \hookrightarrow #2}
\newcommand{\mId}{\tfont{Bot}}
\newcommand{\mIST}{\ensuremath{\mbox{\textit{IST}}}}
\newcommand{\Hi}{\tfont{H}}
\newcommand{\Lo}{\tfont{L}}
\newcommand{\gm}[1][M]{\text{\sfont{#1}}}
\newcommand{\ty}{\tau}
\newcommand{\tscheme}[3]{\forall #1.\, #2 => #3}
\newcommand{\tfont}[1]{\mathit{#1}}
\newcommand{\tfun}[2]{#1 \rightarrow #2}
\newcommand{\tapp}[2]{#1\;#2}
\newcommand{\Gen}[2]{\ensuremath{\mbox{\textit{Gen}}(#1, #2)}}
\newcommand{\ftv}[1]{\ensuremath{\mbox{\textsf{ftv}}(#1)}}
\newcommand{\tint}{\tfont{int}}
\newcommand{\prefix}[2]{#1\,|\,#2 |-}
\newcommand{\atforbindingcolon}{\mathcode`@="003A}
\newcommand{\aset}[1]{\{#1\}}
\newcommand{\dom}[1]{\mbox{\textit{dom}}(#1)}
\newcommand{\nquad}{\ensuremath{\!\!\!\!}}
\newcommand{\nqquad}{\nquad\nquad}
\newcommand{\ratio}{.35}
\newcommand\mycase[1]{\noindent\textbf{Case #1}:}
\newcommand\myscase[1]{$\noindent$\textbf{Sub-case #1}:}
\newcommand\mconst[1][M]{\mathsf{\footnotesize{#1}}}
\newcommand\mvar{\rho}
\newcommand\tbind[2]{{#1} \rhd {#2}} 
\newcommand\bind[2]{\tbind{\mathsf{#1}}{\mathsf{#2}}}  
\newcommand{\hbind}{>\!>\!>\!\!\mbox{=}\strut}
\newcommand{\maybecolor}[1]{}
\def\lstCaml{\lstset{
  basicstyle=\footnotesize,
  morekeywords=[1]{type,val,fun,let,in,ref,of,try,if,then,else,match,with,do,open,module,rec,end,sig,private,when,while},
  morekeywords=[2]{public,interface,internal,extern,prop},
  morekeywords=[3]{},
  morekeywords=[4]{}, 
  morestring=[b]",
  sensitive=true,%
  numbersep=4pt,
  columns=[l]fullflexible,
  texcl=true,
  mathescape=true,
  identifierstyle={\rmfamily\sffamily\maybecolor{dkgreen}},
  keywordstyle=[1]{\small\bfseries\maybecolor{dkblue}},
  keywordstyle=[2]{\bfseries\maybecolor{dkblue}},
  keywordstyle=[3]{\small\bfseries},
  keywordstyle=[4]{\rmfamily\itshape},
  rangeprefix=(*---\ ,
  includerangemarker=false,
  stringstyle=\ttfamily,
  showspaces=false,
  moredelim=[is][\scriptsize\rmfamily\sffamily\maybecolor{dkblue}]{~}{~},
  comment=[l]{//},
  morecomment=[l]{\#},
  morecomment=[n]{(*}{*)},
  commentstyle={\itshape\color{gray}},
  literate={->}{$\rightarrow\,$}{1} 
           {''a}{{\small$\alpha$}}{1} 
           {''b}{{\small$\beta$}}{1} 
           {'a}{{\small$\alpha\, $}}{1} 
	   {'b}{{\small$\beta\, $}}{1} 
	   {'c}{{\small$\gamma\, $}}{1}
           {kstar}{{$\kstar\, $}}{1}
           {kmon}{{$\kmon\, $}}{1}
           {kprop}{{$\kprop\, $}}{1}
           {kafn}{{$\kafn\, $}}{1}
	   {lam}{$\lambda$}{1}
	   {Lam}{$\Lambda$}{1}
	   {exists}{$\exists$}{1}
	   {forall}{$\forall$}{1}
	   {Phi}{$\Phi\, $}{1}
	   {Psi}{$\Psi\, $}{1}
           {alpha}{$\alpha$}{1}
           {beta}{$\beta$}{1}
           {gamma}{$\gamma$}{1}
	   {phi}{{$\raisebox{.75pt}{$\phi$}$}}{1}
	   {psi}{{$\raisebox{.75pt}{$\psi$}$}}{1}
	   {bot}{$\bot$}{1}
	   {extlang}{$\lambda^{\mbox{\sc\tiny BX}}$}{1}
	   {iff}{$\Leftrightarrow\, $}{1}
     {>>>=}{$\hbind$}{2}
     {[[}{$\llbracket$}{2}
     {]]}{$\rrbracket$}{2}
	   {=>}{$\Rightarrow \ $}{2},
  breaklines=false}}
\let\ls\lstinline
\newcommand{\kw}[1]{\mbox{\normalfont\lstinline{#1}}}
\definecolor{gray}{rgb}{0.3,0.3,0.3}
\definecolor{purple}{rgb}{0.63,0.13,0.94}
\definecolor{orange}{rgb}{1,0.5,0}
\definecolor{dkgreen}{rgb}{0,0.5,0}
\newcommand\gmb[1]{\ifdraft{\color{purple}{Gavin says: #1}}\fi}
\newcommand\mwh[1]{\ifdraft{\color{blue}{Mike says: #1}}\fi}
\newcommand{\eq}{\!\!=\!\!}
\renewcommand{\mId}{\ensuremath{\sfont{Id}}}
\newcommand\langext{\ensuremath{\lambda{\mbox{\textsc{\underline{pm}}}}}}
\newcommand\lang{\langext}
\title{Polymonadic Programming\\(Extended version)}
\title{Polymonadic Programming}
\author{Michael Hicks$^1$ ~~~ Gavin Bierman$^3$ ~~~ Nataliya Guts$^1$ ~~~ Daan Leijen$^2$ ~~~ Nikhil Swamy$^2$
      \institute{$~^1$University of Maryland, College Park ~~~~ $~^2$Microsoft Research ~~~~ $~^3$Oracle Labs}
      \email{mwh@cs.umd.edu, gavin.bierman@oracle.com, \{daan,nswamy\}@microsoft.com, nyuguts@gmail.com}}
\newcommand{\tsend}[2]{\tfont{send}\,#1\,#2}
\newcommand{\trecv}[2]{\tfont{recv}\,#1\,#2}
\newcommand{\tdone}{\tfont{end}}
\newcommand\Binds{\ensuremath{P}}
\newcommand\hide[1]{\ensuremath{\mbox{\textit{Hide}}(#1)}}
\newcommand\simp[1][\bar\mvar]{\xrightarrow{\mbox{\tiny{simplify}}(#1)}}
\newcommand\tup{{\rotatebox[origin=c]{90}{$\rhd$}}}
\newcommand{\cat}[1]{\mathbb{#1}}
\newcommand\mconstrs{\mathcal{M}}
\newcommand{\xybind}[3]
{\xymatrix@R=0em@C=0em{
  \text{$#1$} \ar@{-}[dr] & & \text{$#2$}\ar@{-}[dl]\\
  & \ar[d]& \\
  & \text{$#3$} 
}}
\newcommand\fomega{F$\omega$}
\newcommand\flowsTo[2]{\textit{flowsTo}_{#2}{~#1}}
\newcommand\flowsFrom[2]{\textit{flowsFrom}_{#2}{~#1}}
\newcommand{\mymap}{\kw{map}}
\newcommand{\myunit}{\kw{unit}}
\newcommand{\mybind}{\kw{bind}}
\newcommand{\mylift}{\kw{lift}}
\begin{document}
\setlength{\textfloatsep}{0.75\textfloatsep}
\setlength{\floatsep}{\floatsep}
\linespread{0.9}

\pagestyle{plain}
\maketitle
\thispagestyle{plain}
\begin{abstract} 
  Monads are a popular tool for the working functional programmer to
  structure effectful computations. This paper presents
  \emph{polymonads}, a generalization of monads. Polymonads give the
  familiar monadic bind the more general type 
  \ls!forall $a,b$. L $a$ -> ($a$ -> M $b$) -> N $b$!, to
  compose computations with three different kinds of effects, rather
  than just one. Polymonads subsume monads and parameterized monads,
  and can express other constructions, including precise
  type-and-effect systems and information flow tracking; more
  generally, polymonads correspond to Tate's \emph{productoid}
  semantic model. We show how to equip a core language (called \lang)
  with syntactic support for programming with polymonads. Type
  inference and elaboration in \lang{} allows programmers to write
  polymonadic code directly in an ML-like syntax---our algorithms
  compute principal types and produce elaborated programs wherein
  the binds appear explicitly. Furthermore, we prove that the
  elaboration is \emph{coherent}: no matter which (type-correct) binds
  are chosen, the elaborated program's semantics will be the
  same. Pleasingly, the inferred types are easy to read: the polymonad
  laws justify (sometimes dramatic) simplifications, but with no
  effect on a type's generality.
\end{abstract}



\section{Introduction}
\label{sec:intro}





Since the time that Moggi first connected them to effectful
computation~\cite{moggi89computational}, \emph{monads} have proven to be a
surprisingly versatile computational structure.  Perhaps best known as
the foundation of Haskell's support for state, I/O, and other effects,
monads have also been used to structure APIs for libraries that
implement a wide range of programming tasks, including
parsers~\cite{hutton98monadic}, probabilistic
computations~\cite{ramsey02stochastic}, and functional
reactivity~\cite{elliot97functional,cooper06father}.


Monads (and morphisms between them) are not a panacea, however, and
so researchers have proposed various extensions.  Examples include
Wadler and Thiemann's~\cite{wadler:2003} indexed monad for typing
effectful computations; Filli{\^a}tre's
generalized monads~\cite{filliatre99atheory}; Atkey's parameterized
monad~\cite{atkey09}, which has been used to
encode disciplines like regions~\cite{kiselyov2008lightweight} and
session types~\cite{pucella2008haskell}; Devriese and
Piessens'~\cite{devriese2011information} monad-like encodings for
information flow controls; 
and many others. Oftentimes these extensions are needed to prove
stronger properties about computations, for instance to prove the
absence of information leaks or memory errors.

Unfortunately, these extensions do not enjoy the same status as monads
in terms of language support. For example, the conveniences that
Haskell provides for monadic programs (e.g., the \sfont{do} notation
combined with type-class inference) do not apply to these extensions.
One might imagine adding specialized support for each of these
extensions on a case-by-case basis, but a unifying construction into
which all of them, including normal monads, fit is clearly preferable.






This paper proposes just such a unifying construction, making several
contributions. Our first contribution is the 
definition of a \emph{polymonad}, a new way to structure effectful
computations. Polymonads give the familiar monadic bind (having type
\ls!forall $a,b$. M $a$ -> ($a$ -> M $b$) -> M $b$!) the more general
type \ls!forall $a,b$. L $a$ -> ($a$ -> M $b$) -> N $b$!. That is, a
polymonadic bind can compose computations with three different types
to a monadic bind's one. Section~\ref{sec:polymonads-alt} defines
polymonads formally, along with the \emph{polymonad laws}, which we
prove are a generalization of the monad and morphism laws. To precisely
characterize their expressiveness, we prove that polymonads correspond
to Tate's \emph{productoids}~\cite{tate12productors} (Theorem~\ref{thm:productoid}), a recent
semantic model general enough to capture most known effect systems,
including all the constructions listed above.\footnote{We
  discovered the same model concurrently with Tate and independently
  of him, though we have additionally developed supporting algorithms
  for (principal) type inference, (provably coherent) elaboration, and
(generality-preserving) simplification. Nevertheless, our presentation here has benefited from
  conversations with him.}

Whereas Tate's interest is in semantically modeling sequential
compositions of effectful computations, our interest is in supporting
practical programming in a higher-order language. Our second
contribution is the definition of \lang{}
(Section~\ref{sec:programming}), an ML-like programming language
well-suited to programming with polymonads. We work out several
examples in \lang, including novel polymonadic constructions for
stateful information flow tracking, contextual type and effect
systems~\cite{neamtiu08context}, and session types.

Our examples are made practical by \lang's support for type inference
and elaboration, which allows programs to be written in a familiar
ML-like notation while making no mention of the bind
operators. Enabling this feature, our third contribution
(Section~\ref{sec:syntactic}) is an instantiation of Jones' theory of
qualified types~\cite{jones1992theory} to \lang. In a manner similar
to Haskell's type class inference, we show that type inference for
\lang{} computes \emph{principal types} (Theorem~\ref{thm:oml}).
Our inference algorithm is equipped with an elaboration phase, which
translates source terms by inserting binds where needed.
We prove that elaboration is \emph{coherent}
(Theorem~\ref{thm:coherence}), meaning that when inference produces
constraints that could have several solutions, when these solutions
are applied to the elaborated terms the results will have equivalent
semantics, thanks to the polymonad laws. This property allows us to do
better than Haskell, which does not take such laws into account, and
so needlessly rejects programs it thinks might be ambiguous. Moreover,
as we show in Section~\ref{sec:solve}, the polymonad laws allow us to
dramatically simplify types, making them far easier to read without
compromising their generality. A prototype implementation of \lang{}
is available from the first author's web page and has been used to
check all the examples in the paper.

Put together, our work lays the foundation for providing practical
support for advanced monadic programming idioms in typed, functional
languages.

\section{Polymonads}
\label{sec:polymonads-alt}

We begin by defining polymonads formally. 
We prove that a polymonad
generalizes a collection of monads and morphisms among those
monads. We also establish a correspondence between polymonads and
productoids, placing our work on a semantic foundation that is known
to be extremely general.


\begin{definition}
\label{def:polymonad}
A \textbf{polymonad} $(\mconstrs, \Sigma)$ consists of (1) a
collection $\mconstrs$ of unary type constructors, with a
distinguished element $\mId \in \mconstrs$, such that
$\kw{Id}~\tau=\tau$, and (2) a collection, $\Sigma$, of $\mybind$
operators such that the laws below hold, 
where $\bind{(M,N)}{P} \triangleq$\ls!forall a b. M a -> (a -> N b) -> P b!.
\\[-1ex]
\begin{small}
\[\begin{array}{ll}
\multicolumn{2}{l}{$For all$~\kw{M}, \kw{N}, \kw{P}, \kw{Q}, \kw{R}, \kw{S}, \kw{T}, \kw{U} \in \mconstrs.} \\
$\textbf{(Functor)}$ & \exists \kw{b}. \kw{b}@\bind{(M,\mId)}{M} \in \Sigma ~$and$~\kw{b}~\mbox{\ls!m!}~(\lambda \kw{y}.\kw{y}) = \mbox{\ls!m!} \\[1ex]

$\textbf{(Paired morphisms)}$ & \exists \kw{b}_1@\bind{(M,\mId)}{N} \in \Sigma \iff \exists \kw{b}_2@\bind{(\mId, M)}{N} \in \Sigma~\mbox{\emph{and}} \\
                              & \forall \kw{b}_1@\bind{(M,\mId)}{N}, \kw{b}_2@\bind{(\mId, M)}{N}. \kw{b}_1\, \mbox{\ls!(f v)!}~(\lambda \kw{y}.\kw{y}) = \kw{b}_{2}~\mbox{\ls!v f!} \\[1ex]

$\textbf{(Diamond)}$       &  \exists \kw{P},\kw{b}_1,\kw{b}_2. \aset{\kw{b}_1@\bind{(M,N)}{P}, \kw{b}_2@\bind{(P,R)}{T}} \subseteq \Sigma \; \iff \\
                           & \exists \kw{S},\kw{b}_3,\kw{b}_4. \aset{\kw{b}_3@\bind{(N,R)}{S}, \kw{b}_4@\bind{(M,S)}{T}} \subseteq \Sigma \\[1ex]

$\textbf{(Associativity)}$ & \forall \kw{b}_1,\kw{b}_2,\kw{b}_3,\kw{b}_4. $If$~\\
                               & \aset{\kw{b}_1@\bind{(M,N)}{P},  \kw{b}_2@\bind{(P,R)}{T}, \kw{b}_3@\bind{(N,R)}{S}, \kw{b}_4@\bind{(M,S)}{T}}\subseteq\Sigma\\
                               & $then$~\kw{b}_2~(\kw{b}_1~m~f)~g = \kw{b}_4~m~(\lambda x. \kw{b}_3~(f~x)~g)  \\[1ex]

$\textbf{(Closure)}$           & $If$~\exists \kw{b}_1,\kw{b}_2,\kw{b}_3,\kw{b}_4. \\
                               & \aset{\kw{b}_1@\bind{(\kw{M}, \kw{N})}{\kw{P}}, 
                                       \kw{b}_2@\bind{\kw{(S,Id)}}{\kw{M}}, 
                                       \kw{b}_3@\bind{\kw{(T,Id)}}{\kw{N}},
                                       \kw{b}_4@\bind{\kw{(P,Id)}}{\kw{U}}} \subseteq \Sigma \\
                              & $then$~\exists \kw{b}. \kw{b}@\bind{(\kw{S}, \kw{T})}{\kw{U}} \in \Sigma
\end{array}\]
\end{small}
\end{definition}

Definition~\ref{def:polymonad} may look a little austere, but there is
a simple refactoring that recovers the structure of functors and monad
morphisms from a polymonad.\footnote{An online version of this paper
  provides an equivalent formulation of Definition~\ref{def:polymonad}
  in terms of join operators instead of binds. It can be found here:
  \url{http://research.microsoft.com/en-us/um/people/nswamy/papers/polymonads.pdf}. 
  The join-based definition is perhaps more natural for a
  reader with some familiarity with category theory; the bind-based
  version shown here is perhaps more familiar for a functional
  programmer.} Given $(\mathcal{M},\Sigma)$, we can easily
construct the following sets:

\begin{small}
\[\begin{array}{llcl}
$(Maps)$  & M & = & \aset{(\lambda f m. \mybind~m~f)\colon \kw{(a -> b) -> M a -> M b} \mid \mybind\colon\bind{(\mconst,\mId)}{\mconst} \in \Sigma}\\
$(Units)$ & U & = & \aset{(\lambda x. \mybind~x~(\lambda y.y))\colon \kw{a -> M a} \mid \mybind\colon\bind{(\mId,\mId)}{M} \in \Sigma}\\
$(Lifts)$ & L & = & \aset{(\lambda x. \mybind~x~(\lambda y.y))\colon \kw{M a -> N a} \mid \mybind\colon\bind{(\mconst,\mId)}{\mconst[N]} \in \Sigma}\\
\end{array}\]
\end{small}

It is fairly easy to show that the above structure satisfies
generalizations of the familiar laws for monads and monad
morphisms. For example, one can prove $\mybind~(\myunit~e)~ f = f~e$,
and $\mylift~(\myunit_1~e) = \munit_2~e$ for all suitably typed
$\myunit_1,\myunit_2 \in U$, $\mylift \in L$ and $\mybind \in
\Sigma$. 

With these intuitions in mind, one can see that the \textbf{Functor}
law ensures that each $\mconst \in \Sigma$ has a \ls$map$ in $M$, as
expected for monads.
From the construction of $L$, one can see that a bind
$\bind{(M,\mId)}{N}$ is just a morphism from $\mconst$ to
\ls$N$. Since this comes up quite often, we write
$\morph{\mconst}{\kw{N}}$ as a shorthand for $\bind{(M,\mId)}{N}$.
The \textbf{Paired morphisms} law amounts to a coherence condition
that all morphisms can be re-expressed as binds.
The \textbf{Associativity} law is the familiar associativity law
for monads generalized for both our more liberal typing for bind
operators and for the fact that we have a \emph{collection} of binds
rather than a single bind. The \textbf{Diamond} law
essentially guarantees a coherence property for associativity, namely
that it is always possible to complete an application of 
\textbf{Associativity}.
The \textbf{Closure} law ensures closure under composition of monad morphisms
with binds, also for coherence.

It is easy to prove that every collection of monads and monad
morphisms is also a polymonad. In fact, in
Appendix~\ref{sec:productoids}, we prove a stronger result
that relates polymonads to Tate's \emph{productoids}~\cite{tate12productors}.

\iffull
\begin{lemma}
\label{lemma:monad-is-polymonad}
If $(\kw{M}, \mymap, \myunit, \mybind)$ is a monad then $(\{\kw{M},
\kw{Id}\}, \{b_1, b_2, b_3, b_4\})$ is a polymonad where 
$b_1=\lambda x\colon\kw{M a}.\lambda f\colon\kw{a->Id b}. \kw{map}~f~x$,
$b_2=\lambda x\colon\kw{Id a}.\lambda f\colon\kw{a -> M b}. f~x$, 
$b_3=\mybind$,
$b_4=\lambda x\colon\kw{Id a}.\lambda f\colon\kw{a -> Id a}. \kw{unit}~x$.
\end{lemma}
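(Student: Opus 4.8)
The plan is to verify, clause by clause, that $(\{\kw{M},\kw{Id}\},\Sigma)$ with $\Sigma=\{b_1,b_2,b_3,b_4\}$ satisfies Definition~\ref{def:polymonad}, reducing every obligation to one of the three monad laws (left unit, right unit, associativity) or to a functor law for $\kw{M}$. I would first derive the functor laws from the monad laws through the standard identity $\kw{map}~f~m=\mybind~m~(\lambda y.\,\kw{unit}~(f~y))$, so that $\kw{map}~(\lambda y.y)~m=m$, $\kw{map}~g~(\kw{map}~f~m)=\kw{map}~(g\circ f)~m$ and $\mybind~(\kw{map}~f~m)~g=\mybind~m~(g\circ f)$ are all available as lemmas. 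Next I would discharge the typing obligations: unfolding $\kw{Id}~\tau=\tau$, the claimed signatures $b_1@\bind{(M,Id)}{M}$, $b_2@\bind{(Id,M)}{M}$, $b_3@\bind{(M,M)}{M}$ (with $b_3=\mybind$) and $b_4@\bind{(Id,Id)}{M}$ follow immediately from the types of $\kw{map}$, $\kw{unit}$ and $\mybind$. One small point: the \textbf{Functor} clause at the constructor $\kw{Id}$ requires a bind of signature $\bind{(Id,Id)}{Id}$, which I take to be the forced identity bind $\lambda m\,f.\,f\,m$, read as implicitly present in $\Sigma$; the clause then holds trivially.

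The engine of the proof is a finite case split over the constructor metavariables $\kw{M},\kw{N},\kw{P},\dots$, each ranging over the two-element set $\{\kw{M},\kw{Id}\}$. Because the set of bind signatures actually realised by $\Sigma$ is tiny, the overwhelming majority of constructor assignments admit no witnessing binds, so the corresponding instance of a universally quantified law is vacuous. For the equational clauses: \textbf{Functor} is the $\kw{map}$ identity law (the $\kw{Id}$ instance being covered by the identity bind); \textbf{Paired morphisms}, in its only non-vacuous instances, reduces to $b_1~(f~v)~(\lambda y.y)=\kw{map}~(\lambda y.y)~(f~v)=f~v=b_2~v~f$ for the pair $(b_1,b_2)$, the $\kw{Id}$ instances being handled identically by $b_4$ or the identity bind; and \textbf{Associativity} splits into the genuinely monadic instance, $\mybind~(\mybind~m~f)~g=\mybind~m~(\lambda x.\,\mybind~(f~x)~g)$, together with a handful of degenerate instances in which at least one of the four binds is $b_1$, $b_2$ or $b_4$, each collapsing to a unit law or to the interaction $\mybind~(\kw{map}~f~m)~g=\mybind~m~(g\circ f)$ derived above.

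Finally, \textbf{Diamond} and \textbf{Closure} are pure existence statements, with no equations to check. For each I would argue that any choice of hypothesis binds drawn from $\Sigma$ essentially pins down the constructors involved, after which the required conclusion bind can always be exhibited among $b_1,\dots,b_4$ (or the identity bind). I expect the only real work in the whole proof to be exactly this bookkeeping: being exhaustive about which of the up to $2^k$ constructor assignments in each universally quantified clause genuinely admit the hypothesised binds, and---especially in \textbf{Diamond}, \textbf{Closure} and \textbf{Associativity}---making sure the instances that place $\kw{Id}$ in a result position are matched by $b_4$ or the identity bind rather than quietly overlooked. Once the case tree is drawn, each leaf is a one-line appeal to a monad or functor law.
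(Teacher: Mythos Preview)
Your approach is correct and is the natural one; the paper itself does not spell out a proof of this lemma, merely remarking that ``it is easy to prove that every collection of monads and monad morphisms is also a polymonad'' before passing to the stronger productoid correspondence. A finite case split over $\{\kw{M},\kw{Id}\}$ with each non-vacuous leaf reduced to a monad/functor identity is exactly how such a verification goes.

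Your observation about the missing $\bind{(Id,Id)}{Id}$ bind is on point: the \textbf{Functor} clause instantiated at the constructor $\kw{Id}$ does demand such a bind, and none of $b_1,\dots,b_4$ has that signature. Your fix---treating the identity bind $\lambda m\,f.\,f\,m$ as implicitly present---is the right reading; indeed, the paper's own categorical reformulation in the appendix (Definition~\ref{defn:polymonad-category}) makes exactly this move, including $\mathit{id}_I\colon II\Rightarrow I$ alongside $\eta_R$, $\mu_R$, and $\mathit{id}_R$ when exhibiting a monad as a polymonad. So the lemma as printed is slightly understated, and you have diagnosed it correctly rather than papered over a gap in your own argument.

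One further wrinkle you may wish to flag: the definition of $b_4$ as printed, with second argument typed $\kw{a}\to\kw{Id}~\kw{a}$ and body $\kw{unit}~x$, does not actually inhabit $\bind{(Id,Id)}{M}$ at the required generality $\forall a\,b.\ a\to(a\to b)\to\kw{M}\,b$; the intended term is presumably $\lambda x.\,\lambda f.\,\kw{unit}~(f~x)$. This is orthogonal to your proof strategy but worth noting if you are checking the equational clauses carefully, since \textbf{Paired morphisms} at $(\kw{Id},\kw{Id})\rhd\kw{M}$ needs $b_4~(f~v)~(\lambda y.y)=b_4~v~f$, which holds for the corrected $b_4$ but not literally for the one printed.
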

\fi

\begin{theorem}
\label{thm:productoid}
Every polymonad gives rise to a productoid, and every productoid that
contains an \ls$Id$ element and whose joins are closed with respect to
the lifts, is a polymonad.
\end{theorem}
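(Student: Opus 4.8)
The plan is to route both directions through the join-based presentation of Definition~\ref{def:polymonad} (the one in the online version mentioned in the footnote), since Tate's productoids are formulated with composition operators rather than with binds. So the first step is to make the dictionary between the two presentations precise, as a preliminary lemma. Using the \textbf{Functor} law, every bind $\kw{b}@\bind{(L,M)}{N}\in\Sigma$ determines a join $\kw{j}_{LM}^{N}:\kw{L (M a) -> N a}$ by $\kw{j}~t=\kw{b}~t~(\lambda x.x)$, and conversely $\kw{b}~m~f=\kw{j}~(\kw{map}~f~m)$ where $\kw{map}\in M$ is the functor action of $L$; the \textbf{Functor} law makes $\kw{b}\mapsto\kw{j}$ invertible, and the coherence of the round-trip follows from \textbf{Diamond} and \textbf{Associativity}. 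Once this is established, the polymonad laws can be read entirely as statements about joins, units and morphisms.

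For the forward direction, given a polymonad $(\mconstrs,\Sigma)$ I would build a productoid whose effect constructors are the elements of $\mconstrs$, whose multiplication operators are the joins $\kw{j}_{LM}^{N}$ above, whose unit at each $\kw{M}$ is the corresponding element of $U$ (a join of shape $\bind{(\mId,\mId)}{M}$), and whose morphisms/lifts are the elements of $L$ (recall a bind $\bind{(M,\mId)}{N}$ is exactly a morphism $\morph{\kw{M}}{\kw{N}}$). The productoid coherence obligations then discharge as follows: the unit diagrams come from \textbf{Functor} together with \textbf{Paired morphisms}; naturality of the joins comes from \textbf{Functor}; the associativity coherence is exactly the \textbf{Associativity} law; and the well-formedness condition that the relevant ternary composite actually exists in both bracketings---so that the associativity diagram can even be drawn---is precisely what the \textbf{Diamond} law supplies. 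Note that the \textbf{Closure} law is not consumed in this direction, consistently with the theorem claiming only that an \emph{arbitrary} polymonad yields a productoid.

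For the reverse direction, let $\mathcal{P}$ be a productoid containing an $\mId$ element with $\kw{Id}~\tau=\tau$ and whose joins are closed with respect to its lifts. I would take $\mconstrs$ to be the effect constructors of $\mathcal{P}$ (so $\mId\in\mconstrs$ by hypothesis), recover a $\kw{map}$ for each constructor by composing its multiplication with the unit coming from $\mId$ (using that $\mId$ is a two-sided unit), and set $\Sigma$ to be the binds $\kw{b}~m~f=\kw{j}~(\kw{map}~f~m)$, one for each productoid composition $\kw{j}$, together with a morphism-bind $\bind{(M,\mId)}{N}$ for each lift. The five laws are then checked in turn: \textbf{Functor} from the productoid unit laws; \textbf{Associativity} and \textbf{Diamond} from the productoid's associativity coherence plus the existence of composites; \textbf{Paired morphisms} from unitality together with $\kw{Id}~\tau=\tau$; and \textbf{Closure}---the one law with genuine content here---directly from the hypothesis that joins are closed with respect to the lifts, which is essentially the statement that pre-composing and post-composing a bind with morphisms on all three legs again lands in $\Sigma$.

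I expect the main obstacle to be one of matching presentations rather than of mathematical depth: Tate's productoids are stated in a multicategory-flavoured style with $n$-ary composites and a battery of coherence diagrams, whereas Definition~\ref{def:polymonad} packages the same information into the binary \textbf{Associativity} law plus the \textbf{Diamond} existence condition, so the delicate part is verifying that these two packagings are genuinely equivalent---and, on the productoid side, pinning down exactly which diagrams force \textbf{Diamond}. A secondary fiddly point is keeping the bookkeeping around $\mId$ honest, i.e.\ that the degenerate binds $\bind{(\mId,\mId)}{M}$, $\bind{(M,\mId)}{N}$ and $\bind{(\mId,M)}{N}$ correspond on the nose to units, lifts, and their mates under \textbf{Paired morphisms}, so that ``lift $=$ morphism $=$ bind-with-$\mId$'' holds uniformly on both sides of the correspondence.
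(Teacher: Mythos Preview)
Your overall approach matches the paper's: both directions route through the join-based presentation, with joins $j_{LM}^{N}$ extracted from binds via $j\,t = b\,t\,(\lambda x.x)$, and units and lifts read off as the special cases $\bind{(\mId,\mId)}{M}$ and $\bind{(M,\mId)}{N}$. The paper additionally factors the forward direction explicitly through Tate's notion of an \emph{effectoid}---the algebraic substrate of a productoid, namely a set $E$ with a distinguished subset $U\subseteq E$, a preorder $\leq$, and a ternary relation $(\_;\_)\mapsto\_$ satisfying six axioms---by first showing that a polymonad induces an effectoid and then verifying the five productoid coherence laws over that structure. This is not a different strategy so much as a different packaging of the same checks, but it does make clearer exactly which existence conditions must be discharged before the coherence diagrams can even be stated.

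There is one concrete error. You assert that \textbf{Closure} ``is not consumed'' in the forward direction. This is wrong: several of the effectoid axioms require \textbf{Closure} to verify. For instance, axiom~6 states that $(e_1;e_2)\mapsto e_3$ together with $e_3\leq e_4$ implies $(e_1;e_2)\mapsto e_4$; in polymonad terms this is exactly post-composition of a bind with a morphism, and the paper discharges it by a direct appeal to \textbf{Closure}. Likewise, the characterisation of $\leq$ in terms of units (axiom~1) and the upward closure of units (axiom~5) both invoke \textbf{Closure} in the paper's proof. What \emph{is} true---and perhaps what you had in mind---is that productoids only demand a weaker form of closure than the full polymonad law; this is precisely why the reverse direction of the theorem must add the extra hypothesis that joins are closed with respect to lifts in order to recover the full \textbf{Closure} law. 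But a nontrivial fragment of \textbf{Closure} is still consumed going forward. Your attributions of which polymonad laws discharge which productoid diagrams are also somewhat loose: the paper's detailed calculations for the unit-related laws (e.g.\ $j_{MNP}\circ(\mathsf{map}_M\,\mathsf{unit}_N)=l_{MP}$) lean heavily on \textbf{Associativity} and \textbf{Paired morphisms}, not just \textbf{Functor}.
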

Tate developed productoids as a categorical
foundation for effectful computation. He
demonstrates the expressive power of productoids by showing how they
subsume other proposed extensions to
monads~\cite{wadler:2003,filinski1999representing,atkey09}.  This
theorem shows polymonads can be soundly interpreted using
productoids. Strictly speaking, productoids are more expressive than
polymonads, since they do not, in general, need to have an \ls$Id$
element, and only satisfy a slightly weaker form of our
\textbf{Closure} condition. However, these restrictions are mild, and
certainly in categories that are Cartesian closed, these conditions
are trivially met for all productoids. Thus, for programming purposes,
polymonads and productoids have exactly the same expressive power.
The development of the rest of this paper shows, for the first time,
how to harness this expressive power in a higher-order programming
language, tackling the problem of type inference, elaborating a
program while inserting binds, and proving elaboration coherent.

\section{Programming with polymonads}
\label{sec:programming}

\begin{figure}[t]
  \[\begin{array}{lllcl}
$\textit{Signatures}$ (\mconstrs,\Sigma):  
&k$-ary constructors$ & \mconstrs & ::= & \cdot \mid M/k, \mconstrs\\
&$ground constructor$ & \gm & ::= & M~\overline{\ty} \\
&$bind set$           & \Sigma & ::= & \cdot \mid \sfont{b}@s, \Sigma \\
&$bind specifications$         & s         & ::= & \forall\bar{a}. \Phi \Rightarrow \bind{(\gm_1,\gm_2)}{\gm_3} \\
&$theory constraints $ & \Phi & \\[2ex]
  $\textit{Terms:}$ & $values$          &  v       & ::=  & x \mid c \mid \slam{x}{e} \\
  & $expressions$     &  e       & ::=  & v \mid \sapp{e_1}{e_2} \mid \slet{x}{e_1}{e_2} \\
  & &          & \mid & \sif{e}{e_1}{e_2} \mid \sletrec{f}{v}{e} \\[2ex]
%
$\textit{Types:}$ &   $monadic types$   & m      & ::= & \gm \mid  \mvar\\
  & $value types$           & \ty    & ::= & a \mid T\, \overline{\ty} \mid \tfun{\ty_1}{\tapp{m}{\ty_2}} \\
  & $type schemes$     & \sigma   & ::= & \forall \bar{a}\bar\mvar. \Binds => \ty \\
  & $bag of binds$   & \Binds  & ::= & \cdot \mid \pi, \Binds \\
  & $bind type$  & \pi         & ::= & \tbind{(m_1,m_2)}{m_3}
%
  \end{array}\]
  \caption{\lang: Syntax for signatures, types, and terms}
  \label{fig:lang}
\end{figure}

\newcommand{\theorysays}{\ensuremath{\vDash}} 


\newcommand\wadler{\ensuremath{W}}
\renewcommand\mconst[1][M]{\ensuremath{\text{#1}}} 

This section presents \lang, an ML-like language for programming with
polymonads. We also present several examples that provide a flavor of
programming in \lang. As such, we aim to keep our examples as simple
as possible while still showcasing the broad applicability of
polymonads. For a formal characterization of the expressiveness of
polymonads, we appeal to Theorem~\ref{thm:productoid}.

\paragraph{Polymonadic signatures.} A \lang{} \emph{polymonadic
  signature} $(\mathcal{M}, \Sigma)$ (Figure~\ref{fig:lang}) amends
Definition~\ref{def:polymonad} in two ways. Firstly, each element
$M$ of $\mathcal{M}$ may be \emph{type-indexed}---we write
$M/k$ to indicate that $M$ is a $(k+1)$-ary type
constructor (we sometimes omit $k$ for brevity). For example,
constructor $\wadler/1$ could represent an effectful computation so
that $\wadler\;\epsilon\;\ty$ characterizes computations of type $\ty$
that have effect $\epsilon$. Type indexed constructors (rather than
large enumerations of non-indexed constructors) are critical for
writing reusable code, e.g., so we can write functions 
like $\sfont{app}: \forall a,b,\varepsilon. (a \rightarrow
\wadler\;\epsilon\;b) \rightarrow a \rightarrow \wadler\;\epsilon\;b$. 
 We write $\gm$ to denote \emph{ground
  constructors}, which are monadic constructors applied to all their
type indexes; e.g., $\wadler\;\epsilon$ is ground. Secondly, a bind
set $\Sigma$ is not specified intensionally as a set, but rather
extensionally using a language of \emph{theory constraints} $\Phi$. In
particular, $\Sigma$ is a list of mappings $\sfont{b}@s$ where $s$
contains a triple $\bind{(\gm_1,\gm_2)}{\gm_3}$ along with constraints
$\Phi$, which determine how the triple's constructors may be instantiated. For
example, a mapping $\sfont{sube}: \forall \varepsilon_1,
\varepsilon_2.\, \varepsilon_1 \subseteq \varepsilon_2 \Rightarrow
\tbind{(\wadler\;\varepsilon_1, \mId) }{\wadler\,\varepsilon_2}$
specifies the set of binds involving type indexes $\varepsilon_1,
\varepsilon_2$ such that the theory constraint $\varepsilon_1
\subseteq \varepsilon_2$ is satisfied.

\lang's type system is parametric in the
choice of theory constraints $\Phi$, which allows us to encode a
variety of prior monad-like systems with \lang.
To interpret a particular set of constraints, \lang{} requires a theory
entailment relation \theorysays. Elements of this relation, written
$\Sigma \theorysays \pi \rew \sfont{b}; \theta$, state that there
exists $\sfont{b}@\forall\bar{a}. \Phi \Rightarrow
\bind{(\gm_1,\gm_2)}{\gm_3}$ in $\Sigma$ and a substitution $\theta'$
such that $\theta\pi = \theta'\bind{(\gm_1,\gm_2)}{\gm_3}$, and the
constraints $\theta'\Phi$ are satisfiable. 
Here, $\theta$ is a substitution for the free
(non-constant) variables in $\pi$, while $\theta'$ is an instantiation
of the abstracted variables in the bind specification. Thus, the
interpretation of $\Sigma$ is the following set of binds:
$\aset{\sfont{b}@\pi \mid \Sigma \theorysays \pi \rew \sfont{b};
  \cdot}$. Signature $(\mathcal{M}, \Sigma)$ is a polymonad if this
set satisfies the polymonad laws (where each ground constructor is
treated distinctly).

Our intention is that type indices are \emph{phantom}, meaning that
they are used as a type-level representation of some property of the
polymonad's current state, but a polymonadic bind's implementation
does not depend on them.  For example, we would expect that binds
treat objects of type $\wadler\,\varepsilon\,\tau$ uniformly, for all
$\varepsilon$; different values of $\varepsilon$ could statically
prevent unsafe operations like double-frees or dangling pointer
dereferences. Of course, a polymonad may include other constructors
distinct from $\wadler$ whose bind operators could have a completely
different semantics. For example, if an object has different states
that would affect the semantics of binds, or if other effectful
features like exceptions were to be modeled, the programmer can use a
different constructor $M$ for each such feature. As such, our
requirement that the type indices are phantom does not curtail
expressiveness.

\paragraph{Terms and types.}
\lang's term language is standard. \lang{} programs do not explicitly
reference binds, but are written in \emph{direct style}, with implicit
conversions between computations of type $m\;\ty$ and their
$\ty$-typed results.  Type inference determines the bind
operations to insert (or abstract) to type check a program.

To make inference feasible, we rely crucially on \lang's call-by-value
structure. Following our prior work on monadic programming for
ML~\cite{swamy11monadICFP}, we 
restrict the shape of types assignable to a \lang{} program by
separating value types $\ty$ from the types of polymonadic
computations $m~\ty$. Here, metavariable $m$ may be either a ground
constructor $\gm$ or a polymonadic type variable $\mvar$.  The co-domain of
every function is required to be a computation type $m~\ty$, although
pure functions can be typed $\ty -> \ty'$, which is a synonym for $\ty
-> \mId~\ty'$. We also include types $T~\bar\ty$ for fully applied
type constructors, e.g., $\sfont{list}~\tint$.

Programs can also be given type schemes $\sigma$ that are polymorphic
in their polymonads, e.g., $\forall a,b,\mvar.$ $(a -> \mvar\,b) -> a ->
\mvar\,b$. Here, the variable $a$ ranges over value types $\tau$,
while $\mvar$ ranges over ground constructors $\gm$. Type schemes may also
be qualified by a set $P$ of bind constraints $\pi$. For example,
$\forall \mvar. \bind{(\mvar,\mId)}{\gm} \Rightarrow (\tint ->
\mvar~\tint) -> \gm~\tint$ is the type of a function that
abstracts over a bind having shape $\bind{(\mvar,\mId)}{\gm}$.
Notice that $\pi$ triples may contain polymonadic type variables
$\mvar$ while specification triples $s \in \Sigma$ may not. Moreover,
$\Phi$ constraints never appear in $\sigma$, which is thus entirely
independent of the choice of the theory.




\subsection{Polymonadic information flow controls}
\label{sec:ist-example}

Polymonads are appealing because they can
express many interesting constructions as we now show.

\newcommand{\entSec}{\Vdash}
\newcommand\intref{\ensuremath{\mathit{intref}}}

Figure~\ref{fig:ist} presents a polymonad $\mIST$, which implements
\emph{stateful} information flow
tracking~\cite{devriese2011information,russo08lightweight,li2006encoding,crary2005monadic,abadi1999dcc}.
The idea is that some program values are secret and some are public,
and no information about the former should be learned by observing the
latter---a property called
noninterference~\cite{goguen1982security}. In the setting of $\mIST$,
we are worried about leaks via the heap.

Heap-resident storage cells are given type $\intref\;l$ where $l$ is
the secrecy label of the referenced cell. Labels $l \in
\aset{\Lo,\Hi}$ form a lattice with order $L \sqsubset H$.  A program
is acceptable if data labeled $H$ cannot flow, directly or indirectly,
to computations or storage cells labeled $L$. In our polymonad
implementation, $\Lo$ and $\Hi$ are just types $T$ (but only ever
serve as indexes), and the lattice ordering is implemented by theory
constraints $l_1 \sqsubseteq l_2$ for $l_1,l_2 \in \aset{\Lo,\Hi}$.

\newcommand{\lfont}[1]{{\tiny \mathit{#1}}}

\begin{figure}[t]
  \small
\centering
\begin{tabular}{ll}
\begin{minipage}{3.1in}
\noindent
$\nquad\begin{array}{l@{~}c@{~}ll}
\multicolumn{4}{l}{\!\!$\textit{Signature:}$} \\
\mconstrs & = & \mIST/2 \\
  \Phi    & ::=  & \multicolumn{2}{l}{l_1 \sqsubseteq l_2 \mid \Phi_1,\Phi_2}  \\
\Sigma    & = & \sfont{bId} : & \morph{\mId}{\mId}, \\
          &   & \sfont{unitIST}: & \forall p,l.\,\morph{\mId}{\mIST\;p\;l},\\
          &   & \sfont{mapIST}: &\forall p_1,l_1,p_2,l_2.\, p_2
          \sqsubseteq p_1, l_1\sqsubseteq l_2 \Rightarrow  \\
& & & \morph{\mIST\;p_1\;l_1}{\mIST\;p_2\;l_2},\\
          &   & \sfont{appIST}: & \forall p_1,l_1,p_2,l_2.\, p_2
          \sqsubseteq p_1, l_1\sqsubseteq l_2 \Rightarrow \\
& & & \bind{(\mId,\mIST\;p_1\;l_1)}{\mIST\;p_2\;l_2}, \\
          &   & \sfont{bIST}: &\forall p_1,l_1,p_2,l_2,p_3,l_3.  \\
          &   & & l_1 \sqsubseteq p_2, l_1 \sqsubseteq l_3, l_2
          \sqsubseteq l_3, \\ 
& & & p_3 \sqsubseteq p_1,  p_3 \sqsubseteq p_2
          \Rightarrow \\
& & & \bind{(\mIST\; p_1\; l_1,\mIST\; p_2\; l_2)}{\mIST\; p_3\; l_3}
\end{array}$
\end{minipage}
&
\begin{minipage}{2.6in}
\begin{tabular}{l}
$\nquad\begin{array}{ll}
\multicolumn{2}{l}{\!\!$\textit{Types and auxiliary functions:}$} \\
\tau : & ... \mid \intref~\tau \mid \Lo \mid \Hi \\
\code{read}  : & \forall l.\, \intref~l \rightarrow \mIST\; H\; l\; \tint \\
\code{write} : & \forall l.\, \intref~l \rightarrow \tint \rightarrow \mIST\; l\; L\; ()
\end{array}$\\
~\\
\noindent
$\nquad\begin{array}{ll}
\multicolumn{2}{l}{\!\!$\textit{Example program:}$} \\
\multicolumn{2}{l}{\kw{let add_interest = lam savings. lam interest.}} \\
               & \kw{let currinterest = read interest in} \\
               & \kw{if currinterest > 0 then} \\
               & \quad\kw{let currbalance = read savings in}\\
               & \quad\kw{let newbalance =}\\
               & \quad \quad \kw{currbalance + currinterest in}\\
               & \quad\kw{write savings newbalance} \\
               & \kw{else ()}
\end{array}$
\end{tabular}
\end{minipage}
\end{tabular}
  \caption{Polymonad $\mIST$, implementing stateful
    information flow control}
  \label{fig:ist}
\end{figure}

The polymonadic constructor $\mIST/2$ uses secrecy labels for its type
indexes. A computation with type $\mIST\;p\;l\;\ty$ potentially writes
\mwh{I think $\mIST$ should be italics; at least, that's what you've
  written up to this point}
to references labeled $p$ and returns a $\ty$-result that has security
label $l$; we call $p$ the \emph{write label} and $l$ the \emph{output
  label}.  Function \ls$read$ reads a storage cell, producing a
$\mIST\; H\; l\; \tint$ computation---the second type index $l$
matches that of $l$-labeled storage cell.  Function \ls$write$ writes
a storage cell, producing a $\mIST\; l\; L\; ()$ computation---the
first type index $l$ matches the label of the written-to storage
cell. $\Hi$ is the most permissive write label and so is used for the
first index of \ls$read$, while $\Lo$ is the most permissive output
label and so is used for the second index of \ls$write$.

Aside from the identity bind $\sfont{bId}$, implemented as reverse
apply, there are four kinds of binds. Unit $\sfont{unitIST}\;p\;l$
lifts a normal term into an $\mIST$ computation. Bind
$\sfont{mapIST}\;p\;l$ lifts a computation into a more permissive
context (i.e., $p_2$ and $l_2$ are at least as permissive as $l_1$ and
$l_2$), and $\sfont{appIST}\;p\;l$ does likewise, and are
implemented using $\sfont{mapIST}$ as follows: $\sfont{appIST}\;p\;l =
\lambda x. \lambda f. \sfont{mapIST}\;p\;l\; (f\;x)\;(\lambda
x.x)$. Finally, bind $\sfont{bIST}$ composes a computation
$\mIST\;p_1\;l_1~\alpha$ with a function $\alpha ->
\mIST\;p_2\;l_2~\beta$. The constraints ensure safe information flow:
$l_1 \sqsubseteq p_2$ prevents the second computation from leaking
information about its $l_1$-secure $\alpha$-typed argument into a
reference cell that is less than $l_1$-secure. Dually, the constraints
$l_1 \sqsubseteq l_3$ and $l_2 \sqsubseteq l_3$ ensure that the
$\beta$-typed result of the composed computation is at least as secure
as the results of each component. The final constraints $p_3
\sqsubseteq p_1$ and $p_3 \sqsubseteq p_2$ ensure that the write label
of the composed computation is a lower bound of the labels of each
component.

Proving $(\mathcal{M},\Sigma)$ satisfies the polymonad laws is
straightforward. The functor and paired morphism laws are
 easy to prove. The diamond law is more tedious: we
must consider all possible pairs of binds that 
compose. This reasoning involves consideration of the theory
constraints as implementing a lattice, and so would work for any
lattice of labels, not just $\Hi$ and $\Lo$. In all, there were ten
cases to consider. We prove the associativity law for the same ten
cases. This proof is straightforward as the implementation of $\mIST$
ignores the indexes: \code{read}, \code{write} and various binds are
just as in a normal state monad, while the indexes serve only to
prevent illegal flows. Finally, proving closure is relatively
straightforward---we start with each possible bind shape and then
consider correctly-shaped flows into its components; in all there were
eleven cases.

\paragraph{Example.} 
The lower right of Figure~\ref{fig:ist} shows an example use of
$\mIST$. The
$\kw{add_interest}$ function takes two reference cells, $\kw{savings}$
and $\kw{interest}$, and modifies the former by adding to it the
latter if it is non-negative.\footnote{For ease of presentation, the program 
in Figure~\ref{fig:ist} uses \ls$let$ to sequence computations. This is not 
essential, e.g., we need not have \ls$let$-bound \ls$currbalance$.} Notice that
expressions of type $\mIST\;p\;l\;\ty$ are used as if they merely had
type $\ty$---see the branch on \ls|currinterest|, for
example. The program is rewritten during type inference to insert, or
abstract, the necessary binds so that the program type checks. This
process results in the following type for
\ls$add_interest$:\footnote{This and other example types were
  generated by our prototype implementation.} 
\[\small\begin{array}{l@{~}l}
\multicolumn{2}{l}{\forall \mvar_6,\mvar_{27}, a_1, a_2. \Binds =>
 \lfont{\intref\;a_1 \rightarrow \intref\;a_2 \rightarrow
   \mvar_{27}\;()}} \\
\text{where } \Binds = &
\bind{(\mId,\mId)}{\mvar_{6}}, \bind{(\mIST\; \Hi\; a_1, \mIST\; a_1\; \Lo)}{\mvar_{6}}, \bind{(\mIST\; \Hi\; a_2,\mvar_{6})}{\mvar_{27}}
\end{array}
\]
The rewritten version of \ls$add_interest$ starts with a sequence of
$\lambda$ abstractions, one for each of the bind constraints in
$\Binds$.  If we imagine these are numbered $\sfont{b1}$
... $\sfont{b3}$, e.g., where $\sfont{b1}$ is a bind with type
$\bind{(\mId,\mId)}{\mvar_{6}}$, then the term looks as follows
(notation $\kw{...}$ denotes code elided for simplicity):
\begin{lstlisting}
     lam savings. lam interest. b3 (read interest)
       (lam currinterest. if currinterest > 0 then (b2 ...) else (b1 () (lam z. z)))
\end{lstlisting}

In a program that calls \ls|add_interest|, the bind constraints will
be solved, and actual implementations of these binds will be passed in
for each of $\sfont{b}_i$ (using a kind of dictionary-passing style as
with Haskell type classes).

Looking at the type of \ls|add_interest| we can see how the
constraints prevent improper information flows. In particular, if we
tried to call \kw{add_interest} with $a_1 = \Lo$ and $a_2 = \Hi$, then
the last two constraints become $\bind{(\mIST\; \Hi\; \Lo, \mIST\;
  \Lo\; \Lo)}{\mvar_{6}}, \bind{(\mIST\; \Hi\;
  \Hi,\mvar_{6})}{\mvar_{27}}$, and so we must instantiate $\mvar_6$
and $\mvar_{27}$ in a way allowed by the signature in
Figure~\ref{fig:ist}. While we can legally instantiate $\mvar_6 =
\mIST\;\Lo\;l_3$ for any $l_3$ to solve the second constraint, there
is then no possible instantiation of $\mvar_{27}$ that can solve the
third constraint.  After substituting for $\mvar_6$, this constraint
has the form $\tbind{(\mIST\;\Hi\; \Hi, \mIST\;\Lo\;l_3)}{\mvar_{27}}$,
but this form is unacceptable because the $\Hi$ output of the first
computation could be leaked by the $\Lo$ side effect of the second
computation. On the other hand, all other instantiations of $a_1$ and
$a_2$ (e.g., $a_1 = \Hi$ and $a_2 = \Lo$ to correspond to a secret
savings account but a public interest rate) do have solutions and do
not leak information. 
Having just discussed the latter two constraints, consider the 
first, $\bind{(\mId,\mId)}{\mvar_{6}}$. This constraint is
important because it says that $\mvar_6$ must have a unit, which is
needed to properly type the else branch; units are not required of a
polymonad in general.  

The type given above for \ls|add_interest| is not its principal type, but an
\emph{improved} one. As it turns out, the principal type is
basically unreadable, with 19 bind constraints! Fortunately,
Section~\ref{sec:solve} shows how some basic rules can greatly
simplify types without reducing their applicability, as has been done
above. Moreover, our coherence result (given in the next section)
assures that the corresponding changes to the elaborated term do not
depend on the particular simplifications: the polymonad laws ensure
all such elaborations will have the same semantics.


\subsection{Contextual type and effect systems}

\newcommand{\CE}{\ensuremath{\mbox{\textit{CE}}}}

\mwh{Here again, $\CE$ maybe should be italics (see session types in
  the next figure)}
Wadler and Thiemann~\cite{wadler:2003} showed how a monadic-style
construct can be used to model type and effect systems.  Polymonads
can model standard effect systems, but more interestingly can be used
to model \emph{contextual effects}~\cite{neamtiu08context}, which
augment traditional effects with the notion of \emph{prior}
and \emph{future} effects of an expression within a broader
context. As an example, suppose we are using a language that
partitions memory into \emph{regions} $R_1, ..., R_n$ and reads/writes
of references into region $R$ have effect $\aset{R}$. Then in the
context of the program $\kw{read}\;r_1; \kw{read}\;r_2$, where $r_1$
points into region $R_1$ and $r_2$ points into region $R_2$, the
contextual effect of the subexpression $\kw{read}\;r_1$ would be the
triple $[ \emptyset; \aset{R_1}; \aset{R_2} ]$: the prior effect is
empty, the present effect is $\aset{R_1}$, and the future effect is
$\aset{R_2}$.


\begin{figure}[t]
\hspace*{-.1in}
\begin{tabular}{ll}
\begin{minipage}{2.9in}
\noindent
$\begin{array}{lcl}
\mconstrs & = & \CE/3 \\
\Sigma  & = & \sfont{bId} : \tbind{(\mId,\mId)}{\mId}, \\
&&\sfont{unitce}:  \tbind{(\mId,\mId)} 
                {\CE\,\top\,\emptyset\,\top}\\
&& \sfont{appce}:  \forall
\alpha_1,\alpha_2,\epsilon_1,\epsilon_2,\omega_1,\omega_2. \\
&& \quad (\alpha_2 \subseteq \alpha_1, \epsilon_1 \subseteq \epsilon_2, \omega_2 \subseteq \omega_1) \Rightarrow\\
&&\quad \tbind{(\mId,\CE\;\alpha_1\;\epsilon_1\,\omega_1) 
                }{\CE\,\alpha_2\;\epsilon_2\,
                \omega_2} \\
&& \sfont{mapce}:  \forall
\alpha_1,\alpha_2,\epsilon_1,\epsilon_2,\omega_1,\omega_2. \\
&& \quad (\alpha_2 \subseteq \alpha_1, \epsilon_1 \subseteq \epsilon_2, \omega_2 \subseteq \omega_1) \Rightarrow\\
&&\quad \tbind{(\CE\;\alpha_1\;\epsilon_1\,\omega_1, \mId) 
                }{\CE\,\alpha_2\;\epsilon_2\,
                \omega_2} \\
&&\sfont{bindce}:  \forall
\alpha_1,\epsilon_1,\omega_1,\alpha_2,\epsilon_2,\omega_2,\epsilon_3. \\
&& \quad\epsilon_2 \cup \omega_2 = \omega_1, \epsilon_1\cup \alpha_1 = \alpha_2, \epsilon_1 \cup \epsilon_2 = \epsilon_3) \Rightarrow \\
&&\quad \tbind{(\CE\;\alpha_1\;\epsilon_1\,\omega_1, 
                \CE\, \alpha_2\;\epsilon_2\,\omega_2)}{\CE\,\alpha_1\;\epsilon_3\,\omega_2} 
\end{array}$
\end{minipage}
&
\begin{minipage}{3in}
\noindent
$\begin{array}{ll}
\multicolumn{2}{l}{\!\!$\textit{Types and theory constraints:}$} \\
\ty     & ::=  ... \mid \{A_1\} ... \{A_n\} \mid \emptyset \mid \top \mid \ty_1 \cup \ty_2 \\
\Phi    & ::= \ty \subseteq \ty'  \mid \ty = \ty' \mid \Phi,\Phi \\
\\
\multicolumn{2}{l}{\!\!$\textit{Auxiliary functions:}$} \\
\code{read}  : & \forall \alpha,\omega,r.\, \intref~r \rightarrow \CE\; \alpha\; r\; \omega\; \tint \\
\code{write} : & \forall \alpha,\omega,r.\, \intref~r \rightarrow \tint \rightarrow \CE\; \alpha\;r\; \omega\; ()\\
\\
\\
\\
\\
\\
\end{array}$
\end{minipage}
\end{tabular}
\caption{Polymonad expressing contextual type and effect systems}
\label{fig:ctxeff}
\end{figure}

Figure~\ref{fig:ctxeff} models contextual effects as the polymonad
$\CE~\alpha~\epsilon~\omega~\ty$, for the type of a computation with
prior, present, and future effects $\alpha$, $\epsilon$, and $\omega$,
respectively. Indices are sets of atomic effects $\{A_1\}
... \{A_n\}$, with $\emptyset$ the empty effect, $\top$ the effect set
that includes all other effects, and $\cup$ the union of two
effects. We also introduce theory constraints for subset relations and
extensional equality on sets, with the obvious interpretation. As an
example source of effects, we include \code{read} and \code{write}
functions on references into region sets $r$.  The bind $\sfont{unitce}$
ascribes a pure computation as having an empty effect and
any prior and future effects. The binds $\sfont{appce}$ and
$\sfont{mapce}$ express that it is safe to consider an additional
effect for the current computation (the $\epsilon$s are covariant),
and fewer effects for the prior and future computations ($\alpha$s and
$\omega$s are contravariant). Finally, $\sfont{bindce}$ composes two
computations such that the future effect of the first computation
includes the effect of the second one, provided that the prior effect
of the second computation includes the first computation; the effect
of the composition includes both effects, while the prior effect is
the same as before the first computation, and the future effect is the
same as after the second computation.

\subsection{Parameterized monads, and session types}

\newcommand\atkey{\ensuremath{A}}

Finally, we show $\langext$ can express Atkey's parameterized
monad~\cite{atkey09}, which has been used to
encode disciplines like regions~\cite{kiselyov2008lightweight} and
session types~\cite{pucella2008haskell}. The type constructor
$\atkey~p~q~\ty$ can be thought of (informally) as the type of a
computation producing a $\ty$-typed result, with a pre-condition $p$
and a post-condition $q$. 

\begin{figure}[t]
\begin{tabular}{ll}
\begin{minipage}{2.9in}
\noindent
$\begin{array}{lcl}
\mconstrs & = & \mId,\atkey/2 \\
\Sigma    & = & \sfont{bId} : \tbind{(\mId,\mId)}{\mId}, \\
          &   & \sfont{mapA}: \forall p,r.\; \tbind{(\atkey\;p\;r,\mId)}{\atkey\;p\;r}, \\
          &   & \sfont{appA}: \forall p,r.\; \tbind{(\mId,\atkey\;p\;r)}{\atkey\;p\;r}, \\
          &   & \sfont{unitA}: \forall p.\; \tbind{(\mId,\mId)}{\atkey\;p\;p}, \\
          &   & \sfont{bindA}:\forall p,q,r.\; \tbind{(\atkey\,p\,q,\; \atkey\,q\,r)}{\atkey\,p\,r} \\
\end{array}$
\end{minipage}
& 
\begin{minipage}{3.1in}
\begin{tabular}{l}
$\nquad\begin{array}{ll}
\multicolumn{2}{l}{\!\!$\textit{Types:}$} \\
\multicolumn{2}{l}{\ty  ::= \dots \mid \tsend{\ty_1}{\ty_2} \mid \trecv{\ty_1}{\ty_2} \mid \tdone} \\
\\
\multicolumn{2}{l}{\!\!$\textit{Auxiliary functions:}$} \\
\sfont{send}\,  : & \forall a,q.\,a\,\xrightarrow\, \atkey\, (\tsend{a}{q})\,q\, () \\
\sfont{recv}\,  : & \forall a,q.\,()\,\xrightarrow\, \atkey\, (\trecv{a}{q})\,q\, a 
\end{array}$
\end{tabular}
\end{minipage}
\end{tabular}
\caption{Parameterized monad for session types, expressed as a
  polymonad}
\label{fig:session}
\end{figure}

As a concrete example, Figure~\ref{fig:session} gives a polymonadic
expression of Pucella and Tov's notion of session
types~\cite{pucella2008haskell}.  The type $\atkey\,{p}\,{q}\, \ty$
represents a computation involved in a two-party session which starts
in protocol state $p$ and completes in state $q$, returning a value of
type $\ty$.  The key element of the signature $\Sigma$ is the
$\sfont{bindA}$, which permits composing two computations where the
first's post-condition matches the second's precondition.  We use the
type index $\tsend{\ty}{q}$ to denote a protocol state that requires a
message of type $\ty$ to be sent, and then transitions to
$q$. Similarly, the type index $\trecv{\ty}{r}$ denotes the protocol
state in which once a message of type $\ty$ is received, the protocol
transitions to $r$. We also use the index $\tdone$ to denote the
protocol end state. The signatures of two primitive operations for
sending and receiving messages capture this behavior.

As an example, the following \langext{} program
implements one side of a simple protocol that sends a message
\ls$x$, waits for an integer reply \ls$y$, and returns \ls$y+1$.
\[
\begin{array}{c}
\kw{let go = lam x. let _ = send x in incr (recv ())} \\
$Simplified type: $\forall a,b,q,\mvar.\,
\tbind{(\atkey\,(\tsend\,a\,b)\, b,\;\atkey\,(\trecv\,\tint\;q)\, q))}{\mvar}
\Rightarrow \,(a\rightarrow\mvar\;\tint) \\
\end{array}
\]
There are no specific theory constraints for session types:
constraints simply arise by unification and are solved as usual when
instantiating the final program (e.g., to call \ls$go 0$).
\section{Coherent type inference for \lang}
\label{sec:syntactic}
\renewcommand\mconst[1][M]{\ensuremath{\sfont{#1}}} 
\newcommand{\evidence}[1]{\mathsf{app}(#1)}
\newcommand{\evlift}[2]{\mathsf{b}_{#1,#2}\,}
\newcommand{\evbind}[3]{\mathsf{b}_{#1,#2,#3}\,}

\begin{figure*}[tH!]
{\begin{small}\[
\begin{array}{l}
  \fbox{$\Binds |= \Binds'$} \qquad
  \inference{\forall \pi \in \Binds'. \pi \in \Binds \vee \pi \in \Sigma}
              {\Binds |= \Binds'}[(TS-Entail)]
\\\\
  \fbox{$\Binds |= \sigma \tygt \ty\;\leadsto\mathsf{f}$} \qquad
  \inference{
    \theta = [\bar \tau/\bar{a}][\bar{m}/\bar{\mvar}] & \Binds |= \theta\Binds_1}
            {\Binds |= (\tscheme{\bar{a}\bar{\mvar}}{\Binds_1}{\ty}) \,\tygt\,
              {\theta\ty} \;\leadsto \evidence{\theta\Binds_1}}[(TS-Inst)]
\\\\
  \fbox{$\prefix\Binds\Gamma v : \ty \;\leadsto\mathsf{e}$} \qquad

  \inference{v\in\aset{x,c} & \Binds |= \Gamma(v) \tygt \ty \;\leadsto \mathsf{f}}
            {\prefix\Binds\Gamma v : \ty \;\leadsto \mathsf{f}\,v}[(TS-XC)]
\\\\
  \inference{\prefix\Binds{\Gamma,x@\ty_1} e : \tapp{m}{\ty_2} \;\leadsto \mathsf{e}}
            {\prefix\Binds\Gamma \slam{x}{e} : \tfun{\ty_1}{\tapp{m}{\ty_2}} \;\leadsto \slam{x}{\mathsf{e}}}[(TS-Lam)]
\\\\
  \fbox{$\prefix\Binds\Gamma e : \tapp{m}\ty \;\leadsto\mathsf{e}$}\quad 

  \inference{\prefix\Binds\Gamma v : \ty \;\leadsto \mathsf{e}}
            {\prefix{\Binds,\morph{\mconst[Id]}{m}}\Gamma v : \tapp{m}{\ty} 
            \;\leadsto \evbind{\mconst[Id]}{\mconst[Id]}{m}{\mathsf{e}}\;(\lambda x.x)}[(TS-V)]
\\\\
  \inference{\prefix{\Binds_1}{\Gamma,x@\ty} v : \ty \;\leadsto \mathsf{e}_1 & 
    (\sigma,\mathsf{e_2}) = \Gen{\Gamma}{\Binds_1 => \ty,\,\mathsf{e_1}} \\
    \prefix\Binds{\Gamma,x@\sigma} e : \tapp{m}{\ty'} \;\leadsto \mathsf{e}_3}
            {\prefix\Binds\Gamma \sletrec{x}{v}{e} : \tapp{m}{\ty'} 
            \;\leadsto \sletrec{x}{\mathsf{e}_2}{\mathsf{e}_3}
            }[(TS-Rec)]
\\\\
  \inference{\prefix{\Binds_1}\Gamma v : \ty \;\leadsto \mathsf{e}_1 &
    (\sigma,\mathsf{e}_2) = \Gen{\Gamma}{\Binds_1 => \ty,\,\mathsf{e}_1} \\
    \prefix\Binds{\Gamma,x@\sigma} e : \tapp{m}\ty' \;\leadsto\mathsf{e}_3}
            {\prefix\Binds\Gamma \slet{x}{v}{e}  : \tapp{m}\ty'
            \;\leadsto \slet{x}{\mathsf{e}_2}{\mathsf{e}_3}
            }[(TS-Let)]
\\\\
  \inference{ \prefix\Binds\Gamma e_1 : \tapp{m_1}{\ty_1} \;\leadsto \mathsf{e}_1 &
    \prefix\Binds{\Gamma,x@\ty_1} e_2 : \tapp{m_2}{\ty_2}  \;\leadsto \mathsf{e}_2  \\
    e_1 \neq v &\Binds |= (m_1,m_2) \rhd {m_3}}
            {\prefix\Binds\Gamma \slet{x}{e_1}{e_2} : \tapp{m_3}{\ty_2}
              \;\leadsto \evbind{m_1}{m_2}{m_3}{\mathsf{e}_1}\,{(\lambda x.\,\mathsf{e}_2)}
            }[(TS-Do)]
\\\\
  \inference{\prefix\Binds\Gamma  e_1 : \tapp{m_1}{(\tfun{\ty_2}{\tapp{m_3}{\ty}})}  \;\leadsto \mathsf{e}_1 &
    \prefix\Binds\Gamma  e_2 : \tapp{m_2}{\ty_2}  \;\leadsto \mathsf{e}_2 \\
    \Binds |= {(m_1,m_4)}\rhd{m_5} &
    \Binds |= {(m_2,m_3)}\rhd{m_4}  }
            {\prefix\Binds\Gamma  \eapp{e_1}{e_2} : \tapp{m_5}{\ty} 
             \;\leadsto \evbind{m_1}{m_4}{m_5}{\mathsf{e}_1}\;{(\evbind{m_2}{m_3}{m_4}{\mathsf{e}_2}})}[(TS-App)]
\\\\
  \inference{\prefix\Binds\Gamma e_1 : \tapp{m_1}\kw{bool} \;\leadsto \mathsf{e}_1 &
    \prefix\Binds\Gamma e_2 : \tapp{m_2}{\ty} \;\leadsto \mathsf{e}_2  \\
    \prefix\Binds\Gamma e_3 : \tapp{m_3}{\ty} \;\leadsto \mathsf{e}_3 &
    \Binds |= \morph{m_2}{m},  \morph{m_3}{m}, {(m_1,m)}\rhd{m'}}
      {\prefix\Binds\Gamma \sif{e_1}{e_2}{e_3} : \tapp{m'}{\ty}
      }[(TS-If)]
       \\{ \;\leadsto \evbind{m_1}{m}{m'}{\mathsf{e}_1}\,{(\lambda b.\,\mathsf{if}\;b\;\mathsf{then}\;\evbind{m_2}{\mconst[Id]}{m}{\mathsf{e}_2}\;(\lambda x.x)\;\mathsf{else}\; \evbind{m_3}{\mconst[Id]}{m}{\mathsf{e}_3}\; (\lambda x. x))}
        }
\iffull
\else
\\\\
\begin{array}{ll}
  \Gen{\Gamma}{\Binds => \ty, \mathsf{e}} 
    & = (\forall (\ftv{\Binds => \ty} \setminus \ftv{\Gamma}). \Binds => \ty,\;\mathsf{abs}(\Binds,\mathsf{e}))\\
  \mathsf{abs}((\tbind{(m_1,m_2)}{m_3},P),\mathsf{e}) &= \lambda\evbind{m_1}{m_2}{m_3}.\,\mathsf{abs}(P,\mathsf{e})\\
  \mathsf{abs}(\cdot,\mathsf{e})                    &= \mathsf{e} \\      
  \evidence{P,\tbind{(m_1,m_2)}{m_3})} &= \lambda f.\,\evidence{P}\,(f\;\evbind{m_1}{m_2}{m_3})\\
  \evidence{\cdot} &= \lambda x.\,x
\end{array}
\fi
  \end{array}\]\end{small}}
  \caption{Syntax-directed type rules for \lang{}, where $\Sigma$
    is an implicit parameter. \iffull See Figure~\ref{fig:extraops} for the definitions
    of \textit{Gen}, \textsf{app}, and \textsf{abs}. \fi}
  \label{fig:ssyntaxrules}
\end{figure*}

\iffull
\begin{figure*}
\begin{small}
\[
\begin{array}{ll}
  \Gen{\Gamma}{\Binds => \ty, \mathsf{e}} 
    & = (\forall (\ftv{\Binds => \ty} \setminus \ftv{\Gamma}). \Binds => \ty,\;\mathsf{abs}(\Binds,\mathsf{e}))\\
  \\
  \mathsf{abs}((\tbind{(m_1,m_2)}{m_3},P),\mathsf{e}) &= \lambda\evbind{m_1}{m_2}{m_3}.\,\mathsf{abs}(P,\mathsf{e})\\
  \mathsf{abs}(\cdot,\mathsf{e})                    &= \mathsf{e} \\      
  \\
  \evidence{P,\tbind{(m_1,m_2)}{m_3})} &= \lambda f.\,\evidence{P}\,(f\;\evbind{m_1}{m_2}{m_3})\\
  \evidence{P,\cdot} &= \lambda x.\,x\\
\end{array}
\]
\end{small}
  \caption{Extra operations for the syntax-directed type rules defined in Figure~\ref{fig:ssyntaxrules}.
  \textit{Gen} returns both a generalized type, and a new expression (using \textsf{abs}) that takes the newly
  abstracted evidence as arguments. Dually, the \textsf{app} operation returns a function that
  applies evidence for an instantiated expression. }
  \label{fig:extraops}
\end{figure*}
\fi

This section defines our declarative type system for \lang{} and
proves that type inference produces principal types, and that
elaborated programs are coherent.

Figure~\ref{fig:ssyntaxrules} gives
\iffull 
and Figure~\ref{fig:extraops} give
\fi 
a syntax-directed type system,
organized into two main judgments. The value-typing judgment
$\prefix{\Binds}{\Gamma} v : \ty \;\leadsto\mathsf{e}$ types a value $v$ in an environment
$\Gamma$ (binding variables $x$ and constants $c$ to type schemes) at
the type $\ty$, provided the constraints $\Binds$ are satisfiable. 
Moreover, it \emph{elaborates} the value $v$ into a lambda term $\mathsf{e}$
that explicitly contains binds, lifts, and evidence passing (as shown
in Section~\ref{sec:ist-example}). However, note that the elaboration is
independent and we can read just the typing rules by igoring the elaborated terms.
The expression-typing judgment $\prefix{\Binds}{\Gamma} e :
\tapp{m}{\ty}\;\leadsto\mathsf{e}$ 
is similar, except that it yields a computation type. Constraint
satisfiability $\Binds |= \Binds'$, defined in the figure, states that
$\Binds'$ is satisfiable under the hypothesis $\Binds$ if $\Binds'
\subseteq \Binds \cup \Sigma$ where we consider
$\pi \in \Sigma$ if and only if $\Sigma 
\theorysays \pi \rew \sfont{b}; \cdot$ (for
some $\sfont{b}$).

The rule (TS-XC) types a variable or constant at an instance of its
type scheme in the environment. The instance relation for type schemes
$\Binds |= \sigma \geq \ty\;\leadsto\mathsf{f}$ is standard: it instantiates the
bound variables, and checks that the abstracted constraints are
entailed by the hypothesis $\Binds$. The elaborated $\mathsf{f}$ term
supplies the instantiated evidence using the \textsf{app} form. The rule (TS-Lam) is
straightforward where the bound variable is given a value
type and the body a computation type.

The rule (TS-V) allows a value $v:\ty$ to be used as an expression
by lifting it to a computation type $\tapp{m}{\ty}$, so long
as there exists a morphism (or unit) from the \ls$Id$ functor to
$m$. The elaborated term uses $\evbind{\mconst[Id]}{\mconst[Id]}{m}$ to lift
explicitly to monad $m$. Note that for evidence we make up names
($\evbind{\mconst[Id]}{\mconst[Id]}{m}$)  based on the constraint ($\morph{\mconst[Id]}{m}$).
This simplifies our presentation but an implementation would
name each constraint explicitly \cite{jones1994improvement}.
We use the name $\evbind{m_1}{\mconst[Id]}{m_2}$ for morphism constraints $\morph{m_1}{m_2}$,
and use $\evbind{m_1}{m_2}{m_3}$ for general bind constraints ${(m_1,m_2)}\rhd{m_3}$.

(TS-Rec) types a recursive let-binding by typing the definition $v$ at
the same (mono-)type as the \ls$letrec$-bound variable $f$. When
typing the body $e$, we generalize the type of $f$ using a standard
generalization function $\Gen{\Gamma}{\Binds => \ty,\;\mathsf{e}}$, which closes
the type relative to $\Gamma$ by generalizing over its free type
variables. However, in constrast to regular generalization, we return
both a generalized type, as well as an elaboration of $\mathsf{e}$ that
takes all generalized constraints as explicit evidence parameters (as
defined by rule $\mathsf{abs}$).
(TS-Let) is similar, although somewhat simpler since there is no
recursion involved.

(TS-Do) is best understood by looking at its elaboration: since we are in a call-by-value setting,
we interpret a \ls$let$-binding as forcing and sequencing two
computations using a single bind where $e_1$ is typed monomorphically.

(TS-App) is similar to (TS-Do), where, again, since we use
call-by-value, in the elaboration we sequence the function and its
argument using two bind operators, and then apply the function.
(TS-If) is also similar, since we sequence the expression $e$ in the
guard with the branches. As usual, we require the
branches to have the same type. This is achieved by generating
morphism constraints, $\morph{m_2}{m}$ and $\morph{m_3}{m}$ to coerce
the type of each branch to a functor $m$ before sequencing it
with the guard expression.

\subsection{Principal types}

\newcommand{\el}[1]{\llbracket #1\rrbracket}
\newcommand{\retTrans}{\textsf{ret}}
\newcommand{\bindTrans}{\textsf{do}}
\newcommand{\appTrans}{\textsf{app}}
\newcommand{\ifTrans}{\textsf{cond}}
\newcommand{\recTrans}{\textsf{rec}}
\newcommand{\prefixOML}[2]{\prefix{#1}{#2}_{\textsc{\tiny OML}}}

\begin{figure}[t]
  \[
  \begin{array}{ll}
    \el{x}^\kstar  &= x \\
    \el{c}^\kstar  &= c \\
    \el{\lambda x.e}^\kstar &= \lambda x. \el{e} \\
    \\
    \el{v} &= \mathtt{ret}\;\el{v}^\kstar \\
    \el{e_1\; e_2} &= \mathtt{app}\; \el{e_1}\; \el{e_2} \\
    \el{\slet{x}{v}{e}} &= \slet{x}{\el{v}^\kstar}{\el{e}} \\
    \el{\slet{x}{e_1}{e_2}} &= \mathtt{do}\;\el{e_1}\; \el{\slam{x}{e_2}}^\kstar \qquad\textrm{(when $e_1 \neq v$)}\\
    \el{\sif{e_1}{e_2}{e_3}} &= \mathtt{cond}\; \el{e_1}\; \slam{()}{\el{e_2}} \; \slam{()}{\el{e_3}} \\
    \el{\sletrec{f}{v}{e}} &= \mathtt{letrec}\; {f = \el{v}^\kstar}~\mathtt{in}~{\el{e}}
  \end{array}
  \]

  \[
  \begin{array}{ll}
    \retTrans  &: \forall\alpha \mvar.\,(\morph{\mId}{\mvar}) => \tfun{\alpha}{\tapp{\mvar}{\alpha}} \\

    \bindTrans &: \forall\alpha \beta \mvar_1 \mvar_2 \mvar.\,(\bind{(\mvar_1,\mvar_2)}{\mvar}) => \tfun{\tapp{\mvar_1}{\alpha}}{(\tfun{\tfun{\alpha}{\tapp{\mvar_2}{\beta}})}{\tapp{\mvar}{\beta}}} \\

    \appTrans &: \forall\alpha \beta \mvar_1 \mvar_2 \mvar_3 \mvar_4 \mvar.\, (\bind{(\mvar_1,\mvar_4)}{\mvar}, \bind{(\mvar_2,\mvar_3)}{\mvar_4}) => \tfun{\tapp{\mvar_1}{(\tfun{\alpha}{\tapp{\mvar_3}{\beta}})}}{\tfun{\tapp{\mvar_2}{\alpha}}{\tapp{\mvar}{\beta}}} \\

    \ifTrans &: \forall\alpha\mvar_1\mvar_2\mvar_3\mvar\mvar'.\, (\morph{\mvar_2}{\mvar}, \morph{\mvar_3}{\mvar}, \bind{(\mvar_1,\mvar)}{\mvar'}) \\
    & \qquad=> \tfun{\tapp{\mvar_1}{\kw{bool}}}
             {\tfun{(\tfun{()}{\tapp{\mvar_2}{\alpha}})}
               {\tfun{(\tfun{()}{\tapp{\mvar_3}{\alpha}})}
                 {\tapp{\mvar'}{\alpha}}}} 
  \end{array}\]
  \caption{Type inference for \lang{} via elaboration to OML}
  \label{fig:xlate-oml}
\end{figure}

The type rules admit principal types, and there exists an efficient
type inference algorithm that finds such types. The way we 
show this is by a translation
of polymonadic terms (and types) to terms (and types) in Overloaded ML
(OML)~\cite{jones1992theory} and prove this translation is sound and 
complete: a polymonadic term is well-typed if and only if its
translated OML term has an equivalent type. 
OML's type inference algorithm is
known to enjoy principal types, so a corollary of our translation is
that principal types exist for our system too.  

We encode terms in our language into OML as shown in
Figure~\ref{fig:xlate-oml}.  We rely on four primitive OML terms that
force the typing of the terms to generate the same constraints as our
type system does: $\retTrans$ for lifting a pure term, 
$\bindTrans$ for typing a do-binding, 
$\appTrans$ for typing an application, 
and $\ifTrans$ for conditionals.  Using these
primitives, we encode values and expressions of our system into OML.

We write $\prefixOML\Binds\Gamma e : \ty$ for a derivation in the
syntax directed inference system of OML (cf. Jones~\cite{jones1992theory},
Fig. 4).  

\begin{theorem}[Encoding to OML is sound and complete]
  \label{thm:oml}
  \strut\\\textbf{Soundness}: Whenever $\prefix{\Binds}{\Gamma} v : \tau$
  we have $\prefixOML{\Binds}{\Gamma} \el{v}^\kstar : \tau$.
  Similarly, whenever $\prefix\Binds\Gamma e : \tapp{m}{\ty}$
  then we have $\prefixOML\Binds\Gamma \el{e} : \tapp{m}{\ty}$.

  \noindent\textbf{Completeness}: Whenever
  $\prefixOML\Binds\Gamma \el{v}^\kstar : \tau$, then we have
  $\prefix\Binds\Gamma v : \tau$. Similarly, whenever
  $\prefixOML\Binds\Gamma \el{e} : \tapp{m}{\ty}$, then we have
  $\prefix\Binds\Gamma e : \tapp{m}{\ty}$.
\end{theorem}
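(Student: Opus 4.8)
The plan is to prove both directions by a straightforward induction, exploiting the fact that the term translation $\el{\cdot}$ of Figure~\ref{fig:xlate-oml} was designed so that each syntax-directed rule of Figure~\ref{fig:ssyntaxrules} is mirrored one-for-one by a use of the corresponding OML primitive: $\retTrans$ for (TS-V), $\bindTrans$ for (TS-Do), $\appTrans$ for (TS-App), and $\ifTrans$ for (TS-If). The qualified type of each of these four combinators is exactly what is needed to force OML's syntax-directed inference system to generate the same constraint premises and the same result type as the matching $\lang$ rule. For \textbf{soundness} I would proceed by simultaneous induction on the two $\lang$ derivations $\prefix{\Binds}{\Gamma} v : \ty$ and $\prefix{\Binds}{\Gamma} e : \tapp{m}{\ty}$; for \textbf{completeness} by induction on the structure of $v$ (resp.\ $e$), inverting the OML derivation of $\el{v}^\kstar$ (resp.\ $\el{e}$), which is rigid because the head of the translated term determines which OML rule applies. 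Since both systems are syntax-directed, the two inductions have matching case structure and no backtracking over rule choices is needed.

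The pure fragment is routine. (TS-XC) corresponds to OML's variable/constant rule: both instantiate the scheme $\Gamma(v)$ and discharge the same abstracted constraints, and here the $\lang$ relation $\Binds \models \sigma \tygt \ty$ and OML's instantiation side-condition coincide because both reduce to checking $\Binds' \subseteq \Binds \cup \Sigma$, with $\pi \in \Sigma$ iff $\Sigma \theorysays \pi \rew \sfont{b}; \cdot$ for some $\sfont{b}$. (TS-Lam) maps directly to OML's abstraction rule using $\el{\slam{x}{e}}^\kstar = \slam{x}{\el{e}}$ and the induction hypothesis on the body. The combinator cases are the heart. In (TS-V) the translation is $\el{v} = \retTrans\;\el{v}^\kstar$; instantiating $\retTrans : \forall\alpha\mvar.\,(\morph{\mId}{\mvar}) => \tfun{\alpha}{\tapp{\mvar}{\alpha}}$ at $\alpha := \ty$, $\mvar := m$ generates exactly the premise $\morph{\mId}{m}$. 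For (TS-Do), (TS-App) and (TS-If) the translations are applications of $\bindTrans$, $\appTrans$, $\ifTrans$ to the translated subterms; a mechanical check shows that unifying each combinator's argument types with the (inductively obtained) types of those subterms forces its polymonadic type variables in just the way that leaves behind precisely the bind and morphism premises of the corresponding $\lang$ rule, with the same result computation type. Reading the same calculation backwards gives the completeness direction of each case: the shape of the translated term pins down the OML rule, its argument positions pin down the subterm types, and the leftover generated constraints are read off as the $\lang$ premises.

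The delicate cases, and the ones I expect to require the most care, are (TS-Let) and (TS-Rec), since these are the only rules that generalize. Here one must check that $\lang$'s $\Gen{\Gamma}{\Binds_1 => \ty}$, which quantifies over $\ftv{\Binds_1 => \ty}\setminus\ftv{\Gamma}$, assigns exactly the scheme that OML's syntax-directed let-rule would, so that the induction hypothesis on the body under the extended environment transfers verbatim; the elaboration machinery ($\mathsf{abs}$ and the evidence-application terms) is irrelevant to the typing statement and can simply be erased throughout the argument. Two further points must be handled uniformly but cause no real trouble: the value/expression split is respected by the translation, which commits $\el{v}^\kstar$ to value position and $\el{v} = \retTrans\;\el{v}^\kstar$ to expression position in exact correspondence with the two $\lang$ judgments and rule (TS-V); and ambiguity of the generalized constraints is irrelevant here, being orthogonal to typability and handled separately by the coherence result (Theorem~\ref{thm:coherence}). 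Finally, since OML's syntax-directed inference is sound and complete for OML and enjoys principal types (Jones~\cite{jones1992theory}), the theorem immediately yields that $\lang$ has principal types computed by an effective inference algorithm.
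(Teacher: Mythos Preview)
Your proposal is correct and follows the same approach as the paper, which also argues by a straightforward induction on the typing derivation, relying on the fact that the \lang\ system uses the same instantiation and generalization relations as OML. The one point the paper singles out that you leave implicit is that the bind-constraint entailment relation $\Binds \models \Binds'$ must be shown to satisfy the monotonicity, transitivity, and closure-under-substitution properties that Jones' OML framework requires of any predicate system; this is what licenses treating bind constraints as ordinary OML qualifiers in the first place, and is worth stating explicitly even though it is easy to verify for the definition $\Binds' \subseteq \Binds \cup \Sigma$.
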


\noindent The proof is by straightforward induction on the typing derivation of
the term. It is important to note that our system uses the same
instantiation and generalization relations as OML which is required
for the induction argument. Moreover, the constraint entailment over
bind constraints also satisfies the monotonicity, transitivity and
closure under substitution properties required by OML.
As a corollary of the above properties, our system admits
principal types via the general-purpose OML type inference algorithm.

\subsection{Ambiguity}

Seeing the previous OML translation, one might think we could
directly translate our programs into Haskell since Haskell uses
OML style type inference.
Unfortunately, in practice, Haskell would
reject many useful programs. In particular, Haskell
rejects as ambiguous any term whose type $\forall
\bar{\alpha}. \Binds => \ty$ includes a variable $\alpha$ that 
occurs free in $\Binds$
but not in $\ty$;\footnote{The
actual ambiguity rule in Haskell is more involved due to functional dependencies
and type families but that does not affect our results.} we
call such type variables \emph{open}.
Haskell, in its generality, must reject such terms since the
instantiation of an open variable can have operational effect,
while at the same time, since the variable does not appear in $\ty$, the
instantiation for it can never be uniquely determined by the context
in which the term is used. A common example is the term 
\lstinline|show . read| with the type \lstinline|(Show a, Read a) => String -> String|,
where \lstinline|a| is open. Depending on the instantiation of \lstinline|a|,
the term may parse and show integers, or doubles, etc.

Rejecting all types that contain open variables works well for type
classes, but it would be unacceptable for \lang. Many simple terms
have principal types with open variables. For example, the term
$\slam{f}{\slam{x}{\sapp{f}{x}}}$ has type $\forall a b \mvar_1
\mvar_2 \mvar_3.$ $((\mathsf{Id},\mvar_1)\rhd\;\mvar_2,
(\mathsf{Id},\mvar_2)\rhd\;\mvar_3)$ $\Rightarrow\;(a \rightarrow
\mvar_1\;b) \rightarrow \alpha \rightarrow \mvar_3\;b$ where type
variable $\mvar_2$ is open. 

In the special case where there is only one polymonadic constructor
available when typing the program, the coherence problem is moot,
e.g., say, if the whole program were to only be typed using only the
\mIST{} polymonad of Section~\ref{sec:ist-example}. However, recall
that polymonads generalize monads and morphisms, for which there can
be coherence issues (as is well known), so polymonads must address
them. As an example, imagine combining our $\mIST$ polymonad (which
generalizes the state monad) with an exception monad $\sfont{Exn}$,
resulting in an $\sfont{ISTExn}$ polymonad.  Then, an improperly coded
bind that composed $\mIST$ with $\sfont{Exn}$ could sometimes reset
the heap, and sometimes not (a similar example is provided by
Filinski~\cite{filinski94representing}).

A major contribution of this paper is that for binds that satisfy the
polymonad laws, we need not reject all types with open
variables. In particular, by appealing to the polymonadic laws, we can
prove that programs with open type variables in bind constraints are
indeed unambiguous. Even if there are many possible instantiations,
the semantics of each instantiation is equivalent, enabling us to
solve polymonadic constraints much more aggressively. This
coherence result is at the essence of making programming with
polymonads practical.

\subsection{Coherence}
\label{sec:coherence}

\mwh{The key to point out here is that you can ignore the type indexes
  since they have no impact on computation, and thus pretend that all
  constructors are unary. Then the argument is as before.}

The main result of this section (Theorem~\ref{thm:coherence})
establishes that for a certain class of polymonads, the ambiguity
check of OML can be weakened to accept more programs while still
ensuring that programs are coherent.  Thus, for this class of
polymonads, programmers can reliably view our syntax-directed system
as a specification without being concerned with the details of how the
type inference algorithm is implemented or how programs are
elaborated.

The proof of Theorem~\ref{thm:coherence} is a little technical---the
following roadmap summarizes the structure of the development.

\newcommand\unamb[3]{\ensuremath{\mathsf{unambiguous}(#1,#2,#3)}}

\begin{itemize}
\item We define the class of \emph{principal} polymonads for which
  unambiguous typing derivations are coherent. All polymonads that we
  know of are principal.

\item Given $\prefix{\Binds}{\Gamma} e : t \rew \tgte$ (with $t \in
  \aset{\ty, \tapp{m}{\ty}}$), the predicate
  $\unamb{\Binds}{\Gamma}{t}$ characterizes when the derivation is
  unambiguous. This notion requires interpreting $\Binds$ as a graph
  $G_\Binds$, and ensuring (roughly) that all open variables in
  $\Binds$ have non-zero in/out-degree in $G_\Binds$.

\item A \emph{solution} $S$ to a constraint graph with respect to a
  polymonad $(\mconstrs, \Sigma)$ is an assignment of ground polymonad
  constructors $\mconst\in\mconstrs$ to the variables in the graph such that
  each instantiated constraint is present in $\Sigma$. We give an
  equivalence relation on solutions such that $S_1 \cong S_2$ if they
  differ only on the assignment to open variables in a manner where
  the composition of binds still computes the same function according
  to the polymonad laws.

\item Finally, given $\prefix{\Binds}{\Gamma} e : t \rew \tgte$
  and $\unamb{\Binds}{\Gamma}{t}$, we prove that all solutions
  to $\Binds$ that agree on the free variables of $\Gamma$ and $t$ are
  in the same equivalence class.
\end{itemize}

While Theorem~\ref{thm:coherence} enables our type system to be used
in practice, this result is not the most powerful theorem one can
imagine. Ideally, one might like a theorem of the form
$\prefix{\Binds}{\Gamma} e : t \rew \tgte$ and
$\prefix{\Binds'}{\Gamma} e : t \rew \tgte'$ implies $\tgte$ is
extensionally equal to $\tgte'$, given that both $\Binds$ and
$\Binds'$ are satisfiable. While we conjecture that this result is
true, a proof of this property out of our reach, at present. There are
at least two difficulties. First, a coherence result of this form is
unknown for qualified type systems in a call-by-value setting. In an
unpublished paper, Jones~\cite{jones93coherencefor} proves a coherence
result for OML, but his techique only applies to call-by-name
programs. Jones also does not consider reasoning about coherence based
on an equational theory for the evidence functions (these functions
correspond to our binds).
So, proving the ideal coherence theorem would require both
generalizing Jones' approach to call-by-value and then extending it
with support for equational reasoning about evidence. In the meantime,
Theorem~\ref{thm:coherence} provides good assurance and lays the
foundation for future work in this direction. \mwh{be careful how we
  refer to this result earlier}


\paragraph*{Defining and analyzing principality.} We introduce a notion of
 principal polymonads that corresponds to Tate's ``principalled
 productoids.'' Informally, in a principal polymonad, if there is more
 than one way to combine pairs of computations in the set $F$ (e.g.,
 $\bind{(M,M')}{M_1}$ and $\bind{(M,M')}{M_2}$), then there must be a
 ``best'' way to combine them. This best way is called the principal
 join of $F$, and all other ways to combine the functors are related
 to the principal join by morphisms. All the polymonadic libraries we
 have encountered so far are principal polymonads. It is worth
 emphasizing that principality does not correspond to functional
 dependency---it is perfectly reasonable to combine $\mconst$ and
 $\mconst'$ in multiple ways, and indeed, for applications like
 sub-effecting, this expressiveness is important. We only require that
 there be an ordering among the choices. In the definition below, we
 take $\downarrow\!\!\mathcal{M}$ to be set of ground instances of all
 constructors in $\mathcal{M}$.  

\begin{definition}[Principal polymonad]
  A polymonad $(\mathcal{M}, \Sigma)$ is a \emph{principal polymonad}
  if and only if for any set $F \subseteq \downarrow\!\!\mathcal{M}^2$, and any
  $\aset{\mconst_1, \mconst_2}\subseteq\downarrow\!\!\mathcal{M}$ such
  $\aset{\tbind{(\mconst, \mconst')}{\mconst_1} \mid (\mconst,\mconst') \in F} \subseteq
  \Sigma$ and $\aset{\tbind{(\mconst, \mconst')}{\mconst_2} \mid (\mconst,\mconst') \in F}
  \subseteq \Sigma$, then there exists $\hat\mconst \in \downarrow\!\!\mathcal{M}$ such
  that $\aset{\morph{\hat\mconst}{\mconst_1}, \morph{\hat\mconst}{\mconst_2}}
  \subseteq \Sigma$, and $\aset{\tbind{(\mconst,\mconst')}{\hat\mconst} \mid (\mconst,\mconst') \in
    F} \subseteq \Sigma$.  We call $\hat\mconst$ the principal join of $F$
  and write it as $\bigsqcup F$
\end{definition}


\newcommand\subscript[1]{{\mbox{\textit{\tiny #1}}}}

\begin{definition}[Graph-view of a constraint-bag $\Binds$]
  A graph-view $G_\Binds=(V,A,E_\rhd,E_{eq})$ of a constraint-bag
  $\Binds$ is a graph consisting of a set of vertices $V$, a vertex
  assignment $A: V -> m$, a set of directed edges $E_\rhd$, and a
  set of undirected edges $E_{eq}$, where:

  \begin{itemize}
  \item $V = \aset{\pi.0, \pi.1, \pi.2 \mid \pi \in \Binds}$, i.e.,
    each constraint contributes three vertices.

  \item $A(\pi.i) = m_i$ when $\pi = \tbind{(m_0,m_1)}{m_2}$, for all $\pi.i \in V$

  \item  $E_\rhd  =    \aset{(\pi.0,\pi.2), (\pi.1, \pi.2) \mid \pi \in \Binds}$

  \item $E_{eq}    = \aset{(v,v') \mid v,v' \in V ~\wedge~v\neq v'\wedge \exists \mvar.\mvar=A(v)=A(v')}$ 
  \end{itemize}
\end{definition}

\noindent\textbf{Notation} We use $v$ in this section to stand for a
graph vertex, rather than a value in a program. We also make use of a
pictorial notation for graph views, distinguishing the two flavors of
edges in a graph. Each constraint $\pi \in \Binds$ induces two edges
in $E_\rhd$. These edges are drawn with solid lines, with a triangle
for orientation. Unification constraints arise from correlated
variable occurrences in multiple constraints---we
\begin{wrapfigure}{r}{3.5cm}
\vspace{-5ex}
\begin{tiny}
  \[\nqquad
    \xymatrix@C=1em@R=0.5em{
      m_1 \ar@{-}[dr]   &             &                   &  m_2\ar@{-}[dr]  \\
      \mvar \ar@{-}[r]  & \rhd\ar@{-}[r]  & \mvar'\ar@{:}[r] & \mvar'\ar@{-}[r] & \rhd\ar@{-}[r] & \mvar\ar@/^/@{:}[lllll] 
    }
  \]
\end{tiny}
\vspace{-7ex}
\end{wrapfigure} 
depict these with double dotted lines. For
example, the pair of constraints $\tbind{(m_1, \mvar)}{\mvar'},
\tbind{(m_2,\mvar')}{\mvar}$ contributes four unification edges, two
for $\mvar$ and two for $\mvar'$. We show its graph view alongside.

Unification constraints reflect the dataflow in a program. Referring
back to Figure~\ref{fig:ssyntaxrules}, in a principal derivation using
(TS-App), correlated occurrences of unification variables for $m_4$ in
the constraints indicate how the two binds operators compose. The
following definition captures this dataflow and shows how to interpret
the composition of bind constraints using unification edges as a
lambda term (in the expected way).\footnote{Note, for the purposes of
  our coherence argument, unification constraints between value-type
  variables $a$ are irrelevant. Such variables may occur in two kinds
  of contexts. First, they may constrain some value type in the
  program, but these do not depend on the solutions to polymonadic
  constraints. Second, they may constrain some index of a polymonadic
  constructor; but, as mentioned previously, these indices are phantom
  and do not influence the semantics of elaborated terms.}

\begin{definition}[Functional view of a flow edge]
Given a constraint graph $G = (V, A, E_\rhd, E_{eq})$, an edge
$\eta=(\pi.2, \pi'.i) \in E_{eq}$, where $i \in \aset{0,1}$ and
$\pi\neq\pi'$ is called a \emph{flow edge}. The flow edge $\eta$ has a
functional interpretation $F_G(\eta)$ defined as follows:\\[-2ex]
\[\begin{array}{lcl}
$If$~~i=0,~~
  F_G(\eta) & = & \lambda (x@A(\pi.0)~a)~(y@a->A(\pi.1)~b)~(z@b->A(\pi'.1)~c).\\
          &   & ~~\sfont{bind}_{A(\pi'.0),A(\pi'.1),A(\pi'.2)}(\sfont{bind}_{A(\pi.0),A(\pi.1),A(\pi.2)}~x~y)~z\\

$If$~~i=1,~~
  F_G(\eta) & = & \lambda (x@A(\pi'.0)~a)~(y@a -> A(\pi.0)~b)~(z@b->A(\pi.1)~c).\\
          &   & ~~\sfont{bind}_{A(\pi'.0),A(\pi'.1),A(\pi'.2)}~x~(\lambda a.\sfont{bind}_{A(\pi.0),A(\pi.1),A(\pi.2)}~(y~a)~z)
\end{array}\]
\end{definition}

We can now define our ambiguity check---a graph is unambiguous if it
contains a sub-graph that has no cyclic dataflows, and where open
variables only occur as intermediate variables in a sequence of binds. 

\begin{definition}[Unambiguous constraints]
\label{def:unambiguous}
Given $G_\Binds=(V,A,E_\rhd,E_{eq})$, the predicate $\unamb{\Binds}{\Gamma}{t}$
holds if and only if there exists $E_{eq}' \subseteq
E_{eq}$, such that in the graph $G'=(V,A,E_\rhd,E_{eq}')$ all of the
following are true.

\begin{enumerate}
\item For all $\pi \in \Binds$, there is no path from $\pi.2$ to
  $\pi.0$ or $\pi.1$.

\item For all $v \in V$, if $A(v)\in \ftv{\Binds} \setminus
  \ftv{\Gamma,t}$, then there exists a flow edge that connects to
  $v$.
\end{enumerate}

\noindent We call $G'$ a \emph{core} of $G_\Binds$.
\end{definition}

\begin{definition}[Solution to a constraint graph]
For a polymonadic signature $(\mathcal{M}, \Sigma)$, a solution
to a constraint graph $G=(V, A, E_\rhd, E_{eq})$, is a vertex assignment
$S : V -> \mathcal{M}$ such that all of the following are true.

\begin{enumerate}
\item For all $v \in V$, if $A(v) \in \mathcal{M}$ then $S(v)=A(v)$

\item For all $(v_1,v_2) \in E_{eq}$, $S(v_1) = S(v_2)$.

\item For all $\aset{(\pi.0,\pi.2), (\pi.1,\pi.2)} \subseteq E_\rhd$, 
  $\tbind{(S(\pi.0),S(\pi.1))}{S(\pi.2)} \in \Sigma$.
\end{enumerate}

\noindent We say that two solutions $S_1$ and $S_2$ to $G$ \emph{agree
  on} $\mvar$ if for all vertices $v \in V$ such that $A(v) = \mvar$,
$S_1(v) = S_2(v)$.
\end{definition}

Now we define $\cong_R$, a notion of equivalence of two solutions
which captures the idea that the 
differences in the solutions are only to the internal open variables
while not impacting the overall function computed by the binds in a
constraint. It is easy to check that $\cong_R$ is an equivalence
relation.

\begin{definition}[Equivalence of solutions]
Given a polymonad $(\mathcal{M},\Sigma)$ and constraint
graph $G=(V,A,$ $E_\rhd,E_{eq})$, two solutions $S_1$ and $S_2$ to $G$
are equivalent with respect to a set of variables $R$ (denoted $S_1
\cong_R S_2$) if and only if $S_1$ and $S_2$ agree on all $\mvar
  \in R$ and for
each vertex $v \in V$ such that $S_1(v) \neq S_2(v)$ for all flow
edges $\eta$ incident on $v$, $F_{G_1}(\eta) = F_{G_2}(\eta)$,
where $G_i=(V, S_i, E_\rhd, E_{eq})$.
\end{definition}

\begin{theorem}[Coherence]
  \label{thm:coherence}
  For all principal polymonads, derivations $\Binds|\Gamma |- e : t
  \rew \tgte$ such that \\ $\unamb{\Binds}{\Gamma}{t}$, and for any two
  solutions $S$ and $S'$ to $G_\Binds$ that agree on
  $R=ftv(\Gamma,t)$, we have $S \cong_R S'$.
\end{theorem}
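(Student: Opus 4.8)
The plan is to reduce the theorem to a purely equational fact about composites of \textsf{bind} operators and then discharge that fact from the polymonad laws together with principality. First I would strip the type indices: since indices are stipulated to be phantom, each operator $\sfont{bind}_{M\bar\tau_1,N\bar\tau_2,P\bar\tau_3}$ depends only on the underlying unary constructors $M,N,P$, so a solution $S$ of $G_\Binds$ induces an assignment of unary constructors, and two solutions agreeing on $R=\ftv{\Gamma,t}$ induce unary assignments agreeing on the image of $R$ (the paper's remark that unification between value-type variables is irrelevant is what lets us drop index-level unification here). Thus it suffices to prove the statement when all constructors are unary, which I assume henceforth. Next, fix solutions $S_1,S_2$ agreeing on $R$. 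A vertex $v$ with $S_1(v)\neq S_2(v)$ cannot have $A(v)\in\mconstrs$ (solution clause~(1)) nor $A(v)\in R$ (the solutions agree there), so $A(v)$ is a variable free in $\Binds$ but not in $\ftv{\Gamma,t}$ --- i.e. \emph{open} --- and clause~(2) of $\unamb{\Binds}{\Gamma}{t}$, applied to the core $G'=(V,A,E_\rhd,E_{eq}')$, supplies a core flow edge incident on $v$. Every flow edge $\eta=(\pi.2,\pi'.i)$ has $A(\pi.2)=A(\pi'.i)$ and $F_{G_j}(\eta)$ is exactly the two-\textsf{bind} composite named by $\pi$ and $\pi'$ with its constructors instantiated through $S_j$, so the theorem becomes: for every flow edge touching a changed vertex, the composites built from $S_1$ and from $S_2$ denote the same function.

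Clause~(1) of $\unamb{\Binds}{\Gamma}{t}$ forbids a path in $G'$ from any $\pi.2$ to $\pi.0$ or $\pi.1$, so the changed vertices together with the core flow edges form a DAG: an open \emph{input} position is fed by a flow edge and an open \emph{output} position feeds one, but no data loop is possible. Hence each maximal block of constraints connected through core flow edges has \emph{all} of its source-input and sink-output positions assigned fixed (non-open) constructors --- an open such position would, by clause~(2), be incident to a further core flow edge, contradicting source/sink-ness. So the function a block computes (by composing its binds along the flow edges, via the functional-view construction) has a type that is literally the same under $S_1$ and under $S_2$. I would then prove, by induction on such a block taken in a topological order of its flow edges, that this whole-block function is independent of the solution, and the per-edge equalities $F_{G_1}(\eta)=F_{G_2}(\eta)$ fall out as restrictions. (Working at block granularity is also what sidesteps the fact that $F_{G_1}(\eta)$ and $F_{G_2}(\eta)$ need not even have the same type in isolation when a boundary position of $\eta$ is itself a changed vertex.)

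The engine of the induction is a factoring lemma powered by principality. Consider the base situation of a single flow edge whose intermediate variable is assigned $P_1$ by $S_1$ and $P_2$ by $S_2$, with the other four positions carrying fixed constructors $M,N,R,T$. Applying the principality condition to $F=\{(M,N)\}$ with join targets $P_1$ and $P_2$ yields a principal join $\hat P$ together with $\hat b:\tbind{(M,N)}{\hat P}$, $\ell_1:\morph{\hat P}{P_1}$, $\ell_2:\morph{\hat P}{P_2}$ in $\Sigma$. A short calculation from the \textbf{Associativity} law --- with $b_1=\hat b$, $b_2=\ell_j$ read as the bind $\tbind{(\hat P,\mId)}{P_j}$, $b_3$ a \textbf{Functor} map on $N$, and $b_4$ \emph{any} bind of shape $\tbind{(M,N)}{P_j}$ --- shows that every such $b_4$ equals $\lambda m\,f.\ \ell_j\,(\hat b\;m\;f)$. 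Likewise \textbf{Closure} produces some $\tilde b:\tbind{(\hat P,R)}{T}$ (from $\tbind{(P_j,R)}{T}$, $\ell_j$, and \textbf{Functor} maps on $R$ and $T$), and a second use of \textbf{Associativity} --- collapsing the resulting $\tbind{(\mId,R)}{R}$ bind into reverse-apply by the \textbf{Paired morphisms} coherence plus \textbf{Functor} --- shows that every bind $\tbind{(P_j,R)}{T}$ post-composed with $\ell_j$ equals this same $\tilde b$. Hence both composites equal $\lambda x\,y\,z.\ \tilde b\,(\hat b\;x\;y)\;z$, which mentions neither $P_1$ nor $P_2$; the dual statement for the continuation-side composite (the $i=1$ case) is identical. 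The induction threads this lemma along a block: at each flow edge the block's accumulated function is rewritten, via the lemma and \textbf{Associativity}/\textbf{Diamond}, into a normal form depending only on the block's fixed boundary, which is the same under $S_1$ and $S_2$.

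The main obstacle is the block-level induction: making "maximal flow-connected block" precise and showing that the factoring lemma --- clean only when a single intermediate variable varies against an otherwise fixed frame --- composes correctly when several open variables vary at once and are chained by flow edges. Acyclicity of the core is exactly what makes this possible; without clause~(1) of $\unamb{\Binds}{\Gamma}{t}$ one could build a composite in which an effect flows around a loop, and distinct solutions would genuinely compute distinct functions (the state-with-exceptions pathology noted in Section~\ref{sec:coherence}). Everything else --- the two reductions and the equational manipulations inside the factoring lemma --- is routine given Definition~\ref{def:polymonad} and the principality condition.
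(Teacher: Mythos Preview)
Your proposal assembles the right ingredients --- principality, \textbf{Associativity}, \textbf{Closure}, acyclicity of the core, a topological ordering --- and your single-edge factoring lemma is essentially the paper's ``coherence of bind pairs'' observation. But the induction architecture differs from the paper's, and the obstacle you yourself flag (``the main obstacle is the block-level induction'') is exactly where the paper's proof supplies an idea you are missing.

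The paper does not attempt a block-level argument or try to prove per-edge equality between $S$ and $S'$ directly. Instead it runs an induction on $|U_{S,S'}|$, the number of vertices where $S$ and $S'$ disagree, and at each step constructs \emph{intermediate solutions}. Concretely: topologically sort the core; take the earliest (in flow order) unification-connected group $I$ of differing vertices, so that every input to a constraint whose output lies in $I$ already carries the same ground constructor in $S$ and in $S'$. Let $\mconst[J]$ be the principal join of the full set of those incoming pairs --- this is the essential use of principality, with $F$ typically non-singleton, whereas your base case only invokes it for $F=\{(M,N)\}$. Define $S_1$ (resp.\ $S_1'$) to agree with $S$ (resp.\ $S'$) everywhere except on $I$, where both assign $\mconst[J]$; \textbf{Closure} guarantees these remain solutions. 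Now $S \cong_R S_1$ and $S' \cong_R S_1'$ are well-posed per-edge equalities --- because $S$ and $S_1$ differ only at $I$, every flow edge touching $I$ has all its other positions fixed --- and they follow from \textbf{Associativity} exactly as in your factoring lemma. Since $S_1$ and $S_1'$ now agree on $I$, $|U_{S_1,S_1'}| < |U_{S,S'}|$, so the induction hypothesis gives $S_1 \cong_R S_1'$, and transitivity of $\cong_R$ finishes.

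What this buys over your plan: you never compare $F_{G_1}(\eta)$ with $F_{G_2}(\eta)$ when their types disagree, because at each link of the transitivity chain only one group varies against a fixed frame; you never need to define a ``whole-block function'' or thread your lemma through a chain of simultaneously varying opens; and principality is invoked at the right granularity (over all incoming pairs to a unified group), which is precisely what lets the same $\mconst[J]$ serve as a common replacement in both $S_1$ and $S_1'$.
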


\noindent (Sketch; full version in appendix) The main idea is to show that all
solutions in the core of $G_\Binds$ are in the same equivalence class
(the solutions to the core include $S$ and $S'$). The proof proceeds
by induction on the number of vertices at which $S$ and $S'$
differ. For the main induction step, we take vertices in 
\begin{wrapfigure}{r}{5cm}
    \[\begin{tiny}\nquad
     \begin{array}{c}
       \underline{S/S'} \\
      \xymatrix@C=.1em@R=0.75em{
                                                                     & \mconst_1/\mconst_1'                           & \ldots & \mconst_2/\mconst_2' \\
                                                                     &   \tup\ar[u]                                  & \ldots\tup\ar[u]\ar@{-}[d]\ldots  &\tup\ar[u] \\
                                      \mconst_3/\mconst_3'\ar@{-}[ru] &  \mconst[A]/\mconst[B]\ar@{-}[u]\ar@{:}[r]\ar@{:}[dr]   & \ldots    & \ar@{:}[dl]\ar@{:}[l]\mconst[A]/\mconst[B]\ar@{-}[u] & \mconst_4/\mconst_4'\ar@{-}[ul] \\
                                                                     &   \eta_1\ar@{:}[u]                                        & \ldots\eta\ldots   & \eta_k\ar@{:}[u] \\
                                                                     & \mconst[A]/\mconst[B]\ar@{:}[u]\ar@{:}[r]\ar@{:}[ur]      & \ldots & \ar@{:}[ul]\ar@{:}[l]\mconst[A]/\mconst[B]\ar@{:}[u]        \\
                                                                     &   \tup\ar[u]                                    & \ldots\tup\ar[u]\ar@{-}[d]\ldots &  \tup\ar[u]                     \\
                                     \mconst_5\ar@{-}[ur]            &  \mconst_6\ar@{-}[u]                           & \ldots &   \mconst_7\ar@{-}[u]                        &  \mconst_8\ar@{-}[ul] \\
      }
    \end{array}
    \end{tiny}\]
\end{wrapfigure}
topological order, considering the least (in the order) set of vertices $Q$, all
related by unification constraints, and whose assignment in $S$ is
$\mconst[A]$ and in $S'$ is $\mconst[B]$, for some
$\mconst[A]\neq\mconst[B]$. The vertices in $Q$ are shown in the graph
alongside, all connected to each other by double dotted lines
(unification constraints), and their neighborhood is shown as
well. Since vertices are considered in topological order, all the
vertices below $Q$ in the graph have the same assignment in $S$ and in
$S'$. We build solutions $S_1$ and $S_1'$ from $S$ and $S'$
respectively, that instead assign the principal join $\mconst[J]=\bigsqcup \aset{(\mconst_5,\mconst_6),\ldots,(\mconst_7,\mconst_8)}$ to
the vertices in $Q$, where $S_1 \cong_R S_1'$ by the induction
hypothesis. Finally, we prove $S \cong_R S_1$ and $S' \cong_R S_1'$
by showing that the functional interpretation of each of the flow
edges $\eta_i$ are equal according to the polymonad laws, and conclude
$S \cong_R S'$ by transitivity.


\section{Simplification and solving} \label{sec:solve} 



\renewcommand{\dom}[1]{\textsf{dom}(#1)}
\newcommand{\transupperbounds}[2]{\mbox{\textit{trans-up-bnd}}_{#1}(#2)}

Before running a program, we must solve the constraints produced
during type inference, and apply the appropriate evidence for these
constraints in the elaborated program. We also perform
\emph{simplification} on constraints prior to generalization to make
types easier to read, but without compromising their utility. 

A simple syntactic transformation on constraints can make inferred
types easier to read. For example, we can hide duplicate constraints,
identity morphisms (which are trivially satisfiable), and constraints
that are entailed by the signature.
\iffull
Formally, we can define the function \hide{P} to do this, as follows:
\begin{small}
\[\nquad\begin{array}{lclr}
\hide{P,\pi,P'} & = & \hide{P,P'} & \quad\mbox{if}~\pi\in P,P'~\vee~\pi=\morph{m}{m}~\vee~|= \pi\\
\hide{P}        & = & P           & \quad\mbox{otherwise}
\end{array}\]
\end{small}

Syntactically, given a scheme $\forall \bar\nu.\Binds => \ty$, we can
simply show the type $\forall\bar\nu. \hide{\Binds} => \ty$ to the
programmer. Formally, however, the type scheme is unchanged since
simply removing constraints from the type scheme changes our
evidence-passing elaboration. 

\fi
More substantially, we can find instantiations for open variables in a
constraint set before generalizing a type (and at the top-level,
before running a program). To do this, we introduce below a modified
version of (TS-Let) (from Figure~\ref{fig:ssyntaxrules}); a similar
modification is possible for (TS-Rec).  

\begin{small}
\[
   \inference{\prefix{\Binds_1}\Gamma v : \ty  \rew \tgte_1 &
              \bar\mvar,\bar{a} = \ftv{\Binds_1 => \ty} \setminus \ftv{\Gamma} \\
               \Binds_1 \simp[\bar\mvar \setminus \ftv{\ty}] \theta
               &
               (\sigma,\tgte_2) = \Gen{\Gamma}{\theta\Binds_1 =>
                 \ty, \tgte_1} &
              \prefix{\Binds}{\Gamma,x@\sigma} e : \tapp{m}\ty' \rew \tgte_3}
             {\prefix{\Binds}\Gamma \slet{x}{v}{e}  : \tapp{m}\ty' \rew \slet{x}{\tgte_2}{\tgte_3}}
\]
\end{small}

This rule employs the judgment $\Binds \simp \theta$, defined in
Figure~\ref{fig:decl-solving}, to simplify constraints by eliminating
some open variables in $\Binds$ (via the substitution $\theta$) before
type generalization. There are three main rules in the judgment,
(S-$\Uparrow$), (S-$\Downarrow$) and (S-$\sqcup$), while the last two
simply take the transitive closure. 

\begin{figure}[t!]
\[\small
\begin{array}{c}
\inference[S-$\Uparrow$]
          {\pi = \bind{(\mconst[Id],m)}{\mvar} \;\vee\; \pi =
            \bind{(m,\mconst[Id])}{\mvar} \\ \mvar \in \bar\mvar &
    \flowsFrom{\mvar}{\Binds,\Binds'} \neq \{\} \\
  \flowsTo{\mvar}{\Binds,\Binds'} =\{\} }
          {\Binds,\pi,\Binds' \simp \mvar \mapsto m}

\qquad

\inference[S-$\Downarrow$]
          {\pi = \bind{(\mconst[Id],\mvar)}{m} \;\vee\; \pi =
            \bind{(\mvar,\mconst[Id])}{m} \\ \mvar \in \bar\mvar &
    \flowsFrom{\mvar}{\Binds,\Binds'} = \{\} \\  \flowsTo{\mvar}{\Binds,\Binds'} \neq \{\} }
          {\Binds,\pi,\Binds' \simp \mvar \mapsto m}

\\\\
\inference[S-$\sqcup$]
          { F = \flowsTo{\mvar}{\Binds} \\ m \in F \Rightarrow m =
            \gm\\ \text{for some $\gm$} }
          {\Binds \simp \mvar \mapsto \bigsqcup F}

\qquad

\inference{\Binds \simp \theta \\ \theta\Binds \simp \theta'}
          {\Binds \simp \theta'\theta}

\qquad

\inference{} {\Binds \simp \cdot}
\end{array}\]
\[\small
\text{where~~}
\begin{array}{lcl} 
  \flowsTo{\mvar}{\Binds}   & = &  \aset{\,(m_1,m_2) \mid 
    \bind{(m_1,m_2)}{\mvar} \in \Binds \,} \\
  \flowsFrom{\mvar}{\Binds} & = &  \aset{\,m \mid \exists m'.\;~ \pi \in
    \Binds~ \wedge~ (\pi = \bind{(\mvar,m')}{m}~ \vee~ \pi = \bind{(m',\mvar)}{m})\,}  \\
\end{array}
\]
\caption{Eliminating open variables in constraints}
\label{fig:decl-solving}
\end{figure}

Rule (S-$\Uparrow$) solves monad variable $\mvar$ with monad $m$ if we
have a constraint $\pi =\bind{(\mconst[Id], m)}{\mvar}$, where the
only edges directed inwards to $\mvar$ are from $\mconst[Id]$ and $m$,
although there may be many out-edges from $\mvar$. (The case where
$\pi=\bind{(m,\mconst[Id])}{\mvar}$ is symmetric.) Such a constraint
can always be solved without loss of generality using an identity
morphism, which, by the polymonad laws is guaranteed to
exist. Moreover, by the closure law, any solution that chooses
$\mvar=m'$, for some $m'\neq m$ could just as well have chosen
$\mvar=m$. Thus, this rule does not impact solvability of the
costraints. Rule S-$\Downarrow$ follows similar reasoning in the
reverse direction.
Finally, we the rule (S-$\sqcup$) exploits the properties
of a principal polymonad. Here we have a variable $\mvar$ such that
all its in-edges are from pairs of ground constructors $\gm_i$, so we
can simply apply the join function to compute a solution for
$\mvar$. For a principal polymonad, if such a
solution exists, this simplification does not impact solvability of
the rest of the constraint graph.

\paragraph*{Example.} Recall the information flow example we gave in
Section~\ref{sec:ist-example}, in Figure~\ref{fig:ist}. Its principal type is the following,
which is hardly readable:
\[\small\begin{array}{l@{~}l}
\multicolumn{2}{l}{\forall \bar\mvar_i, a_1, a_2. \Binds_0 =>
 \lfont{\intref\;a_1 \rightarrow \intref\;a_2 \rightarrow
   \mvar_{27}\;()}} \\
\text{where } \Binds_0 = &
\bind{(\mId,\mvar_{3})}{\mvar_{2}}, \bind{(\mId,\mIST\; \Hi\; a_2)}{\mvar_{3}}, \bind{(\mvar_{26},\mId)}{\mvar_{4}},
\bind{(\mId,\mId)}{\mvar_{4}}, \\
& \bind{(\mvar_{8},\mvar_{4})}{\mvar_{6}}, \bind{(\mId,\mvar_{9})}{\mvar_{8}}, \bind{(\mId,\mId)}{\mvar_{9}},
\bind{(\mvar_{11},\mvar_{25})}{\mvar_{26}}, \\
& \bind{(\mId,\mvar_{12})}{\mvar_{11}}, \bind{(\mId,\mIST\; \Hi\;
  a_1)}{\mvar_{12}}, \bind{(\mvar_{17},\mvar_{23})}{\mvar_{25}},
\bind{(\mvar_{14},\mvar_{18})}{\mvar_{17}}, \\
& \bind{(\mId,\mId)}{\mvar_{18}}, \bind{(\mId,\mvar_{15})}{\mvar_{14}},
\bind{(\mId,\mId)}{\mvar_{15}},
\bind{(\mvar_{20},\mvar_{24})}{\mvar_{23}}, \\
& \bind{(\mId,\mIST\; a_1\; \Lo)}{\mvar_{24}}, \bind{(\mId,\mvar_{21})}{\mvar_{20}},
\bind{(\mId,\mId)}{\mvar_{21}}.
\end{array}\]


After applying (S-$\Uparrow$) and (S-$\Downarrow$) several times, and
then hiding redundant constraints, we simplify $\Binds_0$ to $\Binds$
which contains only three constraints. If we had fixed $a_1$ and $a_2$
(the labels of the function parameters) to $\Hi$ and $\Lo$,
respectively, we could do even better. The three constraints would be 
$\bind{(\mIST\,\Hi\,\Lo,\mvar_{6})}{\mvar_{27}},
\bind{(\mId,\mId)}{\mvar_6},\bind{(\mIST\,\Hi\,\Hi,\mIST\,\Hi\,\Lo)}{\mvar_{6}}$.
Then, applying (S-$\sqcup$) to $\mvar_6$ we would get $\mvar_{6}
\mapsto \mIST\,\Hi\,\Hi$, which when applied to the other constraints
leaves only $\bind{(\mIST\,\Hi\,\Lo,\mIST\,\Hi\,\Hi)}{\mvar_{27}}$,
which cannot be simplified further, since $\mvar_{27}$ appears in the
result type.

Pleasingly, this process yields a simpler type that can be used in the
same contexts as the original principal type, so we are not
compromising the generality of the code by simplifying its type.

\begin{lemma}[Simplification improves types] 
\label{lem:simplification} 
For a principal polymonad, given $\sigma$ and $\sigma'$ where $\sigma$ is $\forall
\bar{a}\bar{\mvar}. \Binds => \ty$ and $\sigma'$ is an \emph{improvement} of
$\sigma$, having form $\forall \bar{a'}\bar{\mvar'}. \theta\Binds => \ty$ where $\Binds
\simp \theta$ and $\bar{a'}\bar{\mvar'} = (\bar{a}\bar{\mvar}) - dom(\theta)$.  Then
for all $\Binds'', \Gamma, x, e, m, \tau$, if
$\prefix{\Binds''}{\Gamma,x@\sigma} e : m\,\tau$ such that $|=
\Binds''$ then there exists some $\Binds'''$ such that
$\prefix{\Binds'''}{\Gamma,x@\sigma'} e : m\,\tau$ and $|= \Binds'''$.
\end{lemma}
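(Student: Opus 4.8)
The plan is to reduce the claim to the soundness of each individual simplification step, and then induct on the length of the derivation $\Binds \simp \theta$. Unfolding the definition of $\simp$, $\theta$ is a composition $\theta_n \cdots \theta_1$ of single-variable substitutions, each produced by one application of (S-$\Uparrow$), (S-$\Downarrow$) or (S-$\sqcup$) (the transitive-closure rule just chains these, and the empty rule is trivial). So it suffices to prove the lemma for $\theta = [\mvar \mapsto m]$ a single such step, and then compose. For the single-step case, the heart of the matter is: given a typing $\prefix{\Binds''}{\Gamma,x@\sigma} e : m'\,\tau$ with $\sigma = \forall\bar a\bar\mvar.\,\Binds => \ty$ and $|= \Binds''$, I must produce $\Binds'''$ with $\prefix{\Binds'''}{\Gamma,x@\sigma'} e : m'\,\tau$ and $|=\Binds'''$, where $\sigma' = \forall\bar a'\bar\mvar'.\,\theta\Binds => \ty$.

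First I would observe that $x$ is used in the derivation of $e$ only through (TS-XC), i.e. at instances of its type scheme, via the relation $\Binds'' |= \sigma \tygt \ty_j \rew \mathsf{f}_j$ at finitely many occurrences. By (TS-Inst), each such instance picks a substitution $\psi_j = [\bar\tau_j/\bar a][\bar m_j/\bar\mvar]$ with $\Binds'' |= \psi_j\Binds$. The key algebraic fact I need is that, \emph{whenever $\psi_j\Binds$ is entailed by a satisfiable $\Binds''$, one can choose the instantiation of the eliminated variable $\mvar$ so that $\psi_j$ factors through $\theta$} --- concretely, there is $\psi_j'$ with $\psi_j\Binds = \psi_j'(\theta\Binds)$ and $\Binds'' |= \psi_j'(\theta\Binds)$. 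This is exactly the content of the soundness arguments sketched in the prose after Figure~\ref{fig:decl-solving}: for (S-$\Uparrow$)/(S-$\Downarrow$), the Closure (resp. Functor/Paired morphisms) law guarantees that any value chosen for $\mvar$ can be replaced by $m$ without affecting satisfiability of the remaining constraints; for (S-$\sqcup$), principality gives the join $\bigsqcup F$ as a value for $\mvar$ that is below every legal choice via morphisms, and the Closure law propagates this through the out-edges of $\mvar$. Having this, I take $\Binds'''$ to be $\Binds''$ itself (the hypothesis set is unchanged; only the scheme on $x$ is improved), re-derive each use of $x$ via (TS-XC) against $\sigma'$ using the factored $\psi_j'$, and leave the rest of the derivation structurally identical. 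The side condition $\bar a'\bar\mvar' = (\bar a\bar\mvar) - \dom(\theta)$ ensures $\theta$ only removes variables not free in $\Gamma$ (they were generalizable in $\sigma$), so generalization in $\sigma'$ is well-formed and $\ty$ is unchanged.

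The main obstacle is the algebraic factoring fact above, and in particular making precise in what sense the eliminated variable's solution can always be "lowered" to $\theta(\mvar)$. For (S-$\Uparrow$) the subtlety is that $\mvar$ may have many out-edges: I must check that replacing $\mvar$'s solution by $m$ via the guaranteed identity morphism still lets each downstream bind constraint $\bind{(\mvar,m'')}{m'''}$ (or $\bind{(m'',\mvar)}{m'''}$) be satisfied, which is precisely where Closure is invoked; the hypothesis $\flowsTo{\mvar}{\Binds,\Binds'} = \{\}$ is what makes the replacement lossless in the in-direction. For (S-$\sqcup$) I lean on the Principal polymonad definition: $F = \flowsTo{\mvar}{\Binds}$ consists of pairs of ground constructors, $\bigsqcup F$ exists, and there are morphisms $\morph{\bigsqcup F}{\mconst}$ to every competing solution $\mconst$, again combined with Closure to push through $\mvar$'s out-edges. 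I would package these as a small auxiliary claim ("single-step simplification preserves satisfiability and admits factored instantiations") proved by cases on which rule fired, citing the relevant polymonad law in each case, and then the lemma follows by the induction on $\theta$ and the bookkeeping with (TS-XC) described above. Note this lemma is purely about typability, not about elaborated terms, so no appeal to Theorem~\ref{thm:coherence} is needed here --- coherence is what separately justifies that the \emph{elaboration} attached to $\sigma'$ behaves the same, but that is outside the statement.
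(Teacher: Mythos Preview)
Your plan is essentially the paper's proof, just organised differently and with more detail. The paper inducts directly on the typing derivation $\prefix{\Binds''}{\Gamma,x@\sigma} e : m\,\tau$, noting that every case except (TS-XC) for the variable $x$ is immediate by induction, and for that case it appeals to exactly the satisfiability-preservation arguments you spell out for (S-$\Uparrow$), (S-$\Downarrow$), and (S-$\sqcup$). You instead induct first on the length of the $\simp$ derivation to reduce to a single step, then rebuild the typing derivation; both orderings work and isolate the same interesting case.

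One small imprecision worth tightening: you write that there is $\psi_j'$ with $\psi_j\Binds = \psi_j'(\theta\Binds)$. Equality is too strong. If $\theta = [\mvar \mapsto m]$ and $\psi_j$ happened to send $\mvar$ to some $m' \neq m$, then $\psi_j\Binds$ and $\psi_j'(\theta\Binds)$ (with $\psi_j'$ the restriction of $\psi_j$) genuinely differ at the constraints mentioning $\mvar$. What you actually need, and what you go on to argue correctly, is only that $|= \psi_j'(\theta\Binds)$ whenever $|= \psi_j\Binds$; since $\mvar \notin \ftv{\ty}$ the instance type $\psi_j'\ty = \psi_j\ty$ is unchanged, so (TS-Inst) still fires at the same $\ty_j$. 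With that adjustment your choice $\Binds''' = \Binds''$ goes through as stated.
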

\iffull
\begin{proof}
  The proof is by induction on the derivation
  $\prefix{\Binds''}{\Gamma,x@\sigma} e : m\,\tau$.  Most cases are by
  assumption or induction, with the interesting one being (TS-Var)
  where the variable in question is $x$, and we know that all of the
  constraints are solvable according to the reasoning we used to
  justify the simplifications, above.
\end{proof}
\fi

Note that our $\simp$ relation is non-deterministic in the way it
picks constraints to analyze, and also in the order in which rules are
applied. In practice, for an acyclic constraint graph, one could
consider nodes in the graph in topological order and, say,
apply (S-$\sqcup$) first, since, if it succeeds, it eliminates a
variable. For principal polymonads and acyclic constraint
graphs, this process would always terminate.

However, if unification constraints induce cycles in the
constraint graph, simply computing joins as solutions to internal
variables may not work. This should
not come as a surprise. In general, finding solutions to arbitrary
polymonadic constraints is undecidable, since, in the limit, they can
be used to encode the correctness of programs with general
recursion. Nevertheless, simple heuristics such as unrolling cycles in
the constraint graph a few times may provide good mileage, as would
the use of domain-specific solvers for particular polymonads, and such
approaches are justified by our coherence proof. 

\section{Related work and conclusions}

This paper has presented \emph{polymonads}, a generalization of monads
and morphisms, which, by virtue of their relationship to Tate's
\emph{productoids}, are extremely powerful, subsuming monads,
parameterized monads, and several other interesting
constructions. Thanks to supporting algorithms for (principal) type
inference, (provably coherent) elaboration, and
(generality-preserving) simplification (none of which Tate considers),
this power comes with strong supports for the programmer. Like monads
before them, we believe polymonads can become a useful and important
element in the functional programmer's toolkit.

Constructions resembling polymonads have already begun to creep into
languages like Haskell. Notably, Kmett's
\texttt{Control.Monad.Parameterized} Haskell package~\cite{kmett}
provides a type class for bind-like operators that have a signature
resembling our $\tbind{(m_1,m_2)}{m_3}$.  One key limitation is that
Kmett's binds must be \emph{functionally dependent}: $m_3$ must be
functionally determined from $m_1$ and $m_2$.  As such, it is not possible to program
morphisms between different  constructors, i.e., the pair of
binds $\tbind{(m_1,\mId)}{m_2}$ and $\tbind{(m_1,\mId)}{m_3}$ would be
forbidden, so there would be no way to convert from $m_1$ to $m_2$ and
from $m_1$ to $m_3$ in the same program. Kmett also requires units
into $\mId$, which may later be 
lifted, but such lifting only works for first-order code before running afoul
of Haskell's ambiguity restriction. Polymonads do not have either
limitation.  Kmett does not discuss laws that should govern the proper
use of non-uniform binds. As such, our work provides the formal basis
to design and reason about libraries that functional
programmers have already begun developing.

While polymonads subsume a wide range of prior monad-like
constructions, and indeed can express any system of \emph{producer
  effects}~\cite{tate12productors}, as might be expected, other researchers have explored
generalizing monadic effects along other dimensions that are
incomparable to polymonads. For example, Altenkirch et
al.~\cite{Altenkirch10relative} consider \emph{relative monads} that
are not endofunctors; each polymonad constructor must be an
endofunctor. Uustalu and Vene~\cite{Uustalu08comonad} suggest
structuring computations comonadically, particularly to work with
context-dependent computations. This suggests a loose connection with
our encoding of contextual effects as a polymonad, and raises the
possibility of a ``co-polymonad'', something we leave for the
future. Still other generalizations include reasoning about effects
equationally using Lawvere theories~\cite{plotkin01semantic} or with
arrows~\cite{Hughes00arrows}---while each of these generalize monadic
constructions, they appear incomparable in expressiveness to
polymonads. A common framework to unify all these treatments of
effects remains an active area of research---polymonads are a useful
addition to the discourse, covering at least one large area of the
vast design space.

\let\oldthebibliography=\thebibliography
\let\endoldthebibliography=\endthebibliography
\renewenvironment{thebibliography}[1]{%
  \begin{oldthebibliography}{#1}%
    \vskip -20mm
    \setlength{\parskip}{0ex}%
    \setlength{\itemsep}{.25ex}%
  }%
  {
  \end{oldthebibliography}%
}

\begin{small}
\bibliographystyle{eptcs}
\bibliography{monadic}
\end{small}

\pagebreak
\appendix
\begin{center}
\textbf{Appendix}
\end{center}

\section{Polymonads are productoids and vice versa}
\label{sec:productoids}

%
%
Given a polymonad $(\mathcal{M},\Sigma)$, we can construct a 4-tuple
$(\mconstrs, U, L, B)$ as follows:

\begin{description}
\item[(Units)] $U = \aset{(\lambda x. \mybind~x~(\lambda y.y))\colon \kw{a -> M a} \mid \mybind\colon\bind{(\mId,\mId)}{M} \in \Sigma}$, 

\item[(Lifts)] $L = \aset{(\lambda x. \mybind~x~(\lambda y.y))\colon \kw{M a -> N a} \mid \mybind\colon\morph{\mconst}{\mconst[N]} \in \Sigma}$, 

\item[(Binds)] The set $B = \Sigma- \aset{\mybind \mid \mybind\colon\bind{(\mId,\mId)}{M}~\mbox{or}~\mybind\colon\bind{(M,\mId)}{N} \in \Sigma}$.
\end{description}
It is fairly easy to show that the above structure satisfies
generalizations of the familiar laws for monads and monad
morphisms.

\begin{theorem}
\label{thm:bind-as-mubl}
Given a polymonad  $(\mathcal{M},\Sigma)$, the induced 4-tuple
$(\mconstrs, U, L, B)$ satisfies the following properties.

\begin{description}
\item [(Left unit)] $\forall\myunit\in U, \mybind\in B$. if $\kw{unit: forall a. a -> M a}$ and 
$\kw{bind:} \bind{(M,N)}{N}$ then 
$
\mybind~(\myunit~e)~ f = f(e)
$
where $e\colon\tau$ and $f\colon\tau~\kw{-> N}~\tau'$.

\item [(Right unit)] $\forall\myunit\in U, \mybind\in B$. if $\kw{unit: forall a. a -> N a}$ and 
$\kw{bind:} \bind{(M,N)}{M}$ then
$\mybind~m~(\myunit) = m$ 
where $m\colon \kw{M}~\tau$.

\item [(Associativity)] $\forall\mybind_1, \mybind_2, \mybind_3,
  \mybind_4 \in B$. if 
$\mybind_1: \bind{(M,N)}{P}$, \newline
$\mybind_2: \bind{(P,R)}{T}$, 
$\mybind_3: \bind{(M,S)}{T}$, and 
$\mybind_4: \bind{(N,R)}{S}$ then \newline
$\mybind_2~(\mybind_1~m~f)~g
=
\mybind_3~m~(\lambda x. \mybind_4~(f~x)~g)$ \newline
where $m\colon \kw{M}~\tau$, $f\colon \tau~\kw{-> N}~\tau'$
and $g\colon\tau'~\kw{-> R}~\tau''$ 

\item [(Morphism 1)]
$\forall\myunit_{1}, \myunit_{2}\in U, \mylift\in L$. if $\myunit_{1}\kw{: forall a. a -> M a}$, 
 $\myunit_2\kw{: forall a. a -> N a}$ and $\kw{lift: forall a. M a -> N a}$ then 
$\mylift~(\myunit_1~e) 
=
\munit_2~e$ 
where $e\colon\tau$.

\item [(Morphism 2)]
$\forall\mybind_1, \mybind_2\in B, \mylift_1, \mylift_2, \mylift_3\in L$. if
$\mybind_1: \bind{(M,P)}{S}$, \newline
$\mybind_2: \bind{(N,Q)}{T}$, 
$\mylift_1\kw{: forall a. M a -> N a}$,
$\mylift_2\kw{: forall a. P a -> Q a}$ and 
$\mylift_3\kw{: forall a. S a -> T a}$
then
$\mylift_3~(\mybind_1~m~f) 
=
\mybind_2~(\mylift_1~m)~(\lambda x.\mylift_2~(f~x))$ \newline
where $m\colon \kw{M}~\tau$ and $f\colon\tau~\kw{-> P}~\tau'$.
\end{description}
\end{theorem}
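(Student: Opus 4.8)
The plan is to derive every clause of Theorem~\ref{thm:bind-as-mubl} from the polymonad laws of Definition~\ref{def:polymonad}, exploiting the fact that the units in $U$ and the lifts in $L$ are themselves binds of a constrained shape. By construction, a unit $\myunit \in U$ is $\lambda x.\, b~x~(\lambda y.y)$ for some $b@\bind{(\mId,\mId)}{M} \in \Sigma$, a lift $\mylift \in L$ is $\lambda x.\, b~x~(\lambda y.y)$ for some morphism $b@\bind{(M,\mId)}{N} \in \Sigma$, and the \textbf{Functor} law supplies for every $M$ a ``map'' bind $b@\bind{(M,\mId)}{M} \in \Sigma$. I first isolate two facts used throughout: (a) every $b@\bind{(M,\mId)}{M}$ is the identity on its continuation, $b~m~(\lambda y.y) = m$ (this is exactly the \textbf{Functor} law); and (b) every $b@\bind{(\mId,N)}{N}$ is reverse application, $b~x~f = f~x$ --- obtained by instantiating the \textbf{Paired morphisms} law at the pair $(N,N)$, taking its first bind to be the \textbf{Functor} map and simplifying by (a). With (a) and (b) in hand, each clause becomes an instance of the \textbf{Associativity} law in which some constructors are taken to be $\mId$, plus $\beta\eta$-reasoning; the \textbf{Diamond} law is invoked where needed to certify that the intermediate binds fed into \textbf{Associativity} actually lie in $\Sigma$.

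Concretely: the \textbf{Associativity} clause of the theorem is literally the \textbf{Associativity} law of Definition~\ref{def:polymonad} (with $\mybind_3,\mybind_4$ in the roles of the law's $b_4,b_3$), since $B \subseteq \Sigma$. For \textbf{Left unit} I rewrite $\myunit~e$ as $b~e~(\lambda y.y)$ with $b@\bind{(\mId,\mId)}{M}$ and apply \textbf{Associativity} with first bind $b$, second bind $\mybind$, and both remaining binds a reverse-application bind of shape $\bind{(\mId,N)}{N}$; the right-hand side then collapses to $f~e$ by fact (b). \textbf{Right unit} is symmetric: rewrite $\myunit$ via its underlying bind $\bind{(\mId,\mId)}{N}$, apply \textbf{Associativity} with the \textbf{Functor} map of $M$ as the first two binds, and collapse the left-hand side twice by fact (a). \textbf{Morphism 1} is the same move applied to $\mylift~(\myunit_1~e)$: unfolding both the lift and the unit, one use of \textbf{Associativity} rewrites it to $b~e~(\lambda x.\, b'~x~(\lambda y.y))$, where $b@\bind{(\mId,N)}{N}$ is a reverse-application bind and $b'@\bind{(\mId,\mId)}{N}$ is the underlying bind of $\myunit_2$; since $\lambda x.\, b'~x~(\lambda y.y) = \myunit_2$ and $b$ is reverse application (fact (b)), this equals $\myunit_2~e$.

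The one genuinely involved case --- and the step I expect to be the main obstacle --- is \textbf{Morphism 2}. The plan there is two-stage. First, use the \textbf{Closure} law: applied to $\mybind_2@\bind{(N,Q)}{T}$, to the morphism $\mylift_1@\bind{(M,\mId)}{N}$ feeding its first argument, and to the \textbf{Functor} maps of $Q$ and of $T$ in the remaining two positions, \textbf{Closure} yields some bind $b@\bind{(M,Q)}{T} \in \Sigma$. Second, apply \textbf{Associativity} twice against this same $b$: once to the left-hand side $\mylift_3~(\mybind_1~m~f)$ --- with first bind $\mybind_1$, second bind $\mylift_3$, third bind $\mylift_2$ regarded as the morphism $\bind{(P,\mId)}{Q}$, fourth bind $b$ --- which rewrites it to $b~m~(\lambda x.\, \mylift_2~(f~x))$; and once to the right-hand side $\mybind_2~(\mylift_1~m)~(\lambda x.\, \mylift_2~(f~x))$ --- with first bind $\mylift_1$, second bind $\mybind_2$, third bind a reverse-application bind $\bind{(\mId,Q)}{Q}$, fourth bind $b$ --- which, after collapsing by fact (b), rewrites it to the very same $b~m~(\lambda x.\, \mylift_2~(f~x))$. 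Transitivity then gives \textbf{Morphism 2}. The delicate part is choosing the constructor instantiations in \textbf{Closure} and in the two \textbf{Associativity} applications so that one common $b@\bind{(M,Q)}{T}$ appears on both sides, and checking at each step that all four binds handed to \textbf{Associativity} (resp.\ to \textbf{Closure}) genuinely live in $\Sigma$; this bookkeeping is where the \textbf{Functor}, \textbf{Paired morphisms}, \textbf{Diamond} and \textbf{Closure} laws each earn their keep, but no individual step is mathematically deep.
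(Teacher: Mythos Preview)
The paper does not actually prove Theorem~\ref{thm:bind-as-mubl}; it merely states the result, remarking beforehand that ``it is fairly easy to show that the above structure satisfies generalizations of the familiar laws for monads and monad morphisms.'' So there is no official proof to compare against, and your plan must be judged on its own merits.

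Your plan is correct and essentially complete. The two preliminary facts (a) and (b) are exactly right and are the key reductions: (a) is the content of \textbf{Functor}, and (b) follows from \textbf{Paired morphisms} instantiated with the \textbf{Functor} map, as you say. (One small point worth making explicit: (a) holds for \emph{every} $b@\bind{(M,\mId)}{M}$, not just the one witnessed by \textbf{Functor}; this follows by one further application of \textbf{Associativity} with $S=\mId$, and you implicitly rely on it.) Each of \textbf{Left unit}, \textbf{Right unit}, \textbf{Associativity}, and \textbf{Morphism 1} then reduces to a single instance of the polymonad \textbf{Associativity} law, exactly as you outline; the type-checking of the four binds in each instance goes through using \textbf{Functor} and \textbf{Paired morphisms}.

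Your treatment of \textbf{Morphism 2} is also correct. The crucial idea---using \textbf{Closure} (applied to $\mybind_2$, the morphism $\mylift_1$, and the \textbf{Functor} maps of $Q$ and $T$) to manufacture a single pivot bind $b@\bind{(M,Q)}{T}$, and then rewriting each side to $b\ m\ (\lambda x.\ \mylift_2(f\,x))$ via one \textbf{Associativity} step---is exactly what is needed, and the bookkeeping you describe checks out. One harmless over-claim: you do not actually need \textbf{Diamond} anywhere. Every intermediate bind you feed to \textbf{Associativity} is either given, is a \textbf{Functor} map, comes from \textbf{Paired morphisms}, or (in \textbf{Morphism 2}) is produced by \textbf{Closure}; \textbf{Diamond} never has to fire.
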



\newcommand\eff{E}
\newcommand\uniteff[1]{U}
\newcommand\joineff[3]{\ensuremath{(#1\mathbb{;}#2) \mapsto #3}}
\newcommand\effectoid{\ensuremath{(\eff, \uniteff{\_}, \leq,
    \joineff{\_}{\_}{\_})}}

Now we show how this definition can be used to relate polymonads to
Tate's \emph{productoids}~\cite{tate12productors}.
\iffull
The definition of a productoid is driven by an underlying algebraic structure: the effectoid~\cite[Theorem~1]{tate12productors}.

\begin{definition}\label{def:effectoid}
  An \textbf{effectoid} $(E, U, \leq, \mapsto)$ is a set $E$,
  with an identified subset $U\subseteq E$ and relations $\mathord{\leq} \subseteq
  E\times E$ and $(\_;\_)\mapsto \_\subseteq E\times E\times E$, that
  satisfies the following conditions:
  \begin{enumerate}
  \item $\forall e_1, e_2\in E$. $e_1\leq e_2$ iff $\exists u\in U. u;e_1\mapsto e_2$
  \item $\forall e_1, e_2\in E$. $e_1\leq e_2$ iff $\exists u\in U. e_1;u\mapsto e_2$
  \item $\forall e_1, e_2, e_3, e_4\in E. (\exists e. e_1;e_2\mapsto e$ and $e;e_3\mapsto e_4$) iff
    $(\exists e'. e_2;e_3\mapsto e'$ and $e_1;e'\mapsto e_4$)
  \item $\forall e\in E$. $e\leq e$
  \item $\forall e\in E, u\in U$. if $u\leq e$ then $e\in U$
  \item $\forall e_1,e_2,e_3,e_4\in E$. if $e_1;e_2\mapsto e_3$ and $e_3\leq e_4$ then $e_1;e_2\mapsto e_4$  
  \end{enumerate}

\end{definition}
It is fairly simple to see that a polymonad directly induces an effectoid structure.  
\else
The definition of a productoid is driven by an underlying algebraic structure: the effectoid~\cite[Theorem~1]{tate12productors}. 
An effectoid $(E, U, \leq, \mapsto)$ is a set $E$,
with an identified subset $U\subseteq E$ and relations $\mathord{\leq}\subseteq
E\times E$ and $(\_;\_)\mapsto \_\subseteq E\times E\times E$, that
satisfies six monoid-like conditions.  
It is possible to show that a polymonad directly induces an effectoid structure and hence a productoid.
\fi

\begin{lemma}
Given a polymonad $(\mconstrs, U, L, B)$ we can define an effectoid \effectoid{} as follows.
\[
\begin{array}{ll}
\eff = \mconstrs
&
\uniteff{\_} \; = \{ \kw{M} \mid \kw{unit: a -> M a} \in U\}\\
\leq\ = \{ (\kw{M},\kw{N}) \mid \kw{lift: M a -> N a} \in L\}~~~~
&
\joineff{\_}{\_}{\_}\ = \{ (\kw{M}, \kw{N}, \kw{P}) \mid \bind{(M,N)}{P} \in B\}
\end{array}
\]
\end{lemma}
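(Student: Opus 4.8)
The plan is to verify, one at a time, the six effectoid axioms (the monoid-like conditions recalled just above the lemma) for the tuple $\effectoid{}$, showing that under the definitions of $U$, $L$, and $B$ each axiom reduces to one of the polymonad laws of Definition~\ref{def:polymonad}. The relevant translation is: $M$ is a unit ($M \in \uniteff{\_}$) exactly when $\bind{(\mId,\mId)}{M}\in\Sigma$; $M \leq N$ exactly when $\morph{M}{N}\in\Sigma$, i.e.\ $\bind{(M,\mId)}{N}\in\Sigma$; and $M;N\mapsto P$ exactly when there is a bind at type $\bind{(M,N)}{P}$. A convenient first step is to record two consequences of the \textbf{Functor} law: instantiated at an arbitrary $M$ it gives $\bind{(M,\mId)}{M}\in\Sigma$, hence $M\leq M$, which discharges the reflexivity axiom outright; instantiated at $\mId$ it gives $\bind{(\mId,\mId)}{\mId}\in\Sigma$, so $\mId$ is itself a unit, and this will be the witness used in the two unit axioms.

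For the two unit axioms --- $M\leq N \iff \exists u\in U.\ u;M\mapsto N$ and $M\leq N \iff \exists u\in U.\ M;u\mapsto N$ --- I would take the unit witness to be $\mId$, so that the right-hand sides unfold to $\bind{(\mId,M)}{N}\in\Sigma$ and $\bind{(M,\mId)}{N}\in\Sigma$ respectively. The latter is literally $M\leq N$; its converse direction, starting from an arbitrary unit $u$ with $\bind{(M,u)}{N}\in\Sigma$, follows by using the \textbf{Closure} law to replace $u$ with $\mId$. The former is equivalent to $M\leq N$ by the \textbf{Paired morphisms} law (again with \textbf{Closure} supplying the converse). The associativity axiom, $(\exists e.\ M;N\mapsto e \text{ and } e;R\mapsto T) \iff (\exists e'.\ N;R\mapsto e' \text{ and } M;e'\mapsto T)$, is, once unfolded, exactly the \textbf{Diamond} law, after matching its metavariables with $M, N, R, T$ and its two existential witnesses.

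The last two axioms are applications of the \textbf{Closure} law. For the axiom that $u\in U$ and $u\leq e$ imply $e\in U$, I would feed \textbf{Closure} the facts $\bind{(\mId,\mId)}{u}\in\Sigma$ (from $u\in U$), $\bind{(\mId,\mId)}{\mId}\in\Sigma$ twice (from \textbf{Functor}), and $\bind{(u,\mId)}{e}\in\Sigma$ (from $u\leq e$); \textbf{Closure} then yields $\bind{(\mId,\mId)}{e}\in\Sigma$, that is $e\in U$. For the axiom that $M;N\mapsto P$ and $P\leq Q$ imply $M;N\mapsto Q$, I would feed \textbf{Closure} the facts $\bind{(M,N)}{P}\in\Sigma$ (given), $\bind{(M,\mId)}{M}\in\Sigma$ and $\bind{(N,\mId)}{N}\in\Sigma$ (from \textbf{Functor}), and $\bind{(P,\mId)}{Q}\in\Sigma$ (from $P\leq Q$), obtaining $\bind{(M,N)}{Q}\in\Sigma$.

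I expect the only genuinely structural ingredient to be the associativity axiom's appeal to \textbf{Diamond}; everything else is a short unfolding. The step needing the most care is the bookkeeping around the identity constructor: since the lemma routes the relation $\mapsto$ through $B$, from which the unit- and morphism-typed binds have been deleted, the degenerate corners of the two unit axioms --- where $M$, $N$, or the intermediate effect is $\mId$ --- have to be handled by observing that the binds required there are precisely the ones furnished by \textbf{Paired morphisms} and \textbf{Functor}, keeping in mind that the $U$-binds, the $L$-binds, and $B$ partition $\Sigma$, so that $\mapsto$ is in effect read off all of $\Sigma$. Once this is pinned down, each of the six axioms is a one- or two-line consequence of a single polymonad law.
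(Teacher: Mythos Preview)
Your proposal is correct and follows essentially the same route as the paper's proof: you verify the six effectoid axioms one by one, invoking \textbf{Functor} for reflexivity, the \textbf{Diamond} law for the associativity axiom, \textbf{Closure} for the two upward-closure axioms, and \textbf{Paired morphisms} together with \textbf{Closure} (and the witness $\mId$) for the two unit/identity axioms. The paper organizes these under the headings ``Identity (1), (2)'', ``Associativity'', and ``Reflexive congruence (1)--(3)'', but the content is the same, down to the specific instantiations of the \textbf{Closure} law you spell out for axioms 5 and 6.

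One remark: your closing observation about the bookkeeping around $\mId$ is well taken, and in fact the paper's own proof is equally loose on this point---it freely writes things like ``$\bind{(M,\mId)}{M'}\in B$'' even though such binds are, by definition, removed from $B$. Your reading that $\mapsto$ is ``in effect read off all of $\Sigma$'' is the charitable interpretation both proofs are implicitly using; the partition of $\Sigma$ into $U$-, $L$-, and $B$-shaped binds is really just a presentational convenience, and the effectoid structure is induced by $\Sigma$ as a whole. You are right that pinning this down carefully is the only place where any care is needed.
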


\iffull
\begin{proof}
\noindent Identity (1):
  Need to show: 
\[\exists L. \uniteff{L} \wedge \joineff{L}{M}{M'} \;\iff\; M \leq M'\]

  Case $(=>)$:

  From \joineff{L}{M}{M'}, we know that $\bind{(L,M)}{M'} \in B$. (1)

  From \uniteff{L}, we know that $L$ has a unit, and hence $\bind{(\mId,\mId)}{L} \in B$. (2.1)

  From the [Functor] law, we have $\bind{(M,\mId)}{M}$ and $\bind{(M',\mId)}{M'} \in B$. (2.2)

  From [Closure] applied to (1) and (2.1, 2.2), we have $\bind{(\mId,M)}{M'} \in B$ (3)

  From [Paired morphisms] and (3), have $\bind{(M,\mId)}{M'} \in B$  (4)
  
  By construction, we have $\morph{M}{M'} \in J$ and hence $M\leq M'$.

  Case $(<=)$:
   From $M\leq M'$, we have $\morph{M}{M'} \in J$, or $\bind{(M,\mId)}{M'} \in B$. (1)

   Pick $\mId$ as a witness for the existentially bound $L$. 
   
   To show that $\uniteff{\mId}$, we use the [Functor] law, which
   requires $\bind{(\mId,\mId)}{\mId} \in B$.  

   Hence, $a -> \mId~a \in U$, and $\uniteff{\mId}$ is derivable.

   To show that $\joineff{(\mId,M)}{M'}$, we use [Paired morphisms]
   applied to (1), 

   deriving $\bind{(\mId, M)}{M'} in B$, as required

\noindent Identity (2):
  Need to show: 
\[M \leq M' \;\iff\; \exists L. \uniteff{L} \wedge \joineff{M}{L}{M'} \]

  Proof is similar to the previous case.

\noindent Associativity: Immediate from polymonad Associativity (1)

\noindent Reflexive congruence:

   Case (1) \[\forall M. M \leq M \]

   Proof: easy from [Functor]

   Case (2) \[\forall M,M'. \uniteff{M} \;\wedge\; M \leq M' => \uniteff{M'}\]

   Proof: 

         From the hypothesis, we have $\aset{\mbind{(\mId,\mId)}{M}, \mbind{(M,\mId)}{M'}} \subseteq B$. (1)

         From [Functor], we get $\mbind{(\mId,\mId)}{\mId}$ and $\mbind{(M',\mId)}{M'}$ $\in B$. (2)
         
         From [Closure] applied to (1) and (2), we have $\bind{(\mId,\mId)}{M'} \in B$, 

         or $a -> M' a \in U$, or $\uniteff{M'}$.

   Case (3) \[\forall M_1,M_2,M, M'. \joineff{M_1}{M_2}{M} \;\wedge\; M\leq M' => \joineff{M_1}{M_2}{M'}\]
         
    Proof: immediate from [Closure]
\end{proof}
\fi
\iffull
\begin{lemma}[A polymonad is a productoid]
A polymonad $(\mconstrs, U, L, B)$ induces a productoid $(\mconstrs,
F, U, L, J)$, where

\begin{itemize}
\item $F=\aset{map_M:(a -> b) -> M a -> M b \mid \kw{b}_M : \bind{(M,\mId)}{M} \in B}$, 
  where \ls@map$_M$ f m  = b$_M$ m (Val $\circ$ f)@;
\item and $J = \aset{j_{M,N,P}:M (N a) -> P a \mid \kw{b}_{M,N,P}@\bind{(M,N)}{P} \in B}$, where 
\ls@j$_{M,N,P}$ mn = b$_{M,N,P}$ mn $\lambda$x.x@.
\end{itemize}
\end{lemma}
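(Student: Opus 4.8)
\emph{Approach.} The effectoid structure $\effectoid$ underlying the claimed productoid is exactly the one constructed in the preceding lemma, so the effectoid axioms are already in hand. What remains is to promote each constructor $\mconst\in\mconstrs$ to an endofunctor using the $map_M$ given in the statement, and then to verify the productoid coherence conditions that tie the functors together with the units $U$, the lifts $L$, and the joins $J$. The uniform strategy throughout is to unfold $map_M$, $\myunit\in U$, $\mylift\in L$ and $j_{M,N,P}$ into the underlying binds, and then match each resulting equation against one of the polymonad laws of Definition~\ref{def:polymonad} or one of the derived monad/morphism laws of Theorem~\ref{thm:bind-as-mubl}.

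\emph{Functoriality.} For each $\mconst$ the \textbf{Functor} law supplies a bind $\kw{b}_M@\bind{(M,\mId)}{M}$ with $\kw{b}_M~m~(\lambda y.y)=m$, so $map_M$ preserves identities by construction. For preservation of composition, $map_M~(g\circ f)=map_M~g\circ map_M~f$, I would instantiate the \textbf{Associativity} law of Definition~\ref{def:polymonad} with $\kw{b}_1=\kw{b}_2=\kw{b}_4@\bind{(M,\mId)}{M}$ and $\kw{b}_3@\bind{(\mId,\mId)}{\mId}$ (the latter exists by \textbf{Functor} applied to $\mId$ and behaves as ordinary function application by the \textbf{(Left unit)} clause of Theorem~\ref{thm:bind-as-mubl}); this yields precisely $\kw{b}_M~(\kw{b}_M~m~f)~g=\kw{b}_M~m~(\lambda x.\,g(f~x))$.

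\emph{Units, lifts and joins.} With the functors fixed I would discharge the remaining axioms one at a time. Naturality of each unit and the law $\mylift~(\myunit_1~e)=\myunit_2~e$ (a unit followed by a lift is again a unit) are the \textbf{(Left unit)} and \textbf{(Morphism 1)} clauses of Theorem~\ref{thm:bind-as-mubl}. Closure of $L$ under composition (so that a composite lift is again in $L$) and naturality of lifts follow from \textbf{Functor}, \textbf{Paired morphisms} and \textbf{Closure} of Definition~\ref{def:polymonad}. Naturality of each join $j_{M,N,P}$ in its value argument, the unit coherences that collapse a join with a unit on either side to a lift, the associativity coherence for joins, and the compatibility of joins with lifts follow, respectively, from \textbf{Associativity}, the \textbf{(Left/Right unit)} clauses, the \textbf{(Associativity)} clause, and the \textbf{(Morphism 2)} clause of Theorem~\ref{thm:bind-as-mubl}; the \textbf{Diamond} law of Definition~\ref{def:polymonad} is exactly what guarantees that the mediating join needed to state the associativity coherence is present in $\Sigma$. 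Finally I would note that Tate's ``closure with respect to lifts'' is our \textbf{Closure} law, and his weaker variant of it holds a fortiori, so $(\mconstrs,F,U,L,J)$ satisfies his definition of a productoid.

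\emph{Main obstacle.} There is nothing conceptually deep left once Theorem~\ref{thm:bind-as-mubl} and the effectoid lemma are available; the effort is entirely in the naturality squares. Each such square says that one of $\myunit$, $\mylift$, $j_{M,N,P}$ commutes with $map_M$, and reduces---by rewriting $map_M$ as a bind into $\mId$ and then invoking \textbf{Associativity} with one or two of its positions instantiated to $\mId$---to an instance of a polymonad law. The only part requiring genuine care is the bookkeeping: choosing the right instantiation of \textbf{Associativity} (or \textbf{Closure}) in each case, and checking via \textbf{Functor}, \textbf{Paired morphisms} and \textbf{Closure} that every bind invoked in a rewriting step actually lies in $\Sigma$.
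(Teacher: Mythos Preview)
Your strategy---unfold $map$, $unit$, $lift$, $j$ into binds and match against the polymonad laws---is exactly what the paper does, and your invocation of the effectoid lemma for the underlying structure is correct. For most of the five productoid coherence laws (the associativity square for joins, lift-after-join, and lift-after-unit) your attributions are close enough that a routine calculation in the style you describe goes through.

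There is one genuine gap. You say the law ``join with a unit on one side collapses to a lift'' (i.e.\ $j_{MNP}\circ(map_M~\myunit_N)=l_{MP}$) follows from the \textbf{(Left/Right unit)} clauses of Theorem~\ref{thm:bind-as-mubl}. It does not, at least not directly: after unfolding, you reach $\kw{b}_{MNP}~x~\myunit_N$, but the \textbf{(Right unit)} clause only applies when the codomain equals the first component (i.e.\ $\bind{(M,N)}{M}$), which need not hold here. The paper's argument at this point re-associates via \textbf{Diamond}, producing existentially-quantified intermediate constructors $Q,R$ that are \emph{not} a priori equal to $M$; to collapse the resulting bind pair back to one involving $M$ it invokes the separate Lemma~\ref{lem:bind-coherence} (coherence of bind pairs), which is itself a corollary of \textbf{Associativity} applied twice. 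Your sketch does not mention this lemma or the issue it resolves, and the step would stall without it. The remaining bookkeeping you describe is accurate.
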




\begin{proof}
As the previous lemma shows, $(\mconstrs, U, L, B)$ induces an effectoid.
What remains to be shown is that $(\mconstrs, F, U, L, J)$ obeys the
five productoid laws. 

\noindent Note, we write:

$I$ for $\mId$.

$j_{m_1m_2m_3}$ for $j@m_1 (m_2 a) -> m_3 a \in J$.

$l_{m_1m_2}$ for $l@m_1 a -> m_2 a \in L$.

$map_{m}$ for $map@(a -> b) -> m a -> m b \in F$.

$unit_{m}$ for $unit@a -> m a \in U$.

\begin{enumerate}
\item $j_{ONP}~\circ~j_{LMO} = j_{LO'P}~\circ~(map_{L}~j_{MNO'})$

For any $x : LMN a$:

     $j_{LO'P}~(map_L~j_{MNO'}~x)$

$\nquad$$=$(def)

     $b_{LO'P}~(b_{LIL}~x~(\lambda x. \sfont{Val}~(j_{MNO'} x)))~\lambda x.x$

$\nquad$$=$(Associativity (2))

     $b_{LO'P}~x~(\lambda x.~b_{IO'O'}~(\sfont{Val}~(j_{MNO'} x))~\lambda x.x)$

$\nquad$$=$(Paired morphisms)

     $b_{LO'P}~x~(\lambda x.~b_{O'IO'}~(j_{MNO'}~x)~\sfont{Val})$

$\nquad$$=$(Identity)

     $b_{LO'P}~x~j_{MNO'}$

$\nquad$$=$(def)

     $b_{LO'P}~x~(\lambda x.b_{MNO'}~x~\lambda y.y)$

$\nquad$$=$(Associativity (2))

     $b_{ONP}~(b_{LMO}~x~\lambda x.x)~\lambda y.y$

$\nquad$$=$(def)

     $(j_{ONP}~\circ~j_{LMO})~x$
\ \\[2ex]

\item \ls@j$_{MNP}$ $\circ$ (map$_M$ unit$_N$) = l$_{MP}$@
 
For any $x:M a$

$j_{MNP}~(map_M~unit_N~x)$

$\nquad=$(def)

$b_{MNP}~(b_{MIM}~x~(\lambda x. \sfont{Val}~(unit_N~x)))~\lambda y.y$

$\nquad=$(Associativity 2)

$b_{MNP}~x~(\lambda x.~b_{INN}~(\sfont{Val}~(unit_N~x))~\lambda y.y)$

$\nquad=$(Paired morphisms)

$b_{MNP}~x~(\lambda x.~b_{NIN}~(unit_N~x)~\sfont{Val})$

$\nquad=$(Identity)

$b_{MNP}~x~unit_N$

$\nquad=$(def)

$b_{MNP}~(b_{IMM}~(\sfont{Val}~x)~\lambda x.x)~unit_N$

$\nquad=$(Associativity 1 and 2)

$\exists Q. b_{IQP}~(\sfont{Val}~x)~(\lambda x. b_{MNQ} x unit_N)$

$\nquad=$(def)

$b_{IQP}~(\sfont{Val}~x)~(\lambda x. b_{MNQ} x (\lambda y. b_{IIN}~(\sfont{Val}~y)~\sfont{Val}))$

$\nquad=$(Associativity 1 and 2)

$\exists R. b_{IQP}~(\sfont{Val}~x)~(\lambda x. b_{RIQ}~(b_{MIR}~x~\sfont{Val})~\sfont{Val})$

$\nquad=$(Lemma~\ref{lem:bind-coherence} and given that $b_{IMP},b_{MIM}$ exist from hyp. and Functor)

$b_{IMP}~(\sfont{Val}~x)~(\lambda x. b_{MIM}~(b_{MIM}~x~\sfont{Val})~\sfont{Val})$

$\nquad=$(Identity)

$b_{IMP}~(\sfont{Val}~x)~(\lambda x. x)$

$\nquad=$(Paired morphisms)

$b_{MIP}~x~\sfont{Val}$

$\nquad=$(def)

$l_{MP}~x$
\\[3ex]

\item $l_{MM'} \circ j_{M_1M_2M} = j_{M_1M_2M'}$

For any $x:M_1M_2 a$

$(l_{MM'}~\circ~j_{M_1M_2M})~x$

$\nquad=$(def)

$b_{MIM'}~(b_{M_1M_2M}~x~\lambda x.x)~\sfont{Val}$

$\nquad=$(Associativity 2)

$b_{M_1M_2M'}~x~(\lambda y. b_{M_2IM_2}~y~\sfont{Val})$

$\nquad=$(Identity)

$b_{M_1M_2M'}~x~(\lambda y. y)$

$\nquad=$(def)

$j_{M_1M_2M'}$

\item \ls@j$_{NMP}$ $\circ$ unit$_N$ = l$_{MP}$@

  Similar to the case (2)

\item \ls@l$_{MM'}$ $\circ$ unit$_M$ = unit$_{M'}$@

  This is identical to [Morphism (1)]

\end{enumerate}
\end{proof}
\else
\begin{lemma}
Every polymonad gives rise to a productoid.
\end{lemma}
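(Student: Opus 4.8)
The plan is to take the 4-tuple $(\mconstrs, U, L, B)$ constructed above from $(\mconstrs,\Sigma)$, adjoin the two extra pieces of data a productoid carries beyond an effectoid---a family $F$ of functorial maps and a family $J$ of joins---and then verify the five productoid laws. Concretely I would set $F = \aset{\ifont{map}_M \mid \sfont{b}_M\colon \bind{(M,\mId)}{M} \in B}$ with $\ifont{map}_M\,f\,m = \sfont{b}_M\,m\,(\sfont{Val}\circ f)$, and $J = \aset{j_{M,N,P} \mid \sfont{b}_{M,N,P}\colon \bind{(M,N)}{P} \in B}$ with $j_{M,N,P}\,mn = \sfont{b}_{M,N,P}\,mn\,(\lambda x.x)$, keeping $U$ and $L$ as already defined. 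The first step is to invoke the preceding lemma, which shows $(\mconstrs, U, L, B)$ already induces an effectoid \effectoid; this discharges the monoid-like algebraic skeleton underlying a productoid, so only the operational coherences remain.

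For those, I would unfold $\ifont{map}$, $j$ and $\ifont{lift}$ into their defining binds and derive each productoid law from Theorem~\ref{thm:bind-as-mubl} together with the polymonad laws. The associativity law for joins, $j\circ j = j\circ(\ifont{map}\,j)$, is the (Associativity) clause of the theorem after unfolding; the law $\ifont{lift}\circ\myunit = \myunit$ is exactly (Morphism 1); the law saying a lift commutes past a join in the result position is another instance of (Associativity), and the general lift-past-bind behaviour of $J$ with respect to $L$ is (Morphism 2); the unit laws use (Left unit) and (Right unit). Each step is a short equational calculation that uses the (Functor) and (Paired morphisms) polymonad laws to move between $\bind{(M,\mId)}{M}$-binds, $\bind{(\mId,M)}{N}$-binds and reassociated binds, and uses (Closure) to confirm that every bind named along the way genuinely lies in $B$---i.e.\ is not one of the excluded $\bind{(\mId,\mId)}{M}$ or $\bind{(M,\mId)}{N}$ binds---just as in the proof of the effectoid lemma.

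The step I expect to be the main obstacle is the unit law $j_{M,N,P}\circ(\ifont{map}_M\,\myunit_N) = \ifont{lift}_{M,P}$. Here the unit sits \emph{inside} a $\ifont{map}$, so collapsing the left-hand side to a single lift is not a one-step rewrite: it needs repeated use of (Associativity) in both directions to hoist the unit out, (Paired morphisms) to turn unit-binds into $\bind{(\mId,\mId)}{M}$-binds and to trade a left-morphism for a right-morphism, and an auxiliary bind-coherence lemma---that any two binds with the same source/target triple act identically on all arguments---whose own proof relies on the (Diamond) and (Closure) laws to guarantee that the intermediate constructors it introduces really occur in $\Sigma$. I would therefore establish that bind-coherence lemma first and run the unit-law calculation on top of it; the remaining productoid laws then follow by comparatively routine rewriting, completing the proof.
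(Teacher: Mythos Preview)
Your proposal is essentially the paper's own argument: define $F$ and $J$ from the $(\mconst,\mId)\rhd\mconst$ and general binds exactly as you do, invoke the effectoid lemma, and then verify the five productoid coherences by unfolding into binds and appealing to the polymonad laws; you have also correctly singled out $j_{M,N,P}\circ(\ifont{map}_M\,\myunit_N)=\ifont{lift}_{M,P}$ as the one case that needs real work and an auxiliary coherence lemma. One small correction: the auxiliary lemma the paper actually uses (and that you will need) is not ``two binds at the same triple coincide'' but rather \emph{coherence of composed bind pairs}---for $\aset{\sfont{b}_{PQR},\sfont{b}_{RST},\sfont{b}_{PQR'},\sfont{b}_{R'ST}}\subseteq\Sigma$ one has $\sfont{b}_{RST}(\sfont{b}_{PQR}\,x\,f)\,g=\sfont{b}_{R'ST}(\sfont{b}_{PQR'}\,x\,f)\,g$ (and the dual)---and it follows from \textbf{Associativity} (plus \textbf{Diamond} for existence of the reassociated intermediates) by transitivity, not from \textbf{Closure}.
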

\begin{proof}
We have shown that a polymonad gives rise to an effectoid. Given an
effectoid \effectoid\  a productoid is defined as a collection of functors
indexed by the collection $E$, and three collections of natural
transformations indexed by the three relations. These functors and
natural transformations are required to satisfy five addition
properties~\cite[Theorem 2]{tate12productors}. The five properties are
the five properties of Theorem~\ref{thm:bind-as-mubl}, so the proof is
immediate.
\end{proof}
\fi
Interestingly, we can identify conditions where the opposite direction also holds.

\begin{lemma}
A productoid $(\mathbf{C}, \{F_e\colon\mathbf{C}\to\mathbf{C}\}_{e\in
  E}, \{\eta\colon 1\Rightarrow F_e\}_{e\in U}, 
 \{\mu\colon F_{e_1}\circ F_{e_2}\Rightarrow
 F_{e_{3}}\}_{(e_1;e_2)\mapsto e_3}, 
 \{\sigma\colon F_{e_1}\Rightarrow F_{e_{2}}\}_{e_{1}\leq e_{2}})$
that in addition satisfies the following conditions
gives rise to a polymonad. 

\noindent\begin{tabular}{@{}ll}
1. $\mId \in E$ and $F_\mId = 1$ & 4. For all $e\in E, \mu\colon F_e\circ 1\Rightarrow F_e=\mId$\\
2. $\mId\in U$ & 5. $\joineff{e_1}{e_2}{e} \;\wedge\; e_1' \leq e_1 \;=>\; \joineff{e_1'}{e_2}{e}$ \\
3. For all $e\in E$, $\joineff{e}{\mId}{e}$ \quad\quad\quad&
6. $\joineff{e_1}{e_2}{e} \;\wedge\; e_2' \leq e_2 \;=>\; \joineff{e_1}{e_2'}{e}$
\end{tabular}
\end{lemma}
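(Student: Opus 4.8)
The plan is to read the polymonad off the productoid data. I would set $\mconstrs = E$, with each effect $e$ naming the endofunctor $F_e$ as a unary type constructor; condition~1 ($F_{\mId} = 1$) makes the distinguished element $\mId$ satisfy $\mId\,\tau = \tau$. For $\Sigma$ I would take one bind per related triple: for each $(e_1 ; e_2)\mapsto e_3$ put $\mathsf{b}_{e_1,e_2,e_3}\colon\tbind{(F_{e_1},F_{e_2})}{F_{e_3}}$ into $\Sigma$, defined by $\mathsf{b}_{e_1,e_2,e_3}\,m\,f = \mu_{e_1,e_2,e_3}\bigl(F_{e_1}(f)\,m\bigr)$ --- the ``join after map'' recipe already used in the lemmas above. (If $F_{e_1} = F_{e_2}$ for $e_1\neq e_2$ this causes no trouble: the polymonad laws treat the elements of $\mconstrs$ as distinct and I would verify every instance.) It then remains to check the five polymonad laws, using the six effectoid axioms underlying the productoid (\cite[Theorem~1]{tate12productors}), the five productoid coherences (\cite[Theorem~2]{tate12productors}, equivalently the five equations of Theorem~\ref{thm:bind-as-mubl}), and conditions~1--6.

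Three of the laws I expect to be essentially combinatorial, living in the underlying effectoid $(E,U,\le,\mapsto)$. \textbf{Diamond} is verbatim the effectoid's associativity axiom ($\exists e.\,(e_1;e_2)\mapsto e \wedge (e;e_3)\mapsto e_4 \iff \exists e'.\,(e_2;e_3)\mapsto e' \wedge (e_1;e')\mapsto e_4$), taking the $e_i$ to be the constructors $M,N,R,T$. \textbf{Functor} holds because condition~3 supplies the join $(e;\mId)\mapsto e$ and condition~4 says its $\mu$ is the identity, so $\mathsf{b}_{e,\mId,e}\,m\,(\lambda y.y) = \mu\bigl(F_e(\mathrm{id})\,m\bigr) = m$. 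For \textbf{Closure}, from $(s;\mId)\mapsto m$, $(t;\mId)\mapsto n$, $(p;\mId)\mapsto u$ and $\mId\in U$ (condition~2) the effectoid unit axioms give $s\le m$, $t\le n$, $p\le u$; conditions~5 and~6 --- the ``pre-compose a join with a lift'' strengthening that a bare productoid lacks --- rewrite $(m;n)\mapsto p$ first to $(s;n)\mapsto p$ and then to $(s;t)\mapsto p$, and the effectoid axiom ``$(e_1;e_2)\mapsto e_3 \wedge e_3\le e_4 \Rightarrow (e_1;e_2)\mapsto e_4$'' finishes with $(s;t)\mapsto u$, i.e.\ the required bind $\tbind{(F_S,F_T)}{F_U}$.

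For \textbf{Associativity}, once the ``join after map'' definitions are unfolded and naturality of the $\mu$'s and functoriality of the $F_e$'s are applied, the polymonad's equation for the four composable binds becomes exactly the productoid's associativity coherence (the Associativity equation of Theorem~\ref{thm:bind-as-mubl}), which governs precisely these binds. The existence half of \textbf{Paired morphisms}, $(e;\mId)\mapsto e' \iff (\mId;e)\mapsto e'$, I would argue thus: $(e;\mId)\mapsto e'$ gives $e\le e'$ (effectoid unit axiom, $\mId\in U$), hence $(w;e)\mapsto e'$ for some $w\in U$; but $(w;\mId)\mapsto w$ (condition~3) with $w\in U$ gives $\mId\le w$, so condition~5 rewrites $(w;e)\mapsto e'$ to $(\mId;e)\mapsto e'$; the converse is the mirror image via condition~6. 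The same line of reasoning yields the ``dictionary'' that the polymonad's units, lifts, and joins --- its binds of shapes $\tbind{(\mId,\mId)}{F_e}$ and $\tbind{(F_e,\mId)}{F_{e'}}$ and the rest --- are, after applying identities where appropriate, literally the productoid's $\eta$'s, $\sigma$'s, and $\mu\circ\mathrm{map}$'s.

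The one genuinely delicate step --- where I expect the real work to sit --- is the function-equality half of \textbf{Paired morphisms}: $\mathsf{b}_1(f\,v)(\lambda y.y) = \mathsf{b}_2\,v\,f$ for $\mathsf{b}_1\colon\tbind{(F_e,\mId)}{F_{e'}}$ and $\mathsf{b}_2\colon\tbind{(\mId,F_e)}{F_{e'}}$, which unfolds to $\mu_{e,\mId,e'} = \mu_{\mId,e,e'}$ as natural transformations $F_e\Rightarrow F_{e'}$. Because Tate's axioms are silent about an identity object, I would bootstrap the relevant ``unitor-is-trivial'' facts from conditions~1--4 and the productoid coherences: first $\sigma_{e,e} = \mathrm{id}_{F_e}$ (from the lift/join coherence $\sigma_{e,e}\circ\mu_{e,\mId,e} = \mu_{e,\mId,e}$ together with $\mu_{e,\mId,e} = \mathrm{id}$ from condition~4), then $\eta_{\mId} = \mathrm{id}_{1}$ (from the coherence identifying a join-with-unit with a lift, specialized at $\mId$, using that $F_{\mId} = 1$ acts trivially on morphisms), and finally that $\mu_{e,\mId,e'}$ and $\mu_{\mId,e,e'}$ both equal the lift $\sigma_{e,e'}$. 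Pinning down exactly which composite natural transformations collapse to identities is the crux; with that settled, this case closes, the dictionary is confirmed, and the remaining parts of all five laws are routine unfolding.
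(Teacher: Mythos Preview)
The paper does not actually supply a proof of this lemma. After stating the six conditions, the paper only offers the remark that ``these additional conditions are fairly mild: (1)--(4) simply ensure that the $\mId$ element is interpreted as the identity functor,'' and that (5)--(6) hold automatically in any cartesian closed category. No verification of the polymonad laws is given.

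Your proposal therefore goes well beyond what the paper provides, and the route you outline is sound. A few confirmations:
\begin{itemize}
\item Your combinatorial checks of \textbf{Diamond}, \textbf{Functor}, and \textbf{Closure} line up exactly with the effectoid axioms and conditions 2--6; in particular the Closure argument (conditions~5,~6 feed the lifts into the join, then effectoid axiom~6 post-composes the lift $p\le u$) is the right one and is precisely where the extra hypotheses beyond a bare productoid are consumed.
\item Your existence argument for \textbf{Paired morphisms} is correct: from $(w;\mId)\mapsto w$ with $w\in U$, effectoid axiom~1 (taking $u=w$, $e_1=\mId$, $e_2=w$) gives $\mId\le w$, and condition~5 then yields $(\mId;e)\mapsto e'$.
\item The equational half of \textbf{Paired morphisms} is, as you note, the one place requiring care. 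Your plan---use conditions~1--4 together with the productoid coherences to collapse the relevant unitors to identities and identify both $\mu_{e,\mId,e'}$ and $\mu_{\mId,e,e'}$ with $\sigma_{e,e'}$---is the expected argument; the key inputs are condition~4 ($\mu_{e,\mId,e}=\mathrm{id}$) and the two productoid coherences that equate a join-with-unit with a lift (items 2 and 4 of Theorem~\ref{thm:bind-as-mubl}, read in the productoid direction).
\end{itemize}
In short: your proof is not merely ``essentially the same'' as the paper's---it \emph{is} the proof the paper omitted, and it is correct.
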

These additional conditions are fairly mild: (1)-(4) simply ensure
that the $\mId$ element is interpreted as the identity
functor. Conditions (5)-(6) are also quite straightforward; certainly
if the category is cartesian closed then the extra natural
transformations are always defined.

\section{Coherence of solutions}

\begin{lemma}[Solutions to a core]
\label{lem:acycliccore}
For a polymonad $(\mathcal{M},\Sigma)$, and a constraint graph $G$
with a core $G'$, the set of all solutions $\mathcal{S}'$ to
$G'$ includes all the solutions $\mathcal{S}$ of $G$.
\begin{proof}
(Sketch) This is easy to see, since $G'$ differs from $G$ only in that it
includes fewer unification constraints. So, all solutions to $G$ are
also solutions to $G'$.
\end{proof}
\end{lemma}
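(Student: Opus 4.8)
The plan is to simply unfold the definition of \emph{solution to a constraint graph} and observe that, of its three clauses, only the one concerning unification edges ($E_{eq}$) refers to data that differs between $G$ and its core $G'$, and that this clause only becomes \emph{weaker} as the edge set shrinks. First I would recall from Definition~\ref{def:unambiguous} that if $G'=(V,A,E_\rhd,E_{eq}')$ is a core of $G=(V,A,E_\rhd,E_{eq})$, then $G'$ shares the same vertex set $V$, the same vertex assignment $A$, and the same oriented-edge set $E_\rhd$ with $G$, while $E_{eq}' \subseteq E_{eq}$; the only change is that some unification edges have been discarded.

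Then, given any $S \in \mathcal{S}$, i.e., any solution of $G$, I would check the three clauses of ``solution'' against $G'$. Clause (1), ``if $A(v)\in\mconstrs$ then $S(v)=A(v)$'', depends only on $V$ and $A$, which are unchanged, so it transfers verbatim. Clause (2), ``for all $(v_1,v_2)\in E_{eq}'$, $S(v_1)=S(v_2)$'', follows because $E_{eq}'\subseteq E_{eq}$ and $S$ already satisfies the corresponding equation for \emph{every} edge in the larger set $E_{eq}$. Clause (3), ``for each $\aset{(\pi.0,\pi.2),(\pi.1,\pi.2)}\subseteq E_\rhd$, $\tbind{(S(\pi.0),S(\pi.1))}{S(\pi.2)}\in\Sigma$'', refers only to $E_\rhd$, $S$, and $\Sigma$, none of which change. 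Hence $S$ is a solution of $G'$, so $S\in\mathcal{S}'$, and therefore $\mathcal{S}\subseteq\mathcal{S}'$.

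There is essentially no obstacle here; the only point requiring care is to confirm that forming a core genuinely leaves $V$, $A$, and $E_\rhd$ intact and merely drops unification edges — which is immediate from Definition~\ref{def:unambiguous}, where a core is by construction $(V,A,E_\rhd,E_{eq}')$ with $E_{eq}'\subseteq E_{eq}$. (I would also note in passing that the reverse inclusion $\mathcal{S}'\subseteq\mathcal{S}$ fails in general, since discarding unification edges can admit genuinely new solutions; only the stated one-sided inclusion holds.)
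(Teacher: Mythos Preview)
Your proposal is correct and follows exactly the same approach as the paper's (sketch) proof: the key observation in both is that a core differs from $G$ only by dropping unification edges, so every solution of $G$ trivially remains a solution of $G'$. You have simply made explicit the check of each clause in the definition of ``solution'', which the paper leaves implicit.
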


\begin{figure*}[t!]
    \[
    \begin{tiny}\nquad
     \begin{array}{c|c}
       \underline{S/S'} & \underline{S_1/S_1'} \\

      \xymatrix@C=.1em@R=0.75em{
                                                                     & \mconst_1/\mconst_1'                           & \ldots & \mconst_2/\mconst_2' \\
                                                                     &   \tup\ar[u]                                  & \ldots\tup\ar[u]\ar@{-}[d]\ldots  &\tup\ar[u] \\
                                      \mconst_3/\mconst_3'\ar@{-}[ru] &  \mconst[A]/\mconst[B]\ar@{-}[u]\ar@{:}[r]\ar@{:}[dr]   & \ldots    & \ar@{:}[dl]\ar@{:}[l]\mconst[A]/\mconst[B]\ar@{-}[u] & \mconst_4/\mconst_4'\ar@{-}[ul] \\
                                                                     &   \eta_1\ar@{:}[u]                                        & \ldots\eta\ldots   & \eta_k\ar@{:}[u] \\
                                                                     & \mconst[A]/\mconst[B]\ar@{:}[u]\ar@{:}[r]\ar@{:}[ur]      & \ldots & \ar@{:}[ul]\ar@{:}[l]\mconst[A]/\mconst[B]\ar@{:}[u]        \\
                                                                     &   \tup\ar[u]                                    & \ldots\tup\ar[u]\ar@{-}[d]\ldots &  \tup\ar[u]                     \\
                                     \mconst_5\ar@{-}[ur]            &  \mconst_6\ar@{-}[u]                           & \ldots &   \mconst_7\ar@{-}[u]                        &  \mconst_8\ar@{-}[ul] \\
      }

      \qquad&\qquad

      \xymatrix@C=.1em@R=0.75em{
                                                                     & \mconst_1/\mconst_1'                           & \ldots & \mconst_2/\mconst_2' \\
                                                                     &   \tup\ar[u]                                  & \ldots\tup\ar[u]\ar@{-}[d]\ldots  &\tup\ar[u] \\
                                      \mconst_3/\mconst_3'\ar@{-}[ru] &  \mconst[J]\ar@{-}[u]\ar@{:}[r]\ar@{:}[dr]   & \ldots    & \ar@{:}[dl]\ar@{:}[l]\mconst[J]\ar@{-}[u] & \mconst_4/\mconst_4'\ar@{-}[ul] \\
                                                                     &   \eta_1\ar@{:}[u]                                        & \ldots\eta\ldots   & \eta_k\ar@{:}[u] \\
                                                                     & \mconst[J]\ar@{:}[u]\ar@{:}[r]\ar@{:}[ur]      & \ldots & \ar@{:}[ul]\ar@{:}[l]\mconst[J]\ar@{:}[u]        \\
                                                                     &   \tup\ar[u]                                    & \ldots\tup\ar[u]\ar@{-}[d]\ldots &  \tup\ar[u]                     \\
                                     \mconst_5\ar@{-}[ur]            &  \mconst_6\ar@{-}[u]                           & \ldots &   \mconst_7\ar@{-}[u]                        &  \mconst_8\ar@{-}[ul] \\
      }

     \end{array}
    \end{tiny} 
\]
\caption{Constraint graphs used to illustrate the proof of coherence
  (Theorem~\ref{thm:coherence})}
\label{fig:graphs}
\end{figure*}
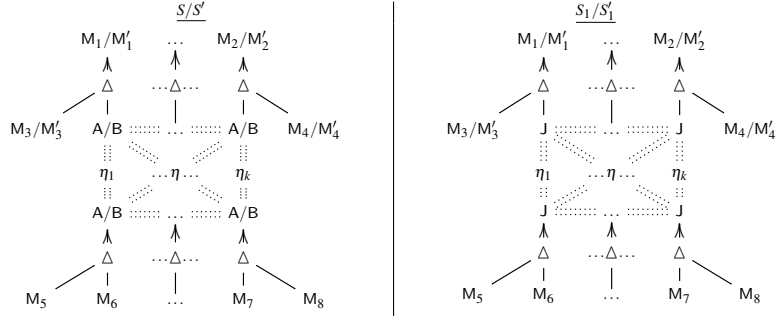

\begin{theorem}[Coherence]
  For all principal polymonads, derivations $\Binds|\Gamma |- e : t
  \rew \tgte$ such that \\ $\unamb{\Binds}{\Gamma}{t}$, and for any two
  solutions $S$ and $S'$ to $G_\Binds$ that agree on
  $R=ftv(\Gamma,t)$, we have $S \cong_R S'$.
\begin{proof}
    We consider the set $\mathcal{S}$ of all solutions to the 
    core of $G_\Binds$ that agree on $ftv(\Gamma,t)$, and prove that
    all these solutions are in the same equivalence class. By
    Lemma~\ref{lem:acycliccore}, $\aset{S,S'} \subseteq
    \mathcal{S}$, establishing our goal. 

    Let $G=(V,A,E_\rhd,E_{eq})$ be a core of $G_\Binds$ and let $S$ and $S'$
    be arbitrary elements of $\mathcal{S}$. $S$ and $S'$ may only
    differ on the open variables of $\Binds$. Since $G$ is
    unambiguous, the nodes associated with these variables all have
    non-zero in- and out-degree. Let $U_{S,S'} = \aset{v \mid v \in V
      \;\wedge\; S(v) \neq S'(v)}$; the proof proceeds by induction
    on the the size of $U$.

    \noindent\textbf{Base case $|U_{S,S'}|=0$:} Trivial, since we have
    $S(v) = S'(v)$, for all $v$.

    \noindent\textbf{Induction step $|U_{S,S'}=i|$:} From the induction hypothesis:
    All solutions $S_1$ and $S_1'$ such that $|U_{S_1,S_1'}| < i$, we
    have $S_1 \cong S_1'$.  

    Topologically sort $G$, such that all vertices in the same
    connected component following edges in $E_{eq}$ have the same
    index, and each vertex $v$ is assigned an index greater than the
    index of all vertices $v'$ such that $(v,v')$ is an edge in
    $E_\rhd$. That is, ``leaf'' nodes have the highest indices. 

    Pick a vertex $v$ with the maximal index, such that $S(v) =
    \mconst[A]$ and $S'(v) = \mconst[B]$, for $\mconst[A] \neq
    \mconst[B]$, and let $I$ be the set of vertices reachable from $v$
    via unification edges. Since both $S$ and $S'$ are solutions,
    there must exist an open variable $\mvar$ such that $A(v) =\mvar$,
    and since $G$ is a core, there must be some non-empty set of flow
    edges incident on $v$.

    Thus, the neighborhood of $v$ in the graphs $G$, under assignment
    $S$ and $S'$ has a shape as shown in graph at left in
    Figure~\ref{fig:graphs}. All the nodes in $I$ are shown connected
    by double dotted lines---they each have assignment
    $\mconst[A]/\mconst[B]$ in $S/S'$.
    Since all the nodes in $I$ have an index greater than
    the index of any variable that differs among $S$ and $S'$, all
    their immediate predecessors have identical assignments in the two
    solutions (i.e,. $\mconst_5, \ldots, \mconst_8$). However, the
    other assignments may differ, (e.g., the top-left node could be
    assigned $\mconst_1$ in $S$ and $\mconst_1'$ in $S'$, etc.)
    Each flow-edge $\aset{\eta_1, \ldots, \eta_k}$ incident upon one
    of the nodes with the same index as $v$ is also labeled.

    Now, since we have a principal polymonad, there exists a principal
    join of $\aset{(\mconst_5, \mconst_6), \ldots, (\mconst_7,
      \mconst_8)}$---call it $\mconst[J]$.
    Consider the assigment $S_1$ (resp. $S_1'$) that differs from
    $S$ (resp. $S'$) only by assigning $\mconst[J]$ to each vertex
    in $I$ instead of $\mconst[A]$ (resp. $\mconst[B]$).

    We first show that $S_1$ (resp. $S_1'$) is a solution and that $S
    \cong S_1$ (resp. $S' \cong S_1'$). Then, we note that since $S_1$
    and $S_1'$ agree on all the vertices in $I$, $|U_{S_1,S_1'}| < i$,
    so we apply the induction hypothesis to show that $S_1 \cong S_1'$
    and conclude with transitivity of $\cong$.

    To show that $S_1$ (resp $S_1'$) is a solution, since $\mconst[J]$
    is a join of $\mconst_5,\mconst_6, \ldots$, then
    $\aset{\bind{(\mconst_5, \mconst_6)}{\mconst[J]}, \ldots,
      \bind{(\mconst_7, \mconst_8)}{\mconst[J]}}$ all exist, as well
    as $\morph{\mconst[J]}{\mconst[A]}$
    (resp. $\morph{\mconst[J]}{\mconst[B]}$). By the Closure property,
    for every $\bind{(\mconst[M],\mconst[A])}{\mconst[M']}$ (resp.
    $\mconst[B]$) there also exists
    $\bind{(\mconst[M],\mconst[J])}{\mconst[M']}$. Thus, the
    assignment of $\mconst[J]$ to $I$ is valid for a solution.

    To show that $S \cong S_1$ (resp. $S' \cong S_1'$), we have to
    show that $F_S(\eta_i) = F_{S_1}(\eta_i)$ (resp. $F_{S'}(\eta_i) =
    F_{S_1'}(\eta_i)$), for all $i$. Taking $\eta_k$ as a
    representative case (the other cases are similar), we need to show
    the identity below, which is an immediate corollary of
    Associativity 1 and 2 (resp. for $\mconst[B],
    \mconst_4',\mconst_2'$).

    \[\begin{array}{l}
      \sfont{bind}_{\mconst[A],\mconst_4,\mconst_2}(\sfont{bind}_{\mconst_7,\mconst_8,\mconst[A]}~x~y)~z =
      \sfont{bind}_{\mconst[J],\mconst_4,\mconst_2}(\sfont{bind}_{\mconst_7,\mconst_8,\mconst[J]}~x~y)~z
      \end{array}\]
  \end{proof}
\end{theorem}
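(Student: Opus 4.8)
The plan is to follow the roadmap of Section~\ref{sec:coherence} and route everything through the notion of a \emph{core}. Since $\unamb{\Binds}{\Gamma}{t}$ holds, fix a core $G=(V,A,E_\rhd,E_{eq}')$ of $G_\Binds$: it meets the acyclicity condition (no path from $\pi.2$ back to $\pi.0$ or $\pi.1$) and every open variable of $\Binds$ has a flow edge incident on it. As $G$ only discards unification edges, every solution to $G_\Binds$ --- in particular the given $S$ and $S'$ --- is still a solution to $G$ (the solution set only grows when unification edges are dropped). And since $\cong_R$ depends only on the flow edges and the vertex assignments, it suffices to prove the claim for cores: any two solutions to $G$ that agree on $R=\ftv{\Gamma,t}$ are $\cong_R$-equivalent.

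I would prove this by induction on $|U_{S,S'}|$, where $U_{S,S'}=\{v\in V\mid S(v)\neq S'(v)\}$; the case $|U_{S,S'}|=0$ is trivial. For the step, topologically rank $G$ so that $E_{eq}'$-connected vertices share a rank and, along each $E_\rhd$ edge, the input endpoint outranks the output endpoint, so that leaf-most vertices get the largest ranks. Choose a vertex $v$ of maximal rank among those where $S$ and $S'$ disagree, say $S(v)=\mconst[A]\neq\mconst[B]=S'(v)$, and let $I$ be its $E_{eq}'$-component; all of $I$ then carries $\mconst[A]/\mconst[B]$ because solutions respect $E_{eq}'$. Since $G$ is a core and $A(v)$ is open, some non-empty set of flow edges $\eta_1,\dots,\eta_k$ touches $I$; and since every $E_\rhd$-predecessor of a node of $I$ strictly outranks $v$, those predecessors are frozen --- call their common assignments $\mconst_5,\dots,\mconst_8$. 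By principality, set $\mconst[J]=\bigsqcup\{(\mconst_5,\mconst_6),\dots,(\mconst_7,\mconst_8)\}$, and let $S_1$ (resp.\ $S_1'$) agree with $S$ (resp.\ $S'$) everywhere except on $I$, which it sends to $\mconst[J]$.

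Three checks then close the step. (a) $S_1$ and $S_1'$ are solutions: each constraint with a node of $I$ as its output has the form $\bind{(\mconst_i,\mconst_j)}{\mconst[A]}$ with $\mconst_i,\mconst_j$ frozen, hence still holds with $\mconst[J]$ in place of $\mconst[A]$ by the definition of the principal join; each constraint with a node of $I$ as an input, say $\bind{(\mconst_m,\mconst[A])}{\mconst_n}$, is recovered from the \textbf{Closure} law applied together with the morphism $\morph{\mconst[J]}{\mconst[A]}$ --- which principality also supplies --- giving $\bind{(\mconst_m,\mconst[J])}{\mconst_n}$; symmetrically for $\mconst[B]$. (b) $S\cong_R S_1$ and $S'\cong_R S_1'$: these solutions differ only on $I$, and for every flow edge $\eta_i$ incident on $I$ the two functional interpretations coincide --- each such identity, e.g.\ $\sfont{bind}_{\mconst[A],\mconst_4,\mconst_2}(\sfont{bind}_{\mconst_7,\mconst_8,\mconst[A]}~x~y)~z=\sfont{bind}_{\mconst[J],\mconst_4,\mconst_2}(\sfont{bind}_{\mconst_7,\mconst_8,\mconst[J]}~x~y)~z$, is an instance of the \textbf{Associativity} law, the binds it mentions being available by \textbf{Diamond} and \textbf{Closure}. (c) $S_1$ and $S_1'$ now agree on all of $I$, so $|U_{S_1,S_1'}|<|U_{S,S'}|$, and the induction hypothesis gives $S_1\cong_R S_1'$; transitivity of $\cong_R$ then yields $S\cong_R S'$.

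The step I expect to be the main obstacle is (b): one must check that for \emph{every} shape of flow edge --- both cases $i\in\{0,1\}$ of the functional-view definition, and whether the edge carries dataflow into or out of $I$ --- the bind operators needed even to state the associativity identity actually exist (this is precisely where \textbf{Diamond} and \textbf{Closure} do the work of ``completing'' the associativity square around $\mconst[J]$), and that the two resulting $\lambda$-terms are then provably equal using only \textbf{Associativity}. A secondary, bookkeeping-heavy point is making the ranking precise in the presence of the undirected $E_{eq}'$ edges --- quotienting by $E_{eq}'$, checking the quotient stays acyclic, and justifying that everything feeding $I$ is frozen --- together with noting that unification edges between value-type index variables may be ignored throughout, since binds are phantom in those indices.
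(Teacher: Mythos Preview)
Your proposal is correct and follows essentially the same route as the paper's proof: pass to a core, induct on $|U_{S,S'}|$, pick a maximal-rank $E_{eq}'$-component $I$ of disagreement, replace its value by the principal join $\mconst[J]$ to get $S_1,S_1'$, verify these are solutions via \textbf{Closure} and that $S\cong_R S_1$, $S'\cong_R S_1'$ via \textbf{Associativity}, then conclude by the induction hypothesis and transitivity. Your added remarks (quotienting by $E_{eq}'$ to make the ranking precise, checking both $i\in\{0,1\}$ shapes of flow edges, and irrelevance of value-type index variables) are exactly the bookkeeping the paper leaves implicit.
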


\iffull

The following lemma is useful for proving coherence.

\begin{lemma}[Coherence of bind pairs]
\label{lem:bind-coherence}
Given a polymonad $(\mconstrs, \Sigma)$; 

\begin{enumerate}
\item For all 
$\aset{b_{PQR}@\bind{(P,Q)}{R}, b_{RST}@\bind{(R,S)}{T},
       b_{PQR'}@\bind{(P,Q)}{R'}, b_{R'ST}@\bind{(R',S)}{T}} \subseteq \Sigma$\\
and all $x@P a, f@a -> Q b, g@b -> S c$\\
$b_{RST}~(b_{PQR}~x~f)~g = b_{R'ST}~(b_{PQR'}~x~f)~g$

\item For all 
$\aset{b_{PQR}@\bind{(P,Q)}{R}, b_{STQ}@\bind{(S,T)}{Q}, 
       b_{PQ'R}@\bind{(P,Q')}{R}, b_{STQ}@\bind{(S,T)}{Q'}} \subseteq \Sigma$\\
and all $x:P a, f@a -> S b, g@b -> T c$\\
$b_{PQR}~x(\lambda x.b_{STQ}~(f x)~g) = b_{PQ'R}~x(\lambda x.b_{STQ'}~(f x)~g)$
\end{enumerate}
\begin{proof}
From Associativity (1) and (2), and transitivity.
\end{proof}
\end{lemma}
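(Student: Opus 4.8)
The plan is to derive both equalities purely from the \textbf{Diamond} law of Definition~\ref{def:polymonad} together with the associativity equation of Theorem~\ref{thm:bind-as-mubl}, read in each of its two directions (what the proof sketch calls \textbf{Associativity 1} and \textbf{2}). The essential observation is that each equality compares two left-nested (resp. right-nested) compositions that route through \emph{different} intermediate constructors ($R$ versus $R'$ in part~1, $Q$ versus $Q'$ in part~2). Rather than re-associating each side into its own normal form and then reconciling two possibly distinct intermediaries, I would produce a \emph{single} re-associated bind pair and show it equals each of the two sides in turn. Because this shared mediator is obtained once, from a single use of the Diamond law, the argument never compares two different intermediaries; in particular principality is not needed.

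For part~1, I would first apply the Diamond law to the composable pair $\{b_{PQR}, b_{RST}\}$ (of shapes $\bind{(P,Q)}{R}$ and $\bind{(R,S)}{T}$), obtaining a constructor $\hat S$ together with binds $b_3$ of shape $\bind{(Q,S)}{\hat S}$ and $b_4$ of shape $\bind{(P,\hat S)}{T}$, all in $\Sigma$. The quadruple $\{b_{PQR}, b_{RST}, b_3, b_4\}$ now matches the associativity pattern, so reading the equation left-to-right gives
\[ b_{RST}\,(b_{PQR}\;x\;f)\;g \;=\; b_4\;x\;(\lambda y.\,b_3\,(f\;y)\;g). \]
Next I would observe that the \emph{other} quadruple $\{b_{PQR'}, b_{R'ST}, b_3, b_4\}$, reusing the very same $b_3, b_4$, also matches the associativity pattern (the shapes $\bind{(P,Q)}{R'}$, $\bind{(R',S)}{T}$, $\bind{(Q,S)}{\hat S}$, $\bind{(P,\hat S)}{T}$ line up), and all four binds lie in $\Sigma$ (the first two by hypothesis, the latter two from the Diamond step). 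Hence reading associativity right-to-left yields $b_4\;x\;(\lambda y.\,b_3\,(f\;y)\;g) = b_{R'ST}\,(b_{PQR'}\;x\;f)\;g$, and transitivity closes the case.

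Part~2 is dual, using the converse direction of the Diamond law. Here both sides are right-nested, so I would apply Diamond to the pair $\{b_{STQ}, b_{PQR}\}$ (read as the right-nested side $\bind{(S,T)}{Q}$, $\bind{(P,Q)}{R}$) to obtain $\hat P$ with binds $b_1$ of shape $\bind{(P,S)}{\hat P}$ and $b_2$ of shape $\bind{(\hat P,T)}{R}$ in $\Sigma$. Reading associativity one way, the left-hand side $b_{PQR}\;x\;(\lambda y.\,b_{STQ}\,(f\;y)\;g)$ equals the left-nested form $b_2\,(b_1\;x\;f)\;g$; then, reusing $b_1, b_2$, the quadruple $\{b_{PQR'}, b_{STQ'}, b_1, b_2\}$ again matches the pattern, so reading associativity the other way rewrites $b_2\,(b_1\;x\;f)\;g$ into the right-hand side $b_{PQR'}\;x\;(\lambda y.\,b_{STQ'}\,(f\;y)\;g)$, and transitivity finishes.

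The only real care needed — and the step I expect to be most error-prone — is the bookkeeping of which constructor plays which role when matching the hypotheses of the Diamond and Associativity laws, since the metavariable names in those laws clash with the $P,Q,R,S,T$ of the statement. I would fix a fresh naming $A_1,\dots,A_6$ for the associativity pattern and verify the four shape equations explicitly in each direction and in each part. Once the shapes are aligned, the content is exactly the ``re-associate, then re-associate back through the same mediator'' idea, so no further polymonad laws (and, notably, neither principality nor closure) are invoked.
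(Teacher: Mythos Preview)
Your proposal is correct and is precisely the argument the paper's one-line proof is gesturing at: use the Diamond law once to obtain a common mediating pair $(b_3,b_4)$ (resp. $(b_1,b_2)$), apply the Associativity equation to each of the two given quadruples against this same mediator, and conclude by transitivity. The only cosmetic difference is terminology: what the paper calls ``Associativity (1) and (2)'' is the Diamond/Associativity split of Definition~\ref{def:polymonad}, which you invoke explicitly.
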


\section{Foundations}

\mwh{This section is going to be redone, with Gavin defining
  polymonads and saying they are productoids. We need to give a "bind
  oriented" presentation of this stuff. One key here is to prove that
  unit$\_$M in productoids is the same as bind$\_$ID,ID,M (or whatever it
  is). The conjecture is that these sort of equivalences will follow
  from the productoid rules. Gavin will push on this, to work us up to
  the point that our language matches productoids with the hand
  wave that we are not dealing with type indexes.}

This section formally defines a polymonad using category theory, and
then shows how this definition can be interpreted in a programming
language. This section can be safely skipped on a first read.

\subsection{A categorical account of polymonads}
\label{sec:polymonads}

A monad on a category consists of a endofunctor and a pair of natural
transformations involving that endofunctor satisfying a number of
conditions~\cite{maclane71}. As mentioned in the introduction, we have
come across a number of monad-like programming patterns that, in
contrast, involve a number of endofunctors and natural transformations
between them. This leads to the following rather compact definition.
\begin{definition}[Polymonad]
\label{defn:polymonad-category}
A polymonad over a category $\cat{C}$ is a pair of a set
$\functors$ of endofunctors on $\cat{C}$ and a set $\joins$ of natural
transformations:

\begin{itemize}
\item $\functors  = \aset{M ~|~ M\colon\cat{C}\to\cat{C}}$ 
\item $\joins = \aset{\mu\colon MN\To P ~|~ M,N,P\in\functors}$
\end{itemize}
These must satisfy the following conditions 
(where we write $MN \To P \in \joins$ to mean $\exists \mu. \mu\colon MN \To P \in \joins$)

%

\begin{enumerate}

\item{\bfseries\textit{Identity}}: 

  (i) $I \in \functors$ is the identity functor and $MI = IM = M$ for
  all $M \in \functors$.

  (ii) for all $M \in \functors$, $MI \To M \in \joins$ is an identity map 
  (equivalently, $M \To M \in \joins$).

\item{\bfseries\textit{\diamondlaw}}: 
      For all $\aset{R, S, T, W}\subseteq \functors$, 
      $\exists U \in \functors$ such that
      $\aset{RS \To U, UT \To W} \subseteq \joins$ 
      if and only if 
      $\exists V \in \functors$ 
      such that $\aset{ST \To V, RV \To W} \subseteq \joins$.


\item{\bfseries\textit{Associativity}}: %
  For all $\aset{R, S, T, U, V, W}\subseteq \functors$, \\if
  $\aset{\mu_1,\mu_2,\mu_3,\mu_4} \subseteq \joins$, then the
  following diagram commutes:

\[\xymatrix{
  RST \ar[r]^{\mu_1T} \ar[d]_{R\mu_3} & UT \ar[d]^{\mu_2} \\
  RV  \ar[r]_{\mu_4}                 & W
}\]
\end{enumerate} 
\end{definition}

It is important to observe that this definition is not
equivalent to simply a collection of monads (connected, perhaps, by
monad morphisms). A functor $M\in\functors$ may, or may not,
be pointed (meaning $II\To M\in\joins$) and may, or may not, have a
multiplication (meaning $MM\To M\in\joins$). The definition of a
polymonad is considerably more general. However, let us reassure the
reader immediately with the following theorem.

\begin{theorem}
Every monad, $\triple{R}{\eta_{R}}{\mu_{R}}$, is a polymonad.
\end{theorem}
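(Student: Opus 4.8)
The plan is to build an explicit polymonad $(\mathcal{F},\mathcal{J})$ in the sense of Definition~\ref{defn:polymonad-category} out of the monad $(R,\eta_R,\mu_R)$ on $\cat{C}$, and then verify the three conditions by a short finite case analysis. For the functors take $\mathcal{F}=\{\mathrm{Id},R\}$ (in the degenerate case $R=\mathrm{Id}$ simply take $\mathcal{F}=\{\mathrm{Id}\}$), where $\mathrm{Id}$ is the identity functor. For the natural transformations take $\mathcal{J}$ to consist of exactly five maps: $\mathrm{id}_{\mathrm{Id}}:\mathrm{Id}\,\mathrm{Id}\Rightarrow\mathrm{Id}$; $\mathrm{id}_R:R\,\mathrm{Id}\Rightarrow R$ and $\mathrm{id}_R:\mathrm{Id}\,R\Rightarrow R$ (these type-check because $R\,\mathrm{Id}=\mathrm{Id}\,R=R$ as functors); the unit $\eta_R:\mathrm{Id}\,\mathrm{Id}\Rightarrow R$; and the multiplication $\mu_R:RR\Rightarrow R$. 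Intuitively this is the structure obtained by ``forgetting everything but $R$ and $\mathrm{Id}$''. It remains to check conditions (1)--(3) of Definition~\ref{defn:polymonad-category}.

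For the \emph{Identity} condition, (i) is immediate since $\mathrm{Id}$ really is the identity functor and functor composition gives $M\,\mathrm{Id}=\mathrm{Id}\,M=M$ for $M\in\mathcal{F}$, and (ii) holds because the members $\mathrm{Id}\,\mathrm{Id}\Rightarrow\mathrm{Id}$ and $R\,\mathrm{Id}\Rightarrow R$ of $\mathcal{J}$ are identity maps by construction. For the \emph{Diamond} condition, list the members of $\mathcal{J}$ as source/target triples $(\mathrm{Id},\mathrm{Id},\mathrm{Id})$, $(R,\mathrm{Id},R)$, $(\mathrm{Id},R,R)$, $(\mathrm{Id},\mathrm{Id},R)$, $(R,R,R)$, and note two facts: for \emph{every} pair $(A,B)\in\mathcal{F}^2$ there is a join $AB\Rightarrow R$ in $\mathcal{J}$, whereas a join into $\mathrm{Id}$ exists only for the pair $(\mathrm{Id},\mathrm{Id})$. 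Feeding these into the sixteen instantiations $\{A,B,C,D\}\subseteq\mathcal{F}$, one checks that the left side of the biconditional (some $U$ with $AB\Rightarrow U$ and $UC\Rightarrow D$ in $\mathcal{J}$) holds iff $D=R$ or $A=B=C=D=\mathrm{Id}$, and the right side (some $V$ with $BC\Rightarrow V$ and $AV\Rightarrow D$) holds under exactly the same condition; the observation ``$AB\Rightarrow R$ always exists'' trivializes the $D=R$ half of both directions.

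For \emph{Associativity}, the commuting-square requirement reads $\mu_2\circ(\mu_1 C)=\mu_4\circ(A\,\mu_3)$ for all $\mu_1:AB\Rightarrow U$, $\mu_2:UC\Rightarrow D$, $\mu_3:BC\Rightarrow V$, $\mu_4:AV\Rightarrow D$ drawn from $\mathcal{J}$. Since the only non-identity members of $\mathcal{J}$ are $\eta_R$ and $\mu_R$, the source/target typing is so restrictive that each admissible configuration is one of: all four maps whiskered identities, in which case both composites equal the whiskered identity on $ABC$ and agree trivially; or, up to whiskering, one of the three monad axioms — $\mu_R\circ(\eta_R R)=\mathrm{id}_R$ (from $A=B=\mathrm{Id}$, $C=R$), $\mu_R\circ(R\,\eta_R)=\mathrm{id}_R$ (from $A=R$, $B=C=\mathrm{Id}$), and $\mu_R\circ(\mu_R R)=\mu_R\circ(R\,\mu_R)$ (from $A=B=C=R$) — each of which holds by the laws of the monad. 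Hence all three conditions hold and $(\mathcal{F},\mathcal{J})$ is a polymonad.

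The one place requiring care — though the computations are routine — is organizing the Associativity case analysis so that it is manifest that no configuration has been overlooked: the crux is the claim that the typing of the members of $\mathcal{J}$ forces every ``mixed'' choice of $\mu_1,\dots,\mu_4$ to collapse to one of the three monad axioms, everything else reducing to whiskered identities. This is the categorical mirror of the bind-level argument used for the bind-based definition, and it is worth stating explicitly why a polymonad built this way is genuinely more than ``a collection of monads'': $R$ need not be pointed or have a multiplication relative to the other functors, but here $\mathcal{F}$ has only $\mathrm{Id}$ and $R$, so the extra generality is vacuous and the monad data suffices.
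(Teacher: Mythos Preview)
Your proposal is correct and follows the same construction as the paper: take $\mathcal{F}=\{I,R\}$ and let $\mathcal{J}$ consist of the identity maps together with $\eta_R$ and $\mu_R$. The paper's proof is a single sentence that writes down this data and asserts that the three conditions are ``simple to see''; your version carries out the verification the paper omits, in particular the Diamond case analysis (joins into $\mathrm{Id}$ force everything to be $\mathrm{Id}$, while joins into $R$ always exist) and the reduction of the Associativity squares to the three monad laws plus whiskered identities.
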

\begin{proof}
We take as our collection of endofunctors the set $\{ R, I \}$ and as
our collection of natural transformations the set $\{ \eta_R\colon
II\To R, \mu_{R}\colon RR\To R, \mathit{id}_R\colon RI\To R,
\mathit{id}_I\colon II\To I\}$. It is simple to see that the three
conditions of Definition~\ref{defn:polymonad-category} hold.
\end{proof}

\noindent
Interestingly, we can prove the opposite direction.
\begin{theorem}
A polymonad $(\{ R, I \}, \{ \eta_R\colon II\To R, \mu_{R}\colon RR\To
R, \mathit{id}_R\colon RI\To R, \mathit{id}_I\colon II\To I\})$ gives rise
to a monad $\triple{R}{\eta_R}{\mu_R}$.
\end{theorem}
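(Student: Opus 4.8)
The plan is to verify the monad laws for $(R,\eta_R,\mu_R)$ by reading each one off as an instance of the conditions in Definition~\ref{defn:polymonad-category}. Naturality of $\eta_R$ and $\mu_R$ is part of the given data, and $\eta_R$ already has the right domain since $II = I$ is the identity functor, so what remains are the two unit equations $\mu_R\circ(\eta_R R) = \mathit{id}_R = \mu_R\circ(R\eta_R)$ and the associativity equation $\mu_R\circ(\mu_R R) = \mu_R\circ(R\mu_R)$. Throughout I will use the fact, immediate from clause (i) of the \emph{Identity} condition together with clause (ii), that $\mathit{id}_R$ serves simultaneously as the identity $R\To R$ and as a join $RI\To R$ and $IR\To R$ (and similarly for $\mathit{id}_I$); this is what makes the bookkeeping below go through.

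First I would handle associativity: instantiate the metavariables $R,S,T,U,V,W$ of the \emph{Associativity} square all by $R$, and $\mu_1,\mu_2,\mu_3,\mu_4$ all by $\mu_R\colon RR\To R$. All four maps lie in $\joins$ by hypothesis, so the square commutes, and its two legs are exactly $\mu_R\circ(\mu_R R)$ and $\mu_R\circ(R\mu_R)$, giving monad associativity.

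Next the unit laws. For the left unit I would instantiate the metavariables $R,S,T,U,V,W$ as $I,I,R,R,R,R$ and take $\mu_1 = \eta_R\colon II\To R$, $\mu_2 = \mu_R\colon RR\To R$, and $\mu_3 = \mu_4 = \mathit{id}_R\colon IR\To R$, all of which are in $\joins$. The square then reads $\mu_R\circ(\eta_R R) = \mathit{id}_R\circ\mathit{id}_R = \mathit{id}_R$, where the left-hand leg uses that whiskering by the identity functor is trivial. Symmetrically, for the right unit I would instantiate $R,S,T,U,V,W$ as $R,I,I,R,R,R$ and take $\mu_1 = \mu_2 = \mathit{id}_R$, $\mu_3 = \eta_R$, $\mu_4 = \mu_R$, so that the square reads $\mathit{id}_R = \mu_R\circ(R\eta_R)$.

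I do not anticipate any real obstacle: the entire argument is three applications of the \emph{Associativity} square, and the only point requiring care is the functor-composition bookkeeping — especially the reuse of $\mathit{id}_R$ as a map with several (equal) sources and targets, which is legitimate precisely because $I$ is the identity functor on the nose. Note that the \emph{Diamond} condition plays no role in this direction.
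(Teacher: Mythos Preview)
Your proposal is correct and is exactly the argument the paper gives: three instances of the polymonad \emph{Associativity} square, with the same instantiations (the paper only draws the left-unit diagram explicitly and says ``similarly'' for the right unit and ``clearly'' for associativity). Your remark about $\mathit{id}_R$ serving as both $RI\To R$ and $IR\To R$ via the Identity clause is precisely the bookkeeping the paper suppresses by writing $IIR = IR$.
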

\begin{proof}
The left unit law for a monad is given by the following diagram that
commutes by the associativity requirement of the polymonad.

\[
\xymatrix{
IIR = IR \ar[r]^{\eta R} \ar[d]_{I\mathit{id}} & RR \ar[d]^{\mu}\\
IR \ar[r]_{\mathit{id}} & R
}
\]
The right unit law holds similarly. The associativity law of the monad
is clearly given by the associativity requirement of the polymonad.
\end{proof}
\noindent
In fact, polymonads support more general versions of the familiar
monad laws.  Given a polymonad $\PM$, we refer to a map $\eta_R: II
\To R \in \joins$ as a \emph{unit} for $R$. We also refer to a map
$\delta: RI \To S \in \joins$ as a \emph{morphism} from $R$ to $S$. We
can show that polymonads support generalizations of the monad morphism
and monad unit laws.

\begin{theorem}[Generalized monad laws]
Let \PM\ be a polymonad. 

\noindent
\begin{tabular}[t]{p{.27\textwidth}r}
1. For all $\delta\colon RI\To S, \eta_R\colon II\To R, \eta_S\colon II\To S\in
  \joins$ the following diagram commutes:
&
\begin{minipage}[b]{.17\textwidth}
\[\xymatrix{
  I \ar[r]^{\eta_R} \ar[dr]_{\eta_S} & R\ar[d]^{\delta} \\
                                 & S
} 
\]
\end{minipage} 
\end{tabular} 

\noindent
\begin{tabular}[t]{p{.25\textwidth}r}
2. For all $\delta_1\colon MI\To M', \delta_2\colon NI\To
  N', \delta_3\colon LI\To L', \mu_1\colon MN\To L, \mu_2\colon M'N'\To L'\in
\joins$, the following diagram commutes:
&
\begin{minipage}[b]{.17\textwidth}
\[\xymatrix{
  MN \ar[r]^{\delta_1 \delta_2}\ar[d]_{\mu_1} & M'N'\ar[d]^{\mu_2} \\
  L  \ar[r]_{\delta_3}                     & L'
}\]
\end{minipage} 
\end{tabular} 

\noindent
\begin{tabular}[t]{p{.27\textwidth}r}
3. For all $\eta\colon II\To R, \mu\colon RS\To T, \delta\colon SI\To T\in
  \joins$ the following diagram commutes:
&
\begin{minipage}[b]{.17\textwidth}
\[\xymatrix{
  S \ar[r]^{\eta S} \ar[rd]_{\delta} & RS \ar[d]^{\mu} \\
                  & T 
}
\]
\end{minipage} 
\end{tabular}

\noindent
\begin{tabular}[t]{p{.27\textwidth}r}
4. For all $\eta\colon II\To S, \mu\colon RS\To T, \delta\colon RI\To T\in
  \joins$ the following diagram commutes:
&
\begin{minipage}[b]{.17\textwidth}
\[\xymatrix{
  R \ar[r]^{R\eta} \ar[rd]_{\delta} & RS \ar[d]^{\mu} \\
                  & T 
}
\]
\end{minipage} 
\end{tabular} 
\end{theorem}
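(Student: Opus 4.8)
The plan is to obtain each of the four diagrams as an instance of the \textbf{Associativity} square (clause~3 of Definition~\ref{defn:polymonad-category}), specialised by taking some of the three composed endofunctors to be the identity $I$ and then using the \textbf{Identity} clause --- which supplies $\mathit{id}_M\colon M\Rightarrow M$ as a member of the natural-transformation set for every $M$ --- to collapse the edges that have become whiskerings by $I$. The bookkeeping is always the same: pick the three functors $R,S,T$ of the associativity square so that one leg reads off as the composite we want and the other as an identity post-composed with the remaining map, then check that the four transformations involved genuinely have the shapes demanded by clause~3 once $MI=IM=M$ is unfolded.

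Concretely, for Law~1 I take all of $R,S,T$ to be $I$ and the four square-maps to be $\eta_R$, $\delta$, $\eta_S$ and $\mathit{id}_S$ (the last one is in the natural-transformation set by the \textbf{Identity} clause); the square then says $\delta\circ\eta_R=\mathit{id}_S\circ\eta_S=\eta_S$. For Law~3 I take the three functors to be $I,I,S$ and the square-maps to be $\eta$, $\mu$, $\delta$ and $\mathit{id}_T$, so that the top leg is $\mu\circ(\eta S)$ and the bottom leg is $\mathit{id}_T\circ\delta$. Law~4 is the mirror image, with functors $R,I,I$ and square-maps $\delta$, $\mathit{id}_T$, $\eta$, $\mu$, giving top leg $\mathit{id}_T\circ\delta$ and bottom leg $\mu\circ(R\eta)$. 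In each of these three cases the verification that the chosen maps lie in the natural-transformation set is immediate from the hypotheses of the law together with the \textbf{Identity} clause.

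Law~2 is the one that does not fit into a single associativity square, because the horizontal composite $\delta_1\delta_2$ re-types both tensor factors at once, whereas the top (respectively left) leg of an associativity square only re-types the left (respectively right) factor. I would therefore split $\delta_1\delta_2$ into two vertical stages, say $\delta_1\delta_2=(\delta_1N')\circ(M\delta_2)$ or $(M'\delta_2)\circ(\delta_1N)$, and glue two associativity squares sharing an intermediate join out of $M'N$ (or $MN'$). This is also the step I expect to be the main obstacle: the \textbf{Associativity} clause only fires once \emph{all four} of its maps are already known to be in the natural-transformation set, and the \textbf{Diamond} clause only asserts that \emph{some} witnessing functor and join exist. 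So the argument would instantiate \textbf{Diamond} at $R=M'$, $S=N$, $T=I$, $W=L'$ --- whose right-hand side is witnessed by $\delta_2$ and $\mu_2$ --- to produce the missing intermediate join, and then use a small coherence lemma in the spirit of Lemma~\ref{lem:bind-coherence} (two uses of associativity plus transitivity) to check that this particular witness composes to $\delta_3\circ\mu_1$ on one side and to $\mu_2\circ(\delta_1\delta_2)$ on the other. If that last step cannot be pushed through with the three clauses alone, the fix would be to add a closure clause to the categorical definition, exactly as the programming-language presentation does with its separate \textbf{Closure} law and as the draft status of this section already anticipates; note that Law~2 is precisely the diagrammatic counterpart of \textbf{Morphism~2} in Theorem~\ref{thm:bind-as-mubl}, so whatever argument settles that case transfers here.
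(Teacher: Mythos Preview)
Your plan is essentially the paper's own. For Laws~1, 3, and 4 the paper also collapses a single Associativity square after inserting one identity map; the only difference is cosmetic --- you park the identity at the ``output'' corner (using $\mathit{id}_S$ or $\mathit{id}_T$), while the paper tends to insert $\mathit{id}_I$ or $\mathit{id}_R$ along an inner edge. Both choices work once $MI=IM=M$ is used freely to read a map $MI\Rightarrow N$ as $IM\Rightarrow N$.

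For Law~2 the paper does exactly what you propose: it factors $\delta_1\delta_2=(\delta_1N')\circ(M\delta_2)$ and pastes two associativity squares along a dotted middle arrow $\mu_2'\colon MN'\Rightarrow L'$, citing Diamond for its existence. Your concern that Diamond only furnishes \emph{some} intermediate, not specifically one landing at $MN'$, is exactly the point the paper's one-line appeal elides. The categorical Definition~\ref{defn:polymonad-category} here has no explicit \textbf{Closure} clause (unlike the main Definition~\ref{def:polymonad}), and this appendix is explicitly flagged as unreviewed draft material; so your hedge --- that a closure-style axiom or a coherence step in the spirit of Lemma~\ref{lem:bind-coherence} may be needed to pin down $\mu_2'$ --- is sharper than the paper's own sketch.
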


\iffull
\begin{proof}
\begin{enumerate}
\item 
The following diagram commutes, from Associativity, and from the Identity law, is equivalent to our goal.

\[\xymatrix{
    I = III \ar[r]^{I \mathit{id}_I}\ar[d]_{\eta_RI} & II \ar[d]^{\eta_S} \\
    R = RI \ar[r]_\delta                     & S
}\]

\item
From \diamondlaw{}, the map $\mu_2'$ exists in $\joins$, and that the
diagram below commutes. Using the identity laws, the diagram
simplifies to our goal.

\begin{small}
\[\xymatrix{
 MN\!\!=\!\!MNI \ar[r]^{M\delta_2}\ar[d]_{\mu_1.I} & MN' \eq MIN' \ar@{..>}[d]^{\mu_2'}\ar[rr]^{\delta_1N'} & & M'N'\ar[d]^{\mu_2} \\
  L=LI \ar[r]_{\delta_3}                         & L' \eq L'I \ar[rr]_{id_{L'}}                               & & L'
}\]
\end{small}
\item
We have from the identity law and
associativity, that the diagram below commutes.

\[\xymatrix{
 SII \eq IIS \ar[r]^{~~~~\eta_R S}\ar[d]_{S\mathit{id}_{I}}  & RS\ar[d]^\mu \\
 SI \ar[r]_\delta                            & T 
}\]

\item 
We have from the identity law and associativity that the following diagram commutes:

\[\xymatrix{
RII\eq IRI \ar[d]_{R \eta_S}\ar[r]^{~~~~I\mathit{id}_R} & IR\ar[d]^\delta \\
RS \ar[r]_{\mu}                            & T
}\]
\end{enumerate}
\end{proof}

\noindent
Interestingly, we can show polymonads are not quite as liberal as one
might think. In particular, an important uniqueness property holds,
which has ramifications in our programming model.
\begin{theorem}[Uniqueness of joins]
For any polymonad $\PM$, if $\aset{\mu_1\colon MN\To P, \mu_2\colon MN\To P}
\subseteq \joins$ then $\mu_1 = \mu_2$.

\begin{proof}
The following diagram commutes from the Associativity property of a polymonad, where $\mathit{id}_M\colon IM\To M$, $\mathit{id}_P\colon IP\To P$.
\[
\xymatrix{
IMN \ar[r]^{\mathit{id}_M N} \ar[d]_{I\mu_1} & MN \ar[d]^{\mu_2} \\
IP \ar[r]_{\mathit{id}_P} & P
}
\]
\end{proof}

\end{theorem}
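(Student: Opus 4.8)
The plan is to derive this uniqueness property directly from the \textbf{Associativity} axiom of Definition~\ref{defn:polymonad-category}, instantiated so that two of its four natural transformations collapse to identities. The identity functor $I$ is exactly the lever that makes this work: it lets us present the pair $\{\mu_1,\mu_2\}$ as the two ``parallel'' sides of a single associativity square whose other two sides are forced to be identities.

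Concretely, I would invoke \textbf{Associativity} with the data $R = I$, $S = M$, $T = N$, $U = M$, $V = P$, $W = P$, and with the four required maps chosen as follows: for $RS \To U$, i.e.\ $IM \To M$, take the identity map supplied by clause~(ii) of the \textbf{Identity} law (legitimate since $IM = M$ by clause~(i)); for $ST \To V$, i.e.\ $MN \To P$, take $\mu_1$; for $UT \To W$, i.e.\ $MN \To P$, take $\mu_2$; and for $RV \To W$, i.e.\ $IP \To P$, take again the identity map from the \textbf{Identity} law (since $IP = P$). All four lie in $\joins$ — the two ``outer'' ones by the \textbf{Identity} law and the two ``inner'' ones by hypothesis — so the axiom applies and the corresponding square commutes:
\[
\xymatrix{
IMN \ar[r]^{\mathit{id}\,N} \ar[d]_{I\,\mu_1} & MN \ar[d]^{\mu_2} \\
IP \ar[r]_{\mathit{id}} & P
}
\]

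Then I would just read the two composites off this square. Using $IMN = MN$ from the \textbf{Identity} law, the top edge $\mathit{id}\,N$ is the identity natural transformation on $MN$, so the top-then-right composite is $\mu_2$. Using $IP = P$, the left edge $I\,\mu_1$ is just $\mu_1\colon MN \To P$ (left-whiskering by the identity functor changes nothing), and the bottom edge is the identity on $P$, so the left-then-bottom composite is $\mu_1$. Commutativity of the square therefore reads $\mu_1 = \mu_2$, which is the claim.

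Since the argument is a single application of \textbf{Associativity}, there is no real obstacle. The only thing needing a moment's care is the type bookkeeping: one must use the functor equalities $IM = M$ and $IP = P$ from clause~(i) of the \textbf{Identity} law so that the domains and codomains of $\mathit{id}\,N$, $I\,\mu_1$ and the two identity maps match the shapes \textbf{Associativity} demands, together with the elementary whiskering facts that an identity natural transformation whiskered by a functor is again the identity and that any natural transformation left-whiskered by the identity functor is itself. With those in place the commuting square literally \emph{is} the equation $\mu_1 = \mu_2$.
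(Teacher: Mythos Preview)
Your proposal is correct and follows exactly the same approach as the paper: instantiate the Associativity square with $R=I$, $S=M$, $T=N$, $U=M$, $V=P$, $W=P$, using the identity maps supplied by the Identity law on the horizontal edges, and then read off $\mu_1=\mu_2$ from the commuting square. You have merely spelled out in more detail the type bookkeeping and whiskering facts that the paper leaves implicit.
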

\fi
\noindent
\iffull
Alternatively, we can define polymonads where the unit and monad
morphisms are distinguished from the multiplication natural
transformations. Appendix~\ref{app:more-categories} sketches this
definition but we leave further category theory to future work.
\else
Alternatively, we can define polymonads where the unit and monad
morphisms are distinguished from the multiplication natural
transformations. We sketch this definition in our supplementary
document and leave further category theory to future work.
\fi

\iffull 
The following theorem shows that the polymonads are quite
foundational.  The first shows that polymonads subsume Filinski's
layered monads~\cite{filinski1999representing}, which Filinski showed
are equivalent to monads connected by monad morphisms. The second
gives an example of a weak categorical structure that can not be
represented by layered monads or other models that require all
functors to have full monadic structure. 
\else 
We conclude by giving some
further examples of polymonads.  
\fi

\begin{theorem}
\begin{enumerate}
\item Any two monads, $\triple{R}{\eta_{R}}{\mu_{R}}$ and
$\triple{S}{\eta_{S}}{\mu_{S}}$ on a category $\cat{C}$ and a monad morphism, $\delta\colon
R\To S$, between them give rise to a polymonad.
\item Let $R$ be an endofunctor over $\cat{C}$, $(S,\eta_S,\mu_S)$ 
a monad over $\cat{C}$, and $\delta\colon R\To S$ be a natural 
transformation. These give rise to a polymonad.
\end{enumerate}
\end{theorem}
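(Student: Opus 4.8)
\noindent The strategy for both parts is the same: build an explicit pair $(\functors,\joins)$ and check the three conditions of Definition~\ref{defn:polymonad-category}, with the Identity condition holding by construction and the real content being a finite case check of the Diamond and Associativity conditions.

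For part~(1), put $\functors = \{I, R, S\}$, so that $RI = IR = R$ and $SI = IS = S$, and let $\joins$ consist of: the identity natural transformation $\mathrm{id}_M \colon M \To M$ for each $M \in \functors$ (which simultaneously witnesses $MI \To M$ and $IM \To M$); the units $\eta_R \colon II \To R$ and $\eta_S \colon II \To S$; the multiplications $\mu_R \colon RR \To R$ and $\mu_S \colon SS \To S$; the morphism $\delta \colon R \To S$ (witnessing both $RI \To S$ and $IR \To S$); and the two ``mixed'' transformations $\mu_S \circ (\delta S) \colon RS \To S$ and $\mu_S \circ (S\delta) \colon SR \To S$ obtained by whiskering $\delta$ into $\mu_S$. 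The Identity condition is then immediate. For the Diamond condition one ranges over all $R', S', T', W' \in \functors$; most of the $3^4$ cases are vacuous on both sides, because $\joins$ contains no transformation out of a composite such as $SR$ or $RR$ into the required codomain, and the remaining cases have an evident witness built from $\mu_R$, $\mu_S$, $\delta$, or a mixed transformation. For Associativity, whenever all four of $\mu_1,\dots,\mu_4$ happen to lie in $\joins$, the required square commutes by a short diagram chase using only the monad axioms for $R$ and for $S$, the naturality of $\delta$, and the monad-morphism laws $\delta \circ \eta_R = \eta_S$ and $\delta \circ \mu_R = \mu_S \circ (\delta\delta)$ (where $\delta\delta$ is horizontal composition); for instance, the instance $R' = S' = R$, $T' = S$ reduces, after one application of $\delta \circ \mu_R = \mu_S \circ (\delta\delta)$ and one use of naturality, to associativity of $\mu_S$, while the instance $R' = S' = T' = I$ reduces exactly to $\delta \circ \eta_R = \eta_S$.

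For part~(2), $R$ is merely an endofunctor, so there is no $\eta_R$ or $\mu_R$ to place into $\joins$; we drop them, but in compensation we must add the transformation $\mu_S \circ (\delta\delta) \colon RR \To S$, the horizontal composite of $\delta$ with itself postcomposed with $\mu_S$. Thus $\joins = \{\mathrm{id}_M \mid M \in \functors\} \cup \{\eta_S, \mu_S, \delta,\ \mu_S \circ (\delta S),\ \mu_S \circ (S\delta),\ \mu_S \circ (\delta\delta)\}$. The same case analysis now establishes Diamond and Associativity using only naturality of $\delta$ and the monad axioms for $S$: since $R$ carries no structure there is nothing for $\delta$ to be compatible with, and every identity that needed a monad-morphism law in part~(1) is here either the definition of one of the new mixed maps or a direct consequence of naturality.

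The one step that is not routine, and the thing to get right first, is the choice of $\joins$: the naive ``two monads plus a morphism'' data $\{\mathrm{id}, \eta_R, \eta_S, \mu_R, \mu_S, \delta\}$ already fails the Diamond condition. Concretely, with $R' = S$, $S' = R$, $T' = I$, $W' = S$, the existential side $\exists V.\ \{S'T' \To V,\ R'V \To W'\}$ is satisfiable by taking $V = S$ (use $\delta$, then $\mu_S$), whereas $\exists U.\ \{R'S' \To U,\ UT' \To W'\}$ is not, since there is no transformation out of $SR$; this forces the mixed transformations into $\joins$ in part~(1), and similarly forces $\mu_S \circ (\delta\delta)$ in part~(2), where $\mu_R$ is no longer available to fill the role of a transformation out of $RR$. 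Once $\joins$ is pinned down in this way, the remaining work is a bounded, mechanical check over functor triples and quadruples — lengthy chiefly because of whiskering bookkeeping — and this is where essentially all the effort lies.
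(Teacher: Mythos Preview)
The paper states this theorem without proof, so there is no argument to compare against; I can only assess your proposal on its own merits.

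Your approach is the natural one and is correct. The crucial observation---that the naive data $\{\mathrm{id},\eta_R,\eta_S,\mu_R,\mu_S,\delta\}$ is \emph{not} closed under the Diamond law, and that one must add the mixed transformations $\mu_S\circ(\delta S)$ and $\mu_S\circ(S\delta)$ (and, in part~(2), also $\mu_S\circ(\delta\delta)$)---is exactly the point, and your counterexample with $SR$ having no outgoing join is the right diagnosis. Once $\joins$ is fixed this way, each triple $(P,Q,R)$ admits at most one join in $\joins$, so the Associativity check really is a finite case analysis over quadruples $(R',S',T',W')$, and every nontrivial square reduces either to the monad axioms for $S$, naturality of $\delta$, the interchange law $(S\delta)\circ(\delta R)=(\delta S)\circ(R\delta)$, or (in part~(1) only) the morphism law $\delta\circ\mu_R=\mu_S\circ(\delta\delta)$. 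Your description of how the representative cases collapse is accurate.

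One small remark: in part~(1) you do not include $\mu_S\circ(\delta\delta)\colon RR\To S$, and this is fine---the Diamond instances that would demand it (e.g.\ $R'=S'=T'=R$, $W'=S$) are vacuous on both sides in your $\joins$, because neither $RR\To S$ nor any substitute is present. Including it would also yield a polymonad (a strictly larger one), so either choice works; just be aware that there is more than one valid answer to ``the'' polymonad generated by these data.
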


\iffull
\gmb{Still need to review this material}
\noindent\textbf{Tate's productors and productoids}
\newcommand\jarrow{\longrightarrow}
\begin{definition}[Polymonadic productoid]
A polymonad \PM{} is a productoid if it satisfies the following 
condition:

\begin{itemize}
\item{\bfseries\textit{Functor closure}} $\aset{M,N}\subseteq \functors => MN \in\functors$

\item{\bfseries\textit{Decomposition:}} For all $\aset{R, S, T, W} \subseteq
  \functors$, $RST -> W \in \joins$, if and only if, $\exists
  \aset{U,W} \subseteq \functors$ such that $\aset{RS -> U, UT -> W}
  \subseteq \joins$
\end{itemize}
\end{definition}

\begin{theorem}[Uniqueness of $n$-ary joins for productoids]
For any polymonadic productoid $\PM$, for arbitrary $\bar{Q}, P \in \functors$, and
any $\aset{\mu@\bar{Q} -> P, \mu'@\bar{Q} -> P} \subseteq \joins$,
$\mu = \mu'$. 

\begin{proof}
First, observe that $\bar{Q}$ can always be viewed as the product of
precisely $3$ functors in $\functors$. If $\abs{\bar{Q}} < 2$, it can
be padded with as many identity functors as needed. If $\bar{Q} =
Q_0Q_1Q_2Q_3\ldots Q_n$ or more, then, by noting that $\functors$ is
closed under composition, we can take $Q_2Q_3\ldots Q_n$ as a single
element of $\functors$.

Thus, without loss of generality, we have $\bar{Q} = Q_0Q_1Q_2$, for
$\aset{Q_0,Q_1,Q_2}\subseteq \functors$. Since we have
$\aset{\mu@\bar{Q} -> P, \mu'@\bar{Q} -> P} \in \joins$, by \diamondlaw{}
(ii), we have $\exists Q. \aset{\mu_1@Q_0Q_1 -> Q, \mu_2@QQ_2 -> P} \subseteq \joins$
and from \diamondlaw{} (i), 
we have $\exists Q'.\aset{\mu_3@Q_1Q_2 -> Q', \mu_4@Q_0Q' -> P} \subseteq \joins$.

From Associativity, we have that $\mu_4 \circ Q_0.\mu_3 = \mu_2\circ\mu_1.Q_2 = \mu = \mu'$.
\end{proof}
\end{theorem}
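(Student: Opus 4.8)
The plan is to reduce the $n$-ary statement to the binary case already settled by the Uniqueness of joins theorem. The key extra ingredient available for a productoid is that its collection of endofunctors is closed under composition (the Functor closure condition). So, given an $n$-fold composite $\bar Q = Q_1\cdots Q_n$, I would group all but the last factor together: setting $M = Q_1\cdots Q_{n-1}$ (iterating Functor closure) and $N = Q_n$, both $M$ and $N$ are endofunctors of the productoid and $\bar Q = MN$ on the nose, since functor composition is strictly associative. When $n\le 1$ I first pad $\bar Q$ with copies of the identity functor $I$, which is in the endofunctor set and satisfies $MI = IM = M$ by the Identity law of Definition~\ref{defn:polymonad-category}. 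Thus, without loss of generality, $\bar Q = MN$ for endofunctors $M,N$, and $\mu,\mu'$ become two parallel binary joins $MN \Rightarrow P$; the Uniqueness of joins theorem --- whose proof uses only the Identity and Associativity laws, both inherited by a productoid --- then gives $\mu = \mu'$.

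If instead one wants an argument that does not cite the binary lemma as a black box, I would first reduce to the ternary core $\bar Q = Q_0Q_1Q_2$ by exactly the same grouping, and then reason directly. Applying the Decomposition condition in the two associativity directions (the second direction obtained via the Diamond law) yields intermediate endofunctors $Q,Q'$ together with binary joins $\mu_1\colon Q_0Q_1\Rightarrow Q$, $\mu_2\colon QQ_2\Rightarrow P$, $\mu_3\colon Q_1Q_2\Rightarrow Q'$, $\mu_4\colon Q_0Q'\Rightarrow P$, and the Associativity square forces
\[ \mu_2\circ(\mu_1 Q_2) \;=\; \mu_4\circ(Q_0\mu_3). \]
Since the ternary join $\mu$ is, by construction, assembled as such a composite, this common value equals $\mu$; running the same reasoning from $\mu'$ makes it equal $\mu'$ too, and (using binary uniqueness of $\mu_1,\mu_2$ to see the choice of intermediate is immaterial) we again conclude $\mu = \mu'$.

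The one delicate point --- and the real obstacle --- is making the grouping-and-padding step precise: I must be sure that re-reading the $n$-ary natural transformation $\mu\colon Q_1\cdots Q_n \Rightarrow P$ as a binary join $MN \Rightarrow P$ does not alter $\mu$ itself. This is exactly where strict associativity of functor composition and the strict identities $MI = IM = M$ supplied by the Identity law are doing the work; with those pinned down the reduction is purely formal, and the remainder is either a direct appeal to binary Uniqueness of joins or, in the alternative route, an application of the Decomposition condition together with the Associativity square.
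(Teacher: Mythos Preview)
Your proposal is correct. Your second route---reducing to the ternary core $Q_0Q_1Q_2$ and invoking Decomposition/Diamond together with the Associativity square---is precisely the paper's argument.

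Your first route is a genuine simplification over the paper. The paper reduces $\bar Q$ to a \emph{ternary} composite and then effectively re-derives binary uniqueness at that level via the Associativity square; you instead observe that Functor closure lets you collapse $\bar Q$ directly to a \emph{binary} composite $MN$ and invoke the already-proved binary Uniqueness of joins theorem as a black box. This is strictly shorter and avoids redoing work. The delicate point you isolate---that re-bracketing the $n$-ary composite as $MN$ does not change the underlying natural transformation, so that $\mu,\mu'$ really are parallel binary joins in $\joins$---is exactly where strict associativity of functor composition and the strict identities $MI=IM=M$ are doing the work, and you handle it correctly. What the binary route buys is economy; what the paper's ternary route (and your second route) buys is a self-contained argument that does not cite the earlier lemma and that exhibits the factorization through an intermediate explicitly.
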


\begin{definition}[Pairwise joins]
For any polymonadic productoid $\PM$, given $\bar{Q} = Q_0\ldots Q_{n-1} \in
\functors$, we write $\bar{Q} \jarrow P$, for $P \in \functors$, if
either 
\begin{enumerate}
\item[(a)] $n=1$ and $\mu@Q_0 -> P \in \joins$;
\item[(b)] $n=2$ and $\mu@Q_0Q_1 -> P \in \joins$; or
\item[(c)] there exists a partition $\bar{Q}_0, \ldots, \bar{Q}_m$ of $\bar{Q}$ 
such that $\exists i. \abs{\bar{Q}_i} \geq 2$, 
and $\forall i. \exists P_i. \bar{Q}_i \jarrow P_i$ and 
$P_0\ldots P_m \jarrow P$.
\end{enumerate}
\end{definition}

\begin{theorem}[Coherence of pairwise joins]
For any polymonadic productoid $\PM$, for arbitrary $\bar{Q}, P \in \functors$, and
any two derivations of $\bar{Q} \jarrow P$ are equivalent and $\bar{Q}
-> P \in \joins$.

\begin{proof}
By strong induction on $n=\abs{\bar{Q}}$.

\mycase{$n < 3$} We reduce this to case $n=3$ by padding with $\bar{Q}$ with $I$. \\

\mycase{$n=3$} We have $\bar{Q} = Q_0Q_1Q_2$ and we have two sub-cases to consider:

\myscase{(Derivations with different partitions)}

\[\begin{array}{cc}
\xymatrix{
  \bar{Q} \ar[d]_{Q_0.\mu_1} &   \\
  Q_0P_1'\ar[r]_{\mu_2}      &   P
}

&

\xymatrix{
  \bar{Q} \ar[r]^{\mu_1'.Q_2} &  P_0'Q_2 \ar[d]^{\mu_2'}  \\
                            &   P
}
\end{array}\]

Taken together, the diagrams commute because of Associativity, and from 
Coherence (ii), we have $\bar{Q} -> P \in \joins$.
\\

\myscase{(Same partitions)} We show one case below (the other is symmetric):

\[\begin{array}{cc}
\xymatrix{
  \bar{Q} \ar[r]^{\mu_1.Q_2} &  P_0Q_2 \ar[d]^{\mu_2}  \\
                           &   P
}

&

\xymatrix{
  \bar{Q} \ar[r]^{\mu_1'.Q_2} &  P_0'Q_2 \ar[d]^{\mu_2'}  \\
                            &   P
}
\end{array}\]

From Coherence (i), we know that there exists $\mu_3,\mu_4, P_1$ such
that the following diagram commutes; from Coherence (ii), we know that
$\mu@\bar{Q} -> P \in \joins$; and we conclude with transitivity.

\[\xymatrix{
  \bar{Q} \ar[rr]^{\mu_1.Q_2/\mu_1'.Q_2}\ar[d]_{Q_0.\mu_3}     & &   P_0Q_2 \ar[d]^{\mu_2/\mu_2'}  \\
  Q_0P_1\ar[rr]_{\mu_4}                                 & &  P
}\]

\mycase{Induction step} From the induction hypothesis, we have that
all pairwise joins on $\bar{Q}$, for $\abs{\bar{Q}} < n$, are coherent.
Now, given $\bar{Q}$ with $\abs{\bar{Q}} = n > 3$, take any two splits
of $\bar{Q}$ into $\bar{Q}_0, \bar{Q}_1$ and $\bar{Q}_0', \bar{Q}_1'$. 
From the induction hypothesis, we have 
$\aset{\mu_0, \mu_1, \mu, \mu_0', \mu_1',\mu'} \subseteq \joins$, according the the diagram below, 
(where, since $\functors$ is closed under composition, each
$\aset{\bar{Q}_0,\bar{Q}_1,\bar{Q}_0',\bar{Q}_1'} \subseteq
\functors$). 

\[\xymatrix{
  \bar{Q}=\bar{Q}_0'\bar{Q}_1' \ar[d]_{\bar{Q}_0'.\mu_1'}\ar@{..>}[r] &  \_\ar@{..>}[d]  & \_\ar@{..>}[d] & \bar{Q}_0\bar{Q}_1=\bar{Q}\ar[d]^{\bar{Q}_0.\mu_1}\ar@{..>}[l] \\
  \bar{Q}_0'P_1' \ar[d]_{\mu_0'.P_1'}\ar@{..>}[r]                   &   P'\ar@{..>}[d] \ar@{=}[r] & P''\ar@{..>}[d] & \ar@{..>}[l]\bar{Q}_0P_1 \ar[d]^{\mu_0.P_1}  \\
  P_0'P_1'  \ar[r]_{\mu'}                                         &     P  \ar@{=}[r]          & P              &   P_0P_1  \ar[l]^{\mu}
}\]

From Associativity on the bottom-left and bottom-right squares, we know that
that $P'$ and the dotted arrows exist.

From Associativity applied to the top-left and top-right squares, we
know that $P'=P''$ and hence the whole diagram commutes.

Finally, to show that we have a map from $\bar{Q} -> P \in \joins$, we
from Coherence (ii) applied to the top-left square, we have a join from
$I\bar{Q}_0'\bar{Q}_1' -> IP'$; we also have a join from $I\bar{P'} -> P$, 
so applying Coherence (ii) to these two joins, we get a join $\bar{Q} -> P$.
\end{proof}
\end{theorem}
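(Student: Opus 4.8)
The plan is to prove two things simultaneously by strong induction on $n=|\bar Q|$: (a) that $\bar Q\To P\in\joins$ whenever $\bar Q\jarrow P$, and (b) that every derivation of $\bar Q\jarrow P$ computes \emph{this} natural transformation. Claim (b) then yields coherence for free, since the already-established uniqueness-of-$n$-ary-joins theorem for productoids says there is at most one $\mu\colon\bar Q\To P$ in $\joins$. So the substance is claim (a): turning an arbitrary pairwise-join derivation (a binary tree of binary joins and morphisms) into a single $n$-ary join. The two tools for this are the polymonad Associativity and Diamond laws, which reassociate nested binary joins, and the productoid \emph{Decomposition} condition, which lets a composite of the shape $(UT\To W)\circ\big((RS\To U)\,T\big)$ be recognized as a ternary join $RST\To W\in\joins$.

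First I would dispose of the degenerate cases $n<3$ by padding $\bar Q$ on the left or right with copies of the identity functor $I$; this is harmless because $MI=IM=M$ and $MI\To M$, $II\To I$ are identity joins, so every derivation is unchanged and the problem reduces to $n=3$. For $n=3$, write $\bar Q=Q_0Q_1Q_2$; a pairwise-join derivation necessarily uses one of the two splits $(Q_0)(Q_1Q_2)$ or $(Q_0Q_1)(Q_2)$. If two derivations use \emph{different} splits, their composites are the two sides of a square that commutes by Associativity, and Decomposition (the ``$\Leftarrow$'' direction) identifies the common value as the ternary join in $\joins$. If two derivations use the \emph{same} split, uniqueness of binary joins forces the inner joins to coincide and then forces the outer joins to coincide, after again exhibiting the value in $\joins$ via Decomposition. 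This settles (a) and (b) for $n=3$.

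For the induction step $n>3$, I take any two derivations of $\bar Q\jarrow P$. Each one, by case (c) of the definition of $\jarrow$, chooses a top-level partition that I regroup into exactly two blocks $\bar Q_0\bar Q_1$ (legitimate because $\functors$ is closed under composition, so each block is an object of $\functors$, and each block has length strictly less than $n$, using the side condition that some block of the partition has length $\geq 2$). The induction hypothesis applies to each block, giving $\bar Q_i\To P_i\in\joins$ computed canonically, and the top-level step is a binary join of $P_0$ with $P_1$. For two different top-level splits $\bar Q_0\bar Q_1$ and $\bar Q_0'\bar Q_1'$, I would assemble the large commuting diagram whose bottom-left and bottom-right squares commute by Associativity (this produces the shared intermediate object, say $P'$, and forces the two apparent candidates for it to be equal once the top squares are filled in, again by Associativity); chasing the diagram shows the two composites agree, and one last use of Decomposition on the outer square certifies $\bar Q\To P\in\joins$. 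Strong induction then yields (a) and (b) for all $n$, and coherence of all derivations follows from $n$-ary-join uniqueness.

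The step I expect to be the main obstacle is the induction step's diagram: making precise the informal ``reassociate everything to a canonical left-leaning shape'' argument for arbitrary partitions. Concretely, one must check that (i) collapsing a multi-block partition into two blocks genuinely preserves the inductive setup and the length decrease --- which is exactly why the definition of $\jarrow$ carries the ``$\exists i.\,|\bar Q_i|\geq 2$'' side condition --- and (ii) every square in the assembled diagram is licensed by Associativity or the Diamond law, so that the common diagonal both exists and is well defined. Once the diagram is set up correctly the chase is routine, but getting the combinatorial bookkeeping of partitions right is where the care is needed.
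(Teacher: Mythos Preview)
Your overall plan---strong induction on $n$, padding small cases with $I$, handling the two kinds of ternary splits separately, and then assembling a big Associativity diagram for the inductive step---is exactly the route the paper takes. The framing via the uniqueness-of-$n$-ary-joins theorem is a mild reorganisation but not a different proof.

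There is, however, a genuine gap in your treatment of the \emph{same-partition} sub-case at $n=3$. You write that ``uniqueness of binary joins forces the inner joins to coincide and then forces the outer joins to coincide.'' This does not go through: two derivations sharing the split $(Q_0Q_1)(Q_2)$ may nevertheless choose \emph{different} intermediate objects, say $\mu_1\colon Q_0Q_1\To P_0$ versus $\mu_1'\colon Q_0Q_1\To P_0'$ with $P_0\neq P_0'$. The uniqueness theorem for binary joins only equates two maps $Q_0Q_1\To P_0$ with the \emph{same} codomain, so it says nothing here; likewise the outer joins $P_0Q_2\To P$ and $P_0'Q_2\To P$ have different domains and cannot be compared directly.

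The paper's fix---and the one your own different-partition argument already sets up---is to manufacture a derivation using the \emph{other} split. By the Diamond law there exist $P_1$, $\mu_3\colon Q_1Q_2\To P_1$, and $\mu_4\colon Q_0P_1\To P$; Associativity then makes both of the original composites equal to $\mu_4\circ(Q_0\cdot\mu_3)$, and transitivity finishes the case. In other words, you should route both same-split derivations through a common opposite-split derivation, rather than try to identify their intermediates directly. Once you make that correction, your proof and the paper's coincide.
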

\fi


\subsection{Polymonads in System \fomega}
\label{sec:pmfomega}

A polymonad $\PM$ can be interpreted in the category of System
\fomega, thus establishing a foundation for its use in
programming. The interpretation follows a familiar route.

\paragraph*{Interpreting functors and joins} Each endofunctor $M \in \functors$ is represented 
by a unary type constructor \ls$M$ and a function 
\ls$mapM: forall a b. (a -> b) -> M a -> M b$. These are expected to satisfy the functor laws below.

\begin{lstlisting}
(1) $\Lambda$a. mapM a a ($\lambda$x:a.x) = $\Lambda$a. $\lambda$x:M a. x

(2) $\Lambda$a b c. $\lambda$ (f:b -> c) (g:a -> b). mapM a c (f $\circ$ g) =
   $\Lambda$a b c. $\lambda$ (f:b -> c) (g:a -> b). mapM b c f $\circ$ mapM a b g
\end{lstlisting}
Each $\mu@PQ -> R\in \joins$ is represented 
as an \fomega{} function
\ls$joinPQR:$ \ls$forall a. P (Q a) -> R a$, and satisfy the natural transformation law below. 

\begin{lstlisting}
(3) $\Lambda$a b.$\lambda$f:a -> b. mapR a b f $\circ$ joinPQR a =
   $\Lambda$a b.$\lambda$f:a -> b. joinPQR b $\circ$ mapP (Q a) (Q b) (mapQ a b f)
\end{lstlisting}

\paragraph*{Interpreting the polymonad laws.} The Identity law requires that the identity functor
$I \in \functors$ be represented as the \fomega{} type
function \ls$Id a = a$ together with the map \ls$mapId a b f = f$. 
The \diamondlaw{} law ensures that that the appropriate \ls$join$s
exist in the \fomega{} interpretation, while the Associativity law
translates to the following equation:

\begin{lstlisting}
(4) $\Lambda$a. joinRVW a $\circ$ mapR (S (T a)) (V a) (joinSTV a) =
   $\Lambda$a. joinUTW a $\circ$ joinRSU (T a)
\end{lstlisting} 
\paragraph*{Programming with binds.} With this interpretation in
place, one can program with \ls$map$s and \ls$join$s. However, we
observe that in the context of monads it is often more convenient to
program with a Kleisli-style `\ls$bind$' operator, defined in terms of
\ls$map$s and \ls$join$s, both because composing computations using
\ls$bind$ is syntactically lightweight, and because the monad laws
require that bind associates, ensuring common program transformations
(e.g., inlining) are semantics preserving.  Pleasingly, as we now
show, polymonadic binds can be similarly defined and provide the same
convenience as their monadic counterparts.

For each \ls$joinPQR$ in the interpretation, we can define an operator
\ls$bindPQR$ as follows:

\begin{lstlisting}
let bindPQR : forall a b. P a -> (a -> Q b) -> R b = 
  $\Lambda$a b.$\lambda$p q. joinPQR b (mapP a (Q b) q p)
\end{lstlisting}
\noindent

To see this, consider the following function, which
composes three computations \code{r}, \code{s}, and \code{t}, where
the latter two are each parameterized by the result of the previous
computation: 
\begin{lstlisting}
let f : forall a b c. R a -> (a -> S b) -> (b -> T c) -> W c = 
  $\Lambda$a b c. $\lambda$r s t. bindUTW b c (bindRSU a b r s) t
\end{lstlisting}
Inlining the \ls$bind$s, and appealing to the laws (in particular,
that the \ls$join$s are natural transformations), one can show that
\ls$f$ is equivalent to the code below:
\begin{lstlisting}
let f' = $\Lambda$a b c. $\lambda$r s t. (joinUTW c $\circ$ joinRSU (T c))
    (mapR (S b) (S (T c)) (mapS b (T c) t) (mapR a (S b) s r))
\end{lstlisting}
If \ls$f'$ is definable, then the \diamondlaw{} law ensures that there
exists a type constructor \ls$U$ such that the function \ls$g'$ below
is also definable, and the Associativity law (cast as equation (4)
above), ensures that it is equivalent to \ls$f'$.
\begin{lstlisting}
let g' = $\Lambda$a b c. $\lambda$r s t. (joinRVW c $\circ$ mapR (S (T c)) (U c) (joinSTV c))
   (mapR (S b) (S (T c)) (mapS b (T c) t) (mapR a (S b) s r))
\end{lstlisting}
Fortunately, appealing to the laws again (this time using the fact
that \ls$map$s distribute over function composition), we can prove that
\ls$g'$ is equivalent to the function \ls$g$ below, which,
like \ls$f$, uses \ls$bind$s for sequencing, although in a different order. However, our reasoning shows that 
the order does not influence the semantics.
\begin{lstlisting}
let g = $\Lambda$a b c. $\lambda$r s t. bindRVW a c r $\lambda$x:a. bindSTV b c (s x) t
\end{lstlisting}
Thus, pleasantly, the familiar Kleisli-style \ls$bind$ operators work
for polymonads as well.  As this form of operator is dominant in the
world of monadic programming, we will adopt it for the rest of this
paper.

\iffull
\begin{lstlisting}
  let f' = $\Lambda$a b c. $\lambda$(r:R a) (s:a -> S b) (t:b -> T c). 
  bindUTW b c (bindRSU a b r s) t
  = $\eqannot{definition of bindRSU and bindUTW}$
  joinUTW c (mapU b (T c) t (joinRSU b (mapR a (S b) s r)))
  = $\eqannot{function composition}$
  joinUTW c ((mapU b (T c) t $\circ$ joinRSU b) (mapR a (S b) s r))
  = $\eqannot{joinRSU is a natural transformation}$
  joinUTW c ((joinRSU b $\circ$ mapRS b (T c) t) (mapR a (S b) s r))
  = $\eqannot{associativity of function composition}$
  (joinUTW c $\circ$ joinRSU b) (mapRS b (T c) t (mapR a (S b) s r))
  = $\eqannot{composition of functors}$
  (joinUTW c $\circ$ joinRSU b)
  (mapR (S b) (S (T c)) (mapS b (T c) t) (mapR a (S b) s r))
  = $\eqannot{definition}$
  f
\end{lstlisting}

\begin{lstlisting}
  let g' = $\Lambda$a b c. $\lambda$(r:R a) (s:a -> S b) (t:b -> T c). 
  bindRVW a c r $\lambda$x:a. bindSTV b c (s x) t
  = $\eqannot{definition of bindRVW and bindSTR}$
  joinRVW c (mapR a (V c) ($\lambda$x:a. joinSTV c (mapS b (T c) t (s x))) r)
  = $\eqannot{function composition}$
  joinRVW c (mapR a (V c) (joinSTV c $\circ$ mapS b (T c) t $\circ$ s) r)
  = $\eqannot{distributivity of mapR over function composition}$
  joinRVW c (mapR (S (T c)) (V c) (joinSTV c) 
  (mapR a (S (T c)) (mapS b (T c) t $\circ$ s) r))
  = $\eqannot{distributivity of mapR over function composition}$
  joinRVW c (mapR (S (T c)) (V c) (joinSTV c) 
  (mapR (S b) (S (T c)) (mapS b (T c) t) (mapR a (S b) s r)))
  = $\eqannot{definition}$
  g
\end{lstlisting}
\fi

\section{Alternative categorical model}
\label{app:more-categories}
As mentioned in \S\ref{sec:polymonads}, it is possible to give a less
compact definition of a polymonad, where the unit and multiplication
natural transformations are distinguished.

\begin{definition}[Polymonad]
A polymonad over a category $\cat{C}$ is a triple, $\triple{\tee}{\ee{\tee}}{\emm{\tee}}$, where

\begin{itemize}
\item $\tee = \{T_i ~|~ i\in\{1..m\}, T_i\colon\cat{C}\to\cat{C}\}$
\item $\ee{\tee} = \{ \eta_j\colon I\To T ~|~ j\in\{0..n\}, T\in\tee\}$
\item $\emm{\tee} = \{ \mu_k\colon RS\To T ~|~ k\in\{0..p\}, R,S,T\in\tee\}$
\end{itemize}
that satisfies the following conditions:

\begin{enumerate}


%

\item \textbf{(Associativity)} $\forall \mu_1, \mu_2,\mu_3,\mu_4\in\emm{\tee}$, $\exists U\in\tee$, $\mu_1\colon RS\To U$ and $\mu_2\colon UT\To W$ if and only if $\exists V\in\tee$,~such that $\mu_3\colon ST\To V$ and $\mu_4\colon RV\To W$, and moreover the following commutes:

\[
\xymatrix{
RST \ar[r]^-{\mu_1 T} \ar[d]_-{R\mu_3} &
UT \ar[d]^-{\mu_2}\\
RV \ar[r]_-{\mu_4} & W
}
\]

\item \textbf{(Left unit)} $\forall \eta\colon I\To R\in\ee{\tee}, \mu\in\emm{\tee}$ if $\mu\colon RS\To S$ then the following commutes:

\[
\xymatrix{
I S \ar[r]^-{\eta S} \ar@{=}[d] & RS \ar[d]^-{\mu}\\
S \ar@{=}[r] & S
}
\]

\item \textbf{(Right unit)} $\forall \eta\colon I\To S\in\ee{\tee}, \mu\in\emm{\tee}$ if $\mu\colon RS\To R$ then the following commutes:

\[
\xymatrix{
R I \ar[r]^-{R \eta} \ar@{=}[d] & RS \ar[d]^-{\mu}\\
R \ar@{=}[r] & R
}
\]

\end{enumerate}
\end{definition}
\noindent
We can extend this definition with natural transformations between the functors, i.e.\ the generalization of monad morphisms, in the following way.

\begin{definition}[Polymonad system]
A polymonad system over a category $\cat{C}$ is a four-tuple, $\fourtuple{\tee}{\ee{\tee}}{\emm{\tee}}{\dee{\tee}}$, where $\triple{\tee}{\ee{\tee}}{\emm{\tee}}$ is a polymonad and 

\begin{itemize}
\item $\dee{\tee} = \{ \delta_i\colon R\To S ~|~ i\in\{1..q\}, R,S\in\tee\}$
\end{itemize}
that satisfies the following conditions:

\begin{enumerate}

\item $\forall \delta\in\dee{\tee}$, $\eta_R, \eta_S\in\ee{\tee}$ if $\delta\colon R\To S$, $\eta_R\colon I\To R$, and $\eta_S\colon I\To S$  then the following commutes:

\[
\xymatrix{
R \ar[r]^-{\delta}  & S\\
I \ar[u]^-{\eta_R} \ar@{=}[r] & I \ar[u]_-{\eta_S}
}
\]

\item $\forall \delta_1, \delta_2, \delta_3\in\dee{\tee}$, $\mu_1, \mu_2\in\ee{\tee}$ if $\delta_1\colon R\To S$, $\delta_2\colon T\To U$, $\delta_3\colon V\To W$, $\mu_1\colon RT\To U$, and $\mu_2\colon SU\To W$  then the following commutes:

\[
\xymatrix{
RT \ar[r]^-{\delta_{1}\delta_{2}} \ar[d]_-{\mu_1} &
SU \ar[d]^-{\mu_2}\\
V \ar[r]_-{\delta_3} & W
}
\]
\end{enumerate}
\end{definition}
\fi

\end{document}

\begin{figure*}[th!]
{\begin{small}\[
\begin{array}{c}
  \fbox{$\Binds |= \Binds'$} \qquad
  \inference{\forall \pi \in \Binds'. \pi \in \Binds \vee \pi \in \Sigma}
              {\Binds |= \Binds'}
\\\\
  \fbox{$\Binds |= \sigma \tygt \ty$} \qquad
  \inference{
    \theta = [\bar \tau/\bar{a}][\bar{m}/\bar{\mvar}] & \Binds |= \theta\Binds_1}
            {\Binds |= (\tscheme{\bar{a}\bar{\mvar}}{\Binds_1}{\ty}) \,\tygt\,
              {\theta\ty}}[(TS-Inst)]
\\\\
  \fbox{$\prefix\Binds\Gamma v : \ty$} 
\\\\  
  \inference{v\in\aset{x,c} & \Binds |= \Gamma(v) \tygt \ty}
            {\prefix\Binds\Gamma v : \ty}[(TS-XC)]
  \quad
  \inference{\prefix\Binds{\Gamma,x@\ty_1} e : \tapp{m}{\ty_2}}
            {\prefix\Binds\Gamma \slam{x}{e} : \tfun{\ty_1}{\tapp{m}{\ty_2}}}[(TS-Lam)]
\\\\
  \fbox{$\prefix\Binds\Gamma e : \tapp{m}\ty$} 
\\\\
  \inference{\prefix\Binds\Gamma v : \ty}
            {\prefix{\Binds,\morph{\mconst[Id]}{m}}\Gamma v : \tapp{m}{\ty}}[(TS-V)]
\\\\
  \inference{\prefix{\Binds_1}{\Gamma,f@\ty} v : \ty &
    \prefix\Binds{\Gamma,f@\Gen{\Gamma}{\Binds_1 => \ty}} e : \tapp{m}{\ty'}}
            {\prefix\Binds\Gamma \sletrec{f}{v}{e} : \tapp{m}{\ty'}}[(TS-Rec)]
\\\\
  \inference{\prefix{\Binds_1}\Gamma v : \ty &
    \prefix\Binds{\Gamma,x@\Gen{\Gamma}{\Binds_1 => \ty}} e : \tapp{m}\ty'}
            {\prefix\Binds\Gamma \slet{x}{v}{e}  : \tapp{m}\ty'}[(TS-Let)]
\\\\
  \inference{ \prefix\Binds\Gamma e_1 : \tapp{m_1}{\ty_1} &
    \prefix\Binds{\Gamma,x@\ty_1} e_2 : \tapp{m_2}{\ty_2} &
    e_1 \neq v &\Binds |= \bind{(m_1,m_2)}{m_3}}
            {\prefix\Binds\Gamma \slet{x}{e_1}{e_2} : \tapp{m_3}{\ty_2}}[(TS-Do)]
\\\\
  \inference{\prefix\Binds\Gamma  e_1 : \tapp{m_1}{(\tfun{\ty_2}{\tapp{m_3}{\ty}})} &
    \prefix\Binds\Gamma  e_2 : \tapp{m_2}{\ty_2}  \\
    \Binds |= \bind{(m_1,m_4)}{m_5} &
    \Binds |= \bind{(m_2,m_3)}{m_4}  }
            {\prefix\Binds\Gamma  \eapp{e_1}{e_2} : \tapp{m_5}{\ty} }[(TS-App)]
\\\\
  \inference{\prefix\Binds\Gamma e_1 : \tapp{m_1}\kw{bool} &
    \prefix\Binds\Gamma e_2 : \tapp{m_2}{\ty} &
    \prefix\Binds\Gamma e_3 : \tapp{m_3}{\ty} \\
    \Binds |= \morph{m_2}{m},  \morph{m_3}{m}, \bind{(m_1,m)}{m'}}
                                            {\prefix\Binds\Gamma \sif{e_1}{e_2}{e_3} : \tapp{m'}{\ty}}[(TS-If)]
  \end{array}
\]\end{small}}
  \caption{Syntax-directed type rules for \lang{}, where the signature $\Sigma$
    is an implicit parameter. Generalization is defined as 
    $\Gen{\Gamma}{\Binds => \ty} = \forall (\ftv{\Binds => \ty} \setminus \ftv{\Gamma}). \Binds => \ty$}
  \label{fig:old-ssyntaxrules}
\end{figure*}

\subsection{Translation to Haskell}

Since we now have a mechanical translation to OML, we can actually
translate directly to Haskell, which uses OML's type inference
algorithm. We can define our primitives directly in Haskell as:
\begin{lstlisting}
newtype Id a = MkId a
class Bind m1 m2 m3 where 
  (>>>=) :: m1 a -> (a -> m2 b) -> m3 b

lift :: Bind m1 Id m2 => m1 a -> m2 a
lift e = e >>>= $ $ MkId

ret :: Bind Id Id m => a -> m a
ret x = lift (MkId x) 

app :: Bind m1 m4 m, Bind m2 m3 m4 => m1 (a -> m3 b) -> m2 a -> m b
app e1 e2 = e1 >>>=$ $ (\f ->$ $ e2 >>>=$ $ f)

cond :: Bind m2 Id m, Bind m3 Id m, Bind m1 m m' => 
        m1 bool -> (() -> m2 a) -> (() -> m3 a) -> m' a
cond e e1 e2 = e >>>=$ $ (\b -> if b then lift (e1 ()) else lift (e2 ()))
\end{lstlisting}
Of course, Haskell is lazy and the \lstinline|let| bindings in
Figure~\ref{fig:xlate-oml} need to be translated to strict bindings.
Also, we need to give a translation of all our monadic types and
bind definitions in $\Sigma$ to Haskell. We cannot define this precisely
since in
our system we do not specify how the monadic type constructors or binds are
defined. Also, the predicates for a bind specification are
very general and only some of those can be translated to Haskell. However, 
in an actual implementation, every monad constructor would give rise to a data type definition,
and every bind specification to an instance definition:
\begin{lstlisting}
[[m$/k$]] = data m a$_1$ $...$ a$_k$ t = ...  
[[b$:\forall\overline\alpha.\,\Phi \Rightarrow ($m1$,$m2$)\rhd\;$m3]] = instance [[$\Phi$]] => Bind m1 m2 m3 where (>>>=) = b    
\end{lstlisting}
A nice property of the Haskell translation together with Figure~\ref{fig:xlate-oml} is that
it defines a semantics for our system with a precise definition of
\emph{elaboration}, i.e. given our initial program,
how it is translated into a fully elaborated form where all
the binds are inserted. In particular, Haskell (or OML) will 
further elaborate where each \lstinline|Bind| constraint is
turned into an evidence term that is passed as an argument
at runtime.